\documentclass[12pt]{article}

\usepackage[T1]{fontenc}
\usepackage[english]{babel}
\usepackage{amsmath,amsfonts,amssymb,amsthm}
\usepackage{mathabx}
\usepackage{graphicx,xcolor}
\usepackage[round,authoryear]{natbib}
\usepackage[colorlinks=true, allcolors=blue]{hyperref}
\usepackage{bm}
\usepackage{booktabs}
\usepackage{float}

\newcommand{\E}{\mathbb{E}}
\newcommand{\R}{\mathbb{R}}
\newcommand{\N}{\mathbb{N}}
\newcommand{\cov}{\mathrm{Cov}}
\renewcommand{\vec}{\mathrm{vec}}
\newcommand{\mat}{\mathrm{mat}}
\newcommand{\tr}{\operatorname{tr}}
\newcommand{\ot}{\,\tilde\otimes\,}
\newcommand{\mmle}{\widehat \Sigma_1 \ot \widehat \Sigma_2}
\newcommand{\pto}{\stackrel{p}{\to}}
\newcommand{\op}{\mathrm{op}}
\DeclareMathOperator*{\argmin}{arg\,min}

\newtheorem{theorem}{Theorem}
\newtheorem{definition}{Definition}
\newtheorem{proposition}{Proposition}
\newtheorem{lemma}{Lemma}
\newtheorem{corollary}{Corollary}
\newtheorem{remark}{Remark}

\title{Testing Covariance Separability in High Dimensions}
\author{
Tomas Masak $^{\dagger}$ \and
Marcus Mayrhofer $^{\ddagger}$ \and
Una Radojičić $^{\ddagger}$ \\[0.6em]
\small $^{\dagger}$Institute for Statistics and Mathematics, WU Wien, Vienna, Austria \\
\small $^{\ddagger}$ Institute of Statistics and Mathematical Methods in Economics, TU Wien, Vienna, Austria
}

\hypersetup{
	pdftitle={Testing Covariance Separability in High Dimensions},
	pdfsubject={stat.ME},
	pdfauthor={Tomas~Masak, Marcus~Mayrhofer, Una~Radojičić},
	pdfkeywords={Matrix-variate data, matrix normal MLE, elliptical distribution, Kronecker covariance},
}

\begin{document}
\maketitle

\begin{abstract}
Separability is an important structural assumption often placed on the covariance when working with matrix-variate data, because it greatly simplifies both interpretation and computation of subsequent covariance-based statistical tasks. Yet testing the separability assumption is difficult in the high-dimensional regime. We propose to test separability by recasting the problem as a sphericity test after whitening the data using the separable maximum likelihood estimate of the covariance. The test is calibrated by Monte Carlo simulation, yielding finite-sample level control. Furthermore, we prove the test's high-dimensional consistency under dense alternatives. To reduce its reliance on distributional assumptions, we introduce an angular version of the test based on radial normalization after whitening. We demonstrate the practical utility, empirical power, and computational efficiency of the proposed tests in a large simulation study and on a real-world acoustic data set.
\end{abstract}

\noindent{\footnotesize\textbf{Keywords:} Matrix-variate data, matrix normal MLE, elliptical distribution, Kronecker covariance.}

\section{Introduction}

Contemporary statistical data sets increasingly often arrive in forms that are naturally indexed by more than a single coordinate. Images \citep{wu2023}, spatio-temporal fields \citep{gneiting2002}, longitudinal panels \citep{ding2018matrix}, neuroscience signals~\citep{masak2023separable}, graph or network observations \citep{zhouGemini}, electricity load or traffic volume profiles \citep{bouveyron2018,chiou2014}, spectrograms \citep{pigoli2018statistical}, and other time-frequency representations or relational data \citep{volfovsky2015} are all examples in which a single observation is more faithfully represented as a matrix than as a vector. Even data that are initially recorded as one-dimensional signals may acquire a matrix structure during standard pre-processing. A speech recording, for instance, is a waveform at the acquisition stage, but it is commonly transformed into a log-spectrogram or into a matrix of mel-frequency cepstral coefficients, where one coordinate indexes frequency-type features, and the other indexes localized windows of the recording.
Flattening such data into long vectors is always possible, but it discards the natural structure that is potentially both scientifically meaningful and computationally beneficial.

Regardless of the specific form of the samples, covariance remains the central object of statistical analysis. Since it encodes second-order dependence, it underlies dimension reduction, regression, or classification tasks, and is often the central object through which regularization is imposed, be it by assuming a specific structure of the covariance, or utilizing shrinkage. In the matrix-variate setting, however, the covariance is no longer an ordinary matrix of a moderate size. If $X \in \R^{p \times q}$, the covariance is a fourth-order tensor $\Sigma \in \R^{p \times q \times p \times q}$ indexed by pairs of row and column coordinates. Thus, an unrestricted covariance contains $\mathcal{O}(p^2q^2)$ entries. Already for moderately sized $p$ and $q$, this creates two interlinked difficulties. Statistically, the number of covariance parameters is typically much larger than the number of available observations $N$, making unstructured estimation unstable, if not entirely impossible. Computationally, explicitly forming, storing, or subsequently manipulating (e.g.~inverting, for the purposes of prediction) with the empirical covariance is prohibitive.

This tension necessitates structural assumptions that induce parsimony while avoiding a fixed, finite-dimensional parametric specification of the covariance model. Examples include (inverse) sparsity, stationarity, or low-rank/factor structures. Another example specific to matrix-valued data is the separability of the covariance. A covariance operator $\Sigma \in \R^{p \times q \times p \times q}$ of a random matrix $X \in \R^{p \times q}$ is separable if it factorizes as
\[
\Sigma = \Sigma_1 \ot \Sigma_2,
\]
for two positive definite matrices $\Sigma_1 \in \R^{p \times p}$ and $\Sigma_2 \in \R^{q \times q}$. The factorization is defined on the level of entries as
\[
\Sigma_{i,j,k,l} = (\Sigma_1)_{i,k}(\Sigma_2)_{j,l}, \qquad i,k=1,\ldots,p,\; j,l=1,\ldots,q.
\]
The appeal of separability is immediate: the unrestricted covariance is replaced by two much smaller covariance factors, one capturing the covariance structure between the rows and the other between the columns of the matrix. This reduces the number of parameters from $\mathcal{O}(p^2q^2)$ to $\mathcal{O}(p^2+q^2)$. On a standard laptop with 64 GB of memory, an unstructured covariance with $p=q\approx300$ consumes almost all the memory, while a separable covariance with $p=q=10\,000$ is still manageable.

For these reasons, separability has a long history in data analysis; see for instance the early work of \citet{dawid1981matrix} and \citet{mardia1993spatiotemporal}. More recently, separable covariance models have attracted renewed attention in high-dimensional statistics \citep{greenewald2019tensor,hoff2023core, zhang2023matrix,yu2023testing,mccormack2025information}. In particular, the separable maximum likelihood estimator was shown to exist and be well behaved in regimes where the ordinary empirical covariance is singular or severely ill-conditioned \citep{dutilleul1999mle,soloveychik2016gaussian,drton2021,franks2026near}.

However, the advantages of separability come at a cost. Separability rules out interactions between the two indexing directions and forces the dependence across one direction to be modulated only by a scalar factor depending on the other \citep{gneiting2002,gneiting2006,rougier2017}. In applications, this can lead to oversimplification. If separability is imposed incorrectly, the resulting covariance estimate may lose relevant dependence information as well as distort and bias downstream statistical tasks such as prediction, classification, or uncertainty quantification \citep{aston2017tests,masak2023random}. Therefore, the assumption should not be imposed blindly.

This makes testing separability an essential task, one that should precede many matrix-variate data analyses. A test of
\[
H_0: \Sigma = \Sigma_1 \ot \Sigma_2 \qquad \text{vs.} \qquad H_1: \Sigma \neq \Sigma_1 \ot \Sigma_2 \text{ for any } \Sigma_1 \in \R^{p \times p}\text{ and } \Sigma_2 \in \R^{q \times q}
\]
should not only be reliable, but it should also respect the computational motivation for separability itself. If separability is being considered because unrestricted covariance estimation is infeasible, then a test whose implementation requires unrestricted covariance estimation may be of limited practical value. Ideally, the computational burden of testing separability should not overshadow the cost of manipulating a separable covariance. In particular, when separability potentially holds, one would like to test the assumption at a complexity comparable to that of fitting the separable model, which is essentially the complexity of matrix multiplication of two matrix-valued samples.

The classical likelihood ratio test (LRT) for covariance separability has long been the standard benchmark in low-dimensional Gaussian settings \citep{lu2005likelihood,MITCHELL2006}. However, its direct use is incompatible with the computational principle outlined above, since any LRT requires fitting the model under the alternative, which amounts here to unstructured covariance estimation. From the statistical perspective, the LRT loses power fast with increasing dimension when properly calibrated, because of the variability associated with the unstructured estimate. For these reasons alone, the LRT is not suitable in other than the low-dimensional regime, albeit being the most powerful and natural option in the fixed $p$ and $q$ regime.

On the other hand, several important contributions to separability testing were recently made in the infinite-dimensional regime \citep{aston2017tests, lynch2018,vanDelft2024general,dette2025testing}. These methods address settings in which the observations are (possibly discretized versions of) smooth, functional data. Despite being highly relevant to modern data analysis, their strengths are not identical to those of the classical LRT. In a regular regime, where the LRT is computable, the projection-based or functional tests typically achieve substantially lower power.

Between these two extremes lies another asymptotic regime, the high-dimensional one, where $p$ and $q$ are large enough so that the unrestricted covariance estimation is ill-posed, but the data are still modeled as finite-dimensional matrices rather than as observations from an infinite-dimensional stochastic process, and the notion of likelihood itself is well-defined. 
However, despite its existence, in this setting, the (unstructured) sample covariance becomes ill-conditioned (or singular) and the LRT statistic is often no longer computable or its null distribution departs from the classical asymptotics, and its power degrades accordingly.
Our approach targets precisely this regime, avoiding both limitations: it does not rely on the LRT's well-conditioning assumptions, nor does it restrict attention to a fixed low-dimensional set of directions.

Our starting point is the simple observation that, within the elliptical family, separability can be recast as sphericity after separable whitening. If $\Sigma_1$ and $\Sigma_2$ are two separable covariance factors, then the transformed covariance is the identity exactly when the original covariance is separable. Thus we estimate the separable covariance, whiten the matrix observations by the estimated factors, and test whether the resulting sample is spherical. This converts the separability testing problem into a white-noise or sphericity testing problem, which is familiar in the high-dimensional regime \citep{ledoit2002some,bai2009corrections,chen2010tests}. The matrix whitening step also leads to an interesting unification: several of the most classical tests for sphericity or identity covariance become the same statistic after whitening. In particular the statistic we propose may be viewed equivalently as the matrix-whitened version of John's test \citep{john1971some}, Nagao's test \citep{nagao1973some}, or the Ledoit-Wolf test \citep{ledoit2002some}. The unified statistic can be calculated in three simple steps:
\begin{enumerate}
    \item estimate the separable covariance (under the null),
    \item whiten the data using the estimated row and column factors, and
    \item calculate the Frobenius departure of the empirical covariance of the whitened data from the appropriately scaled identity.
\end{enumerate}
Importantly, the last step can be done efficiently on the level of whitened data, without the need to ever calculate the empirical covariance.

The null distribution of the resulting statistic is not available in a simple closed form in the high-dimensional regime considered here, primarily because the whitening factors need to be estimated from the data. We therefore calibrate the test by Monte Carlo simulation. Under a correctly specified distributional model, this yields finite-sample level control conditional on the estimated separable covariance, and the invariance of the construction makes the calibration independent of the unknown separable factors. Since our test can be seen as a high-dimensional generalization of the Gaussian LRT, Gaussian calibration is a natural first choice.

However, the simulations in this paper show that the effect of the whitening step is much more delicate outside the Gaussian family. For heavy-tailed matrix-variate distributions with the small to moderate sample sizes, the variability introduced by estimating the separable whitening factors can be severely underestimated by Gaussian Monte Carlo calibration. This can lead to inflated rejection probabilities even when the covariance is separable. In other words, a naive Gaussian version of the test may reject not because the covariance is non-separable, but because the data-generating distribution is heavier-tailed than assumed.

To address this issue, we introduce an angular version of the test. After separable whitening, each observation is projected onto the unit sphere by radial normalization. The test statistic is then computed from these normalized directions rather than from the whitened observations themselves. The additional spherical projection removes radial information and therefore reduces sensitivity to the tail behavior of the underlying elliptical distribution.
The resulting angular test is designed to preserve the covariance-shape information relevant for separability while counterbalancing the excess variability caused by heavy-tailed radial components.

Exact Monte Carlo level control can only be guaranteed when the data-generating distribution used in the calibration is known. Nevertheless, our simulation study demonstrates that the angular version is substantially more robust to distributional misspecification than the non-angular version. Across a broad range of dimensions, sample sizes, covariance structures, and elliptical distributions, the angular test achieves reasonable level control under Gaussian calibration even for heavy-tailed data, while retaining almost all of the power of the non-angular, elliptical test. In the high-dimensional regime, this test clearly outperforms both the low-dimensional LRT of \citet{MITCHELL2006} and the infinite-dimensional test of \citet{aston2017tests}, which is the prevailing method among the functional data tests.

Finally, we prove that the proposed test is consistent against dense alternatives
\[
\frac{1}{\sqrt{pq}}\| \Sigma - \Sigma_1 \ot \Sigma_2\|_F \geq \Delta_0 > 0
\]
for some departure $\Delta_0$, any $N$ large enough and proportional to $pq$, and all $\Sigma_1 \in \R^{p \times p}$ and $\Sigma_2 \in \R^{q \times q}$.
In other words, the test is consistent against departures from separability that are spread throughout the covariance spectrum. This is the natural regime for the Frobenius-norm-based statistic invoking John's, Nagao's, and Ledoit-Wolf's tests. Our results show that after separable whitening, such dense departures from separability remain detectable despite the estimation error induced by the whitening step, provided the estimator is good enough.%

The theoretical considerations are complemented by an extensive simulation study and an application to real acoustic phonetic data. We revisit speech recordings of digits in five Romance languages \citep{pigoli2014distances}, represented both through log-spectrograms and through mel-frequency cepstral coefficient matrices. The data illustrate a situation where originally one-dimensional waveforms are transformed by standard preprocessing steps into matrix-valued observations whose covariance structure is scientifically meaningful but too large to be estimated in an unstructured way \citep{pigoli2018statistical}. Applying the proposed angular test, we find strong evidence against separability across languages and across both representations. The agreement between the log-spectrogram and MFCC analyses suggests that the detected non-separability is not merely an artifact of a particular preprocessing.

During the writing/preparation of this manuscript, an independent preprint of a recent work of \citet{sung2026testing} appeared, addressing the same problem of testing separability in the high-dimensional regime. The preprint is based on the same initial idea of recasting separability testing as sphericity testing after whitening. This is not particularly surprising, given the recent increase of interest in separable covariance, as documented by the uptick of publications on the topic (see the numerous references above). Moreover, while the testing problem has been long solved in the low-dimensional regime by the LRT test \citep{lu2005likelihood,MITCHELL2006}, and the functional regime was successfully addressed in recent years as well \citep{aston2017tests, dette2025testing}, testing in the high-dimensional regime has remained unexplored until now. To the best of our knowledge, the ideas underlying the present work and that of \citet{sung2026testing} were developed independently, and the overlap reflects the naturalness of the whitening viewpoint in the high-dimensional separability testing problem, or even in other structural covariance assumption testing contexts. The present paper, however, differs in several respects. Firstly, we build our test from first principles, showing it is equivalent to the well-known sphericity tests after matrix whitening. Secondly, we introduce the angular version of the test, which is crucial for robustness under elliptical heavy-tailed distributions and distributional mis-specification. Thirdly, we establish consistency in a dense high-dimensional alternative regime. Finally, our simulations and real data application emphasize the practical consequences of the whitening variability and the stabilizing effect of spherical projection.

The rest of the paper is organized as follows. Section 2 introduces notation and background concepts. Section 3 defines the matrix-whitened John/Nagao/Ledoit-Wolf statistic, describes the Monte Carlo calibration procedure, and establishes the consistency of the elliptical test. After developing the test as originally intended, Section 4 provides a detailed comparison to the related work of \citet{sung2026testing}. Section 5 introduces the angular version of the test, which is the version of the test that we recommend using. All proofs of the theoretical results are deferred to the appendix. Section 6 underlines the usefulness of the proposed test by empirical demonstrations. The paper concludes with a short discussion.

\section{Preliminaries}\label{sec:background}

For a matrix variate distribution \citep{gupta2018}, the mean and the covariance are naturally defined entry-wise. For the sake of exposition, we will assume that the mean is zero and will be interested in the covariance. When the mean is unknown, computations are to be based on the centered sample.

We denote by $\|\cdot\|_F$ the Frobenius norm, whose square is the sum of squared entries of a given array, while $\|\cdot\|_2$ denotes the spectral norm, i.e.~the largest singular value of a given matrix. For $M \in \R^{d \times d}$ positive definite, $M^{1/2}$ denotes the unique positive definite square-root. $I_{d}$ denotes the identity matrix of size $d \times d$ while $I_{p,q}$ denotes the identity operator on $\R^{p\times q}$. The matrix dimensions $p$ and $q$, and hence all covariance-related objects, depend on the sample size $N$ throughout asymptotic arguments. However, this dependence is usually suppressed in the notation for the sake of conciseness.

For a random matrix $X \in \R^{p \times q}$ with finite second moments and $\E X = 0 \in \R^{p \times q}$, let $\Sigma_{ijkl} = \cov(X_{ij},X_{kl})$. Then $\Sigma = (\Sigma_{ijkl})_{i,j,k,l=1}^{p,q,p,q} \in \R^{p \times q \times p \times q}$ is the covariance tensor. Using the inner product $\langle A , B \rangle := \tr(A^\top B)$ for $A,B \in \R^{p\times q}$, the covariance $\Sigma$ of a random matrix $X \in \R^{p\times q}$ can be equivalently defined as the linear operator on $\R^{p\times q}$ satisfying
\[
\Sigma(A)= \E[\langle X,A \rangle_F X],\qquad A \in \R^{p \times q}.
\]
As such, the covariance operator is self-adjoint and positive semi-definite. Since we work within the framework of maximum likelihood, we will assume strict positive definiteness throughout the paper. Using the tensor product notation \citep{weidmann2012}, the covariance is defined by $\Sigma  =\E[ X\otimes X]$.

\newpage
The covariance is separable if
\[
\Sigma_{i,j,k,l} = (\Sigma_1)_{i,k}\,(\Sigma_2)_{j,l}
\]
for some $\Sigma_1 \in \R^{p \times p}$ and $\Sigma_2 \in \R^{q \times q}$.
In the coordinate-free tensor product notation, the covariance is separable if
\begin{equation}\label{eq:separability}
  \Sigma = \Sigma_1 \ot \Sigma_2  
\end{equation}
for some $\Sigma_1 \in \R^{p \times p}$ and $\Sigma_2 \in \R^{q \times q}$, where $\ot$ is again the tensor product defined by its action on rank one elements as
\[
(\Sigma_1 \ot \Sigma_2)(x \otimes y) = \Sigma_1 x \otimes \Sigma_2 y, \qquad x \in \R^p, y \in \R^q.
\]
Notice that the difference between $\ot$ and $\otimes$ amounts to a simple permutation of the dimensions: $\Sigma_1 \otimes \Sigma_2 \in \R^{p \times p \times q \times q}$, while $\Sigma_1 \ot \Sigma_2 \in \R^{p \times q \times p \times q}$.

\begin{definition}\label{def:equivariance}
    An estimator $\mmle$ of a separable covariance $\Sigma_1\ot\Sigma_2$ based on a random sample $X_1,\ldots,X_N \in \R^{p \times q}$ is called \emph{matrix affine equivariant} if, for any pair of invertible matrices $A \in \mathbb{R}^{p \times p}$ and $B \in \mathbb{R}^{q \times q}$ it holds
\[
\widetilde{\Sigma}_1 \ot \widetilde{\Sigma}_2 = (A \ot B) (\widehat{\Sigma}_1 \ot \widehat{\Sigma}_2) (A \ot B)^\top,
\]
where $\widetilde{\Sigma}_1 \ot \widetilde{\Sigma}_2$ is the estimator based on the transformed random sample $\widetilde X_1,\ldots,\widetilde X_N \in \R^{p \times q}$ with $\widetilde X_n := (A \ot B)X_n$ for $n=1,\ldots,N$.
\end{definition}

Possibly the most prominent member of the elliptical class is the matrix Gaussian distribution. A random matrix $X \in \R^{p \times q}$ is said to be matrix-variate Gaussian if its law on the space of matrices $\R^{p \times q}$ is Gaussian and its covariance is separable. It follows easily that if $Z \sim \mathcal{N}(0, I_{p,q})$, where the dimensions are clear from the dimensions of the covariance, then we have $(\Sigma_1 \ot \Sigma_2)^{1/2} Z \sim \mathcal{N}(0, \Sigma_1 \ot \Sigma_2)$.

\begin{remark}
We recognize that the exposition above is typically carried out using vectorizations \citep[see e.g.][]{gupta2018} and the Kronecker product notation \citep{van2000ubiquitous}, which are related to the tensor product exposition above via a single formula:
\[
\vec((\Sigma_1 \ot \Sigma_2)X) = \vec(\Sigma_1 X \Sigma_2^\top) = (\Sigma_2 \otimes_{K} \Sigma_1)x,
\]
where $\otimes_{K}$ is the Kronecker product and $x = \vec(X) \in \R^{pq}$ is the vectorization of the matrix $X \in \R^{p \times q}$ (the vectorization operator $\vec(\cdot)$ stacks all columns of $X$ into a long vector). While the tensor product notation requires some initial elucidation, it greatly simplifies the subsequent development. While we will still occasionally need to vectorize in the proofs postponed to the appendix, we avoid the clunky rearrangement operators \citep{van1993approximation} and particular matricizations of higher-order tensors completely.
\end{remark}

Given $N$ i.i.d.\ observations $X_1,\ldots,X_N \sim \mathcal{N}(0,\Sigma_1 \tilde\otimes \Sigma_2)$, the (Gaussian) maximum likelihood estimators (MMLEs) of $\Sigma_1$ and $\Sigma_2$ satisfy the fixed-point equations
\begin{equation}\label{eq:mmle}
\hat\Sigma_1 = \frac{1}{Nq}\sum_{i=1}^N X_i \hat\Sigma_2^{-1} X_i^\top\qquad\&\qquad \hat\Sigma_2 = \frac{1}{Np}\sum_{i=1}^N X_i^\top \hat\Sigma_1^{-1} X_i,
\end{equation}
which are solved via the \emph{flip-flop} iterative algorithm \citep{dutilleul1999mle, srivastava2008}. The equations are identifiable only up to a scalar: one typically imposes a normalization such as $\tr(\hat\Sigma_1)=p$ or $\det(\hat\Sigma_1)=1$. Under separability, within the elliptical family, MMLEs converge, up to a scale, to the true covariance parameters~\citep{MMCD}.

If the separability assumption~\eqref{eq:separability}
fails, the appropriately scaled version of these estimators no longer converge to the true covariance components. They instead converge to the \emph{pseudo-MLEs} $(\Sigma_1,\Sigma_2)$, i.e., the minimizers of the Kullback-Leibler divergence from the true (unstructured) distribution within the matrix Gaussian family \citep{white1982maximum}, defined implicitly via
\begin{equation}\label{eq:pseudo-parameter}
\Sigma_1=\frac{1}{q}\mathbb{E}(X\Sigma_2^{-1}X^\top)\qquad\&\qquad \Sigma_2=\frac{1}{p}\mathbb{E}(X^\top\Sigma_1^{-1}X).
\end{equation}

\section{The Elliptical Test}\label{sec:test}

Consider the following generative model for the data.

\smallskip
\noindent\textbf{Model 1.} For $n=1,\ldots,N$, let $X_n = \Sigma^{1/2}Z_n$, where $Z_n$ are i.i.d., following an absolutely continuous, orthogonally invariant distribution $F_0$ such that $\E Z_n=0$ and $\cov(Z_n) =c I_{p,q}$, for $c>0$. For the sake of identifiability of the model, it is often assumed that the scale $c$ is absorbed into the covariance $\Sigma$. For the simplicity of the notation, we adopt that convention and hereafter take $c=1$.
\smallskip

For the given random sample of matrices $X_1,\ldots,X_N \in \R^{p \times q}$ with mean $\mu$ and covariance $\Sigma \in \R^{p \times q \times p \times q}$, we aim to test the hypothesis
\[
H_0: \Sigma = \Sigma_1 \ot \Sigma_2 \text{ for some } \Sigma_1\in \R^{p \times p}, \Sigma_2\in \R^{q \times q}
\]
against $H_1: \neg H_0$. We will use the following statistic.

\begin{definition}\label{def:stats}
    For a random sample $X_1,\ldots,X_N \in \R^{p \times q}$, we define the John's statistic $T_N^{(Jo)}$, the Nagao's statistic $T_N^{(Na)}$ and the Ledoit-Wolf statistic $T_N^{(LW)}$ as
    \[
    \begin{split}
    T_N^{(Jo)}(X_1,\ldots,X_N) &= \frac{1}{pq} \left\| \frac{pq\widehat{C}}{\tr(\widehat C)}-I_{p,q}\right\|_F^2, \\
    T_N^{(Na)}(X_1,\ldots,X_N) &= \frac{1}{pq} \left\| \widehat{C}-I_{p,q}\right\|_F^2, \text{ and}\\
    T_N^{(LW)}(X_1,\ldots,X_N) &= \frac{1}{pq}\|\widehat{C}-I_{p,q}\|_F^2
-\frac{pq}{N}\left(\frac{1}{pq}\operatorname{tr}(\widehat{C})\right)^2,
    \end{split}
    \]
    where $\widehat{C}=\frac{1}{N} \sum_{n=1}^N Y_n \otimes Y_n$ is the empirical covariance estimator of the separably whitened sample, i.e.~$Y_n = (\mmle)^{-1/2} X_n$ with $(\widehat{\Sigma}_1,\widehat{\Sigma}_2)$ being the MMLE \eqref{eq:mmle}.
\end{definition}

Let $T_N$ be any of the three statistics above. The test rejects the null hypothesis if the p-value calculated via the following Monte Carlo (MC) simulation scheme, using $M\in\N$ MC samples, is smaller than or equal to the given level $\alpha \in (0,1)$:

\newpage
\begin{itemize}
    \item For $m=1,\ldots,M$:
\begin{enumerate}
    \item Draw $Z_1^\star,\ldots,Z_N^\star \in \R^{p \times q}$ independently from $F_0$.
    \item Calculate and store the statistic $T_{N,m}^\star = T_N(Z_1^\star,\ldots,Z_N^\star)$.
\end{enumerate}
    \item Return $\widehat{p}_{N,M}=\big(\# [T_{N,m}^\star \geq T_N]+1\big)\big/\big(M+1\big)$
\end{itemize}
For example, if the underlying distribution $F_0$ is Gaussian, entries of the matrices generated in Step 1 above are sampled as independent $\mathcal{N}(0,1)$-distributed variables. We recommend $M=999$, as is common in Monte Carlo testing \citep{davison1997bootstrap}. 

Notice that the whitening step in Definition~\ref{def:stats} as well as  Step 2 of the for-loop involve separable covariance estimation and a subsequent whitening step. For that, we always utilize \eqref{eq:mmle} as our covariance estimator. Appendix~\ref{sec:covariance_map} justifies this choice by showing that, on the population level and under $H_0$, any specific estimation strategy for obtaining a separable covariance that is matrix affine equivariant always leads to the right separable covariance target up to a proportionality constant. The constant is immaterial for our test as it disappears during the matrix whitening step.

There are two more important observations to be made about the proposed testing scheme. Firstly, the three statistics coincide (cf.~Lemma \ref{lem:statistics_equal} below). This is in contrast to general sphericity testing. There, to begin with, John's statistic is scale invariant while Nagao's and Ledoit-Wolf statistics are not. Specifically, John's statistic is designed to test whether the covariance is proportional to identity, while Nagao's and Ledoit-Wolf statistics test whether the covariance is equal to identity. Moreover, the Ledoit-Wolf statistic is asymptotically unbiased even in the high-dimensional regime \citep{wang2013sphericity,li2018structure}, while John's and Nagao's statistics are not. In our setup, however, $T_N^{(Jo)}$ and $T_N^{(Na)}$ coincide exactly as a result of the separable whitening, which removes the scale. $T_N^{(LW)}$ then differs by an additive constant only, which is automatically adjusted for by the Monte Carlo scheme. Overall, we propose only a single test since, interestingly, the three of the most popular sphericity tests coincide in our setup. The second observation is that the test statistics can be calculated efficiently, without the need to ever store, or even calculate, the covariance tensor.

\begin{lemma}\label{lem:statistics_equal}
    It holds $\tr(\widehat{C})=pq$ and hence $T_N:=T_N^{(Jo)}=T_N^{(Na)}=T_N^{(LW)}+\frac{pq}{N}$. Furthermore, $T_N=(N^2pq)^{-1} \sum_{n=1}^N \sum_{n'=1}^N \langle Y_n, Y_{n'} \rangle^2 - 1$.
\end{lemma}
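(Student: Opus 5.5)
The plan is to derive the trace identity directly from the fixed-point equations \eqref{eq:mmle} and then expand the Frobenius norms. First note that $\tr(\widehat C)=\frac1N\sum_n \tr(Y_n\otimes Y_n)=\frac1N\sum_n\|Y_n\|_F^2$. With $Y_n=(\mmle)^{-1/2}X_n=\widehat\Sigma_1^{-1/2}X_n\widehat\Sigma_2^{-1/2}$ (the inverse square root of a separable operator is the tensor product of the inverse square roots, via the action on rank-one elements), we get
\[
\|Y_n\|_F^2=\tr\big(\widehat\Sigma_1^{-1/2}X_n\widehat\Sigma_2^{-1}X_n^\top\widehat\Sigma_1^{-1/2}\big)=\tr\big(\widehat\Sigma_1^{-1}X_n\widehat\Sigma_2^{-1}X_n^\top\big).
\]
Summing over $n$ and using the first equation in \eqref{eq:mmle}, $\sum_n X_n\widehat\Sigma_2^{-1}X_n^\top=Nq\widehat\Sigma_1$. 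Hence $\tr(\widehat C)=\frac1N\tr(\widehat\Sigma_1^{-1}Nq\widehat\Sigma_1)=pq$. This holds for any scale normalization, since rescaling $\widehat\Sigma_1\mapsto c\widehat\Sigma_1$, $\widehat\Sigma_2\mapsto c^{-1}\widehat\Sigma_2$ leaves $Y_n$ unchanged.

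With $\tr(\widehat C)=pq$, the factor $pq/\tr(\widehat C)$ equals $1$, so $T_N^{(Jo)}=T_N^{(Na)}$ immediately. In $T_N^{(LW)}$ the subtracted term is $\frac{pq}{N}(1)^2=\frac{pq}{N}$, which gives $T_N^{(Na)}=T_N^{(LW)}+\frac{pq}{N}$.

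For the closed form, expand $\|\widehat C-I_{p,q}\|_F^2=\|\widehat C\|_F^2-2\tr(\widehat C)+\tr(I_{p,q})=\|\widehat C\|_F^2-2pq+pq$. The Hilbert–Schmidt inner product of rank-one operators satisfies $\langle Y_n\otimes Y_n,Y_{n'}\otimes Y_{n'}\rangle=\langle Y_n,Y_{n'}\rangle^2$, so
\[
\|\widehat C\|_F^2=N^{-2}\sum_{n,n'}\langle Y_n,Y_{n'}\rangle^2 .
\]
Dividing by $pq$ yields $T_N=(N^2pq)^{-1}\sum_{n,n'}\langle Y_n,Y_{n'}\rangle^2-1$.

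The only delicate point is the whitening identity $(\widehat\Sigma_1\ot\widehat\Sigma_2)^{-1/2}X=\widehat\Sigma_1^{-1/2}X\widehat\Sigma_2^{-1/2}$, which uses symmetry of the factors. It can be checked on rank-one $x\otimes y$, or via the vec/Kronecker identity in the Remark, using $(\Sigma_2\otimes_K\Sigma_1)^{-1/2}=\Sigma_2^{-1/2}\otimes_K\Sigma_1^{-1/2}$. Everything else is routine algebra.
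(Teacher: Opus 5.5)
Your proof is correct and takes essentially the same route as the paper. The paper writes the MMLE fixed-point equations in whitened form, $\frac{1}{Nq}\sum_n Y_nY_n^\top = I_p$, and takes traces; your identity $\tr\big(\widehat\Sigma_1^{-1}\sum_n X_n\widehat\Sigma_2^{-1}X_n^\top\big)=Npq$ is the same computation before conjugating by $\widehat\Sigma_1^{-1/2}$. The remaining steps coincide with the paper's: the equality of John's and Nagao's statistics, the $pq/N$ shift for Ledoit--Wolf, and the rank-one identity $\langle Y_n\otimes Y_n, Y_{n'}\otimes Y_{n'}\rangle=\langle Y_n,Y_{n'}\rangle^2$.
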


As a result, calculating the statistic can be done without forming or storing the empirical covariance in the time complexity $\mathcal{O}\big(Npq\min(N,pq)\big)$. In the high-dimensional regime  $N=O(pq)$, this reduces to $\mathcal{O}(N^3)$, which dominates the number of operations needed for one step of the flip-flop algorithm for calculating the MMLE $\mathcal{O}\big(Npq(p+q)+p^3+q^3\big)$. Hence, the overall complexity of the test is cubic in sample size $N$.

\newpage
\subsection{Exact Level Control}

The MC calibration introduced in the previous section yields exact level control of the proposed test (up to the approximation error stemming from the finite number of MC samples $M$), as long as a suitable estimator of the separable covariance is used when calculating the statistic. The following theorem provides a general invariance statement, which is specialized to the specific statistic in the subsequent corollary.

\begin{theorem}\label{thm:law}
Under the null hypothesis of separability, let $\widehat{\Sigma}_1 \ot \widehat{\Sigma}_2$ be a separable covariance estimator based on $X_1,\ldots,X_N$, with $X_n=\Sigma^{1/2}Z_n$,
where $Z_n$ are i.i.d.from an absolutely continuous distribution~with $\E Z_n=0$ and $\cov(Z_n) = I_{p,q}$, $n=1,\ldots,N$, such that
\begin{enumerate}
    \item it is uniquely defined for the given data set, and
    \item it is matrix affine equivariant in the sense of Definition~\ref{def:equivariance}.
\end{enumerate}
Let $T_N$ be any orthogonally invariant statistic in the sense that $T_N(X)\sim T_N(UXV^\top)$, for any orthogonal $U\in\R^{p\times p}$ and $V\in\R^{q\times q}$.  
Then $T_N(\widetilde X_1,\ldots,\widetilde X_N) \sim T_N(\widetilde Z_1^\star,\ldots,\widetilde Z_N^\star)$, where $\widetilde X_n=(\mmle)^{-1/2} X_n$ and $\widetilde Z_n^\star=(\widetilde\Sigma_1\ot\widetilde\Sigma_2)^{-1/2} Z_n^\star$, with $\widetilde\Sigma_1\ot\widetilde\Sigma_2$ the separable covariance estimator based on $Z_1^\star,\ldots,Z_N^\star$ independent copies of $Z_1,\ldots,Z_N$.
\end{theorem}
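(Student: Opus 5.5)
Under $H_0$ write $\Sigma^{1/2}=(\Sigma_1\ot\Sigma_2)^{1/2}=\Sigma_1^{1/2}\ot\Sigma_2^{1/2}$, so $X_n=(A\ot B)Z_n$ with $A=\Sigma_1^{1/2}$, $B=\Sigma_2^{1/2}$ invertible. The plan is to show that the whitened sample $\widetilde X_n$ equals a random orthogonal transformation $U\widetilde Z_nV^\top$ of the whitened version of $Z_1,\ldots,Z_N$. Orthogonal invariance of $T_N$ then finishes the argument, because $Z_1,\ldots,Z_N$ and $Z_1^\star,\ldots,Z_N^\star$ have the same joint law.

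\textbf{Step 1 (equivariance transfer).} By uniqueness and matrix affine equivariance (Definition~\ref{def:equivariance}), the estimator computed from $X_1,\ldots,X_N$ is
\[
\widehat\Sigma_1\ot\widehat\Sigma_2=(A\ot B)(\widetilde\Sigma_1\ot\widetilde\Sigma_2)(A\ot B)^\top=(A\widetilde\Sigma_1A^\top)\ot(B\widetilde\Sigma_2B^\top),
\]
where $\widetilde\Sigma_1\ot\widetilde\Sigma_2$ is the estimator computed from $Z_1,\ldots,Z_N$. Absolute continuity ensures that, almost surely, the estimator exists and is positive definite on both samples. The scalar ambiguity of the factors is irrelevant, since only the tensor product enters.

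\textbf{Step 2 (whitening matrices differ by an orthogonal map).} Set $W=(\widetilde\Sigma_1\ot\widetilde\Sigma_2)^{-1/2}$ and $\widehat W=(\widehat\Sigma_1\ot\widehat\Sigma_2)^{-1/2}$, and let $G=A\ot B$. Then
\[
\widetilde X_n=\widehat W G Z_n=(\widehat W G W^{-1})\,W Z_n=:Q\,\widetilde Z_n .
\]
One checks that $QQ^\top=\widehat W G (W^{-1})^2 G^\top \widehat W=\widehat W(\widehat\Sigma_1\ot\widehat\Sigma_2)\widehat W=I_{p,q}$, so $Q$ is orthogonal. To apply the invariance hypothesis we need $Q=U\ot V$. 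Since the symmetric square root of a separable operator is separable, $(S_1\ot S_2)^{1/2}=S_1^{1/2}\ot S_2^{1/2}$, we get
\[
Q=\bigl(\widehat\Sigma_1^{-1/2}A\widetilde\Sigma_1^{1/2}\bigr)\ot\bigl(\widehat\Sigma_2^{-1/2}B\widetilde\Sigma_2^{1/2}\bigr)=U\ot V .
\]
Here $UU^\top=\widehat\Sigma_1^{-1/2}A\widetilde\Sigma_1A^\top\widehat\Sigma_1^{-1/2}=c\,I_p$ for a scalar $c>0$ coming from the scale ambiguity, and $VV^\top=c^{-1}I_q$. Rescaling $U$ by $c^{-1/2}$ and $V$ by $c^{1/2}$ makes both factors orthogonal without changing $U\ot V$.

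\textbf{Step 3 (conclusion, and main obstacle).} We have $\widetilde X_n=U\widetilde Z_nV^\top$, but $U$ and $V$ are random and depend on the data. This is the delicate point: the hypothesis $T_N(X)\sim T_N(UXV^\top)$ is stated for fixed $U,V$. I would handle it with the following route.
\begin{enumerate}
\item Read the invariance of $T_N$ as pointwise invariance, $T_N(U\cdot V^\top)=T_N(\cdot)$ for every pair of orthogonal $U,V$. This holds for the statistic of Lemma~\ref{lem:statistics_equal}, since it depends on the data only through the inner products $\langle Y_n,Y_{n'}\rangle$.
\item Pointwise invariance gives $T_N(\widetilde X_1,\ldots,\widetilde X_N)=T_N(\widetilde Z_1,\ldots,\widetilde Z_N)$ surely, regardless of how $U$ and $V$ depend on the data.
\item Since $(Z_n)$ and $(Z_n^\star)$ are equal in law and the whitened sample is a measurable function of the raw sample, $T_N(\widetilde Z_1,\ldots,\widetilde Z_N)\sim T_N(\widetilde Z_1^\star,\ldots,\widetilde Z_N^\star)$.
\end{enumerate}
If only distributional invariance is available, I would instead condition on the whitening and use orthogonal invariance of the law of $Z$. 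That route needs care because $U,V$ are not independent of $\widetilde Z$, which is why I would state the theorem with pointwise invariance.
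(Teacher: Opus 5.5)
Your proposal is correct and is essentially the paper's proof: your $Q=\widehat W G W^{-1}$ is exactly the paper's orthogonal map $(BAB)^{-1/2}BA^{1/2}$ (with its $A=\widetilde\Sigma_1\ot\widetilde\Sigma_2$ and $B=(\Sigma_1\ot\Sigma_2)^{1/2}$), split as $Q_1\ot Q_2$ with orthogonal factors. The paper avoids your rescaling step by factoring $BAB$ directly as $(\Sigma_1^{1/2}\widetilde\Sigma_1\Sigma_1^{1/2})\ot(\Sigma_2^{1/2}\widetilde\Sigma_2\Sigma_2^{1/2})$, and it passes from $\widetilde X_n=Q_1\widetilde Z_nQ_2^\top$ to equality in law without remarking that $Q_1,Q_2$ depend on the data; your Step 3 makes explicit the pointwise-invariance reading under which that step is immediate, and this reading holds for every statistic the paper actually uses (functions of the eigenvalues of $\widehat C$ or of the inner products $\langle Y_n,Y_{n'}\rangle$).
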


\begin{corollary}\label{cor:cor_of_thm2}
     Under the assumptions of the previous theorem, let $T_N$ be any statistic that depends on the sample $X_1,\ldots,X_N$ only through the eigenvalues of
$\widehat{C}$.\\ Then $T_N(X_1,\ldots,X_N) \sim T_N(Z_1^\star,\ldots,Z_N^\star)$, where $Z_1^\star,\ldots,Z_N^\star$ are independent copies of $Z_1,\ldots,Z_N$.
\end{corollary}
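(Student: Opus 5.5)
The plan is to reduce the corollary to Theorem~\ref{thm:law}. The only new ingredient is to check that a statistic depending on the sample only through the eigenvalues of $\widehat C$ is \emph{exactly} invariant, not merely invariant in distribution, under the transformations that appear there. The argument is then an identity of laws.

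\textbf{Step 1: eigenvalue invariance.} Let $U\in\R^{p\times p}$ and $V\in\R^{q\times q}$ be orthogonal. Replacing $Y_n$ by $(U\ot V)Y_n=UY_nV^\top$ turns $\widehat C$ into $(U\ot V)\widehat C(U\ot V)^\top$. Since $U\ot V$ is an orthogonal operator on $\R^{p\times q}$, this is an orthogonal conjugation and preserves the spectrum. Hence $T_N$, viewed as a function of the whitened sample, is orthogonally invariant in the sense required by Theorem~\ref{thm:law}.

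\textbf{Step 2: apply Theorem~\ref{thm:law}.} The MMLE \eqref{eq:mmle} is unique almost surely under absolute continuity (with the scale normalization), and it is matrix affine equivariant. Taking $\widehat\Sigma_1\ot\widehat\Sigma_2$ to be the MMLE, the theorem gives that $T_N$ evaluated on $\widetilde X_n=(\mmle)^{-1/2}X_n$ has the same law as $T_N$ evaluated on $\widetilde Z_n^\star=(\widetilde\Sigma_1\ot\widetilde\Sigma_2)^{-1/2}Z_n^\star$.

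\textbf{Step 3: identify both sides.} By Definition~\ref{def:stats}, $T_N(X_1,\ldots,X_N)$ is by construction the eigenvalue function of the empirical covariance of $\widetilde X_n=Y_n$. Likewise, $T_N(Z_1^\star,\ldots,Z_N^\star)$ is the same function applied to the empirical covariance of $\widetilde Z_n^\star$, because the MC scheme re-estimates and re-whitens. The distributional equality of Step 2 is therefore exactly the claim.

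\textbf{Main obstacle: whitening is applied twice.} The subtle point is that Theorem~\ref{thm:law} is phrased with $T_N$ applied to already whitened data, while the corollary's $T_N$ performs its own whitening. I must verify that re-whitening an already whitened sample is idempotent. By equivariance with $A\ot B=(\mmle)^{-1/2}$, the MMLE of the whitened sample $\widetilde X_n$ is $I_{p,q}$ up to scale.

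One caveat is that the symmetric square root of a Kronecker product need not equal the affine map used in Definition~\ref{def:equivariance}. I will handle this by choosing $A=\widehat\Sigma_1^{-1/2}$ and $B=\widehat\Sigma_2^{-1/2}$, which is legitimate because $(\widehat\Sigma_1\ot\widehat\Sigma_2)^{-1/2}=\widehat\Sigma_1^{-1/2}\ot\widehat\Sigma_2^{-1/2}$.

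The scale constant cancels: Lemma~\ref{lem:statistics_equal} forces $\tr\widehat C=pq$, so the eigenvalues of $\widehat C$ do not depend on the normalization. Whether spectrum-only dependence alone suffices for the non-symmetric-root case is where I would be most careful. If a general (non-symmetric) whitening root were used, the whitened samples would differ by an orthogonal factor $U\ot V$, and Step 1 would absorb it.
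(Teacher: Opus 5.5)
Your proposal is correct and is essentially the paper's argument. The paper takes the identity $\widetilde X_n = Q_1\widetilde Z_n Q_2^\top$ directly from the proof of Theorem~\ref{thm:law}, which gives, pathwise, that the whitened-sample covariance of $\{X_n\}$ equals $(Q_1\ot Q_2)$ times that of $\{Z_n\}$ times $(Q_1\ot Q_2)^\top$, hence the same spectrum. You instead invoke the theorem's statement after verifying exact spectral invariance, which is precisely the pathwise invariance under the data-dependent $Q_1,Q_2$ that the proof needs. Your double-whitening worry is moot, since in Step~2 you already apply Theorem~\ref{thm:law} to the spectral function of the already-whitened sample.
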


Corollary~\ref{cor:cor_of_thm2} states that, under $H_0$, any of the three test statistic considered in Definition~\ref{def:stats} has a distribution that does not depend on the unknown separable covariance
as long as an estimator that is unique and matrix affine equivariant is used for the whitening step. The MMLE satisfies the conditions 1.~and 2.~in the previous theorem as showcased in the following proposition, leading to exact level control of the proposed test, stated as Corollary~\ref{cor:exact_level}.

\begin{proposition}\label{prop:MMLE}
    Under the assumptions of Theorem~\ref{thm:law}, the MMLE satisfies the conditions of Theorem~\ref{thm:law} almost surely under the sample size condition $N \geq \lfloor p/q+q/p \rfloor  + 2$. 
    On the other hand, the partial tracing estimator of \citet{aston2017tests} or the best separable approximation \citep{genton2007separable,masak2023separable} do not satisfy the second condition.
\end{proposition}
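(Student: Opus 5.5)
The proof naturally splits into three parts: equivariance of the MMLE, its almost-sure uniqueness, and counterexamples for the two competing estimators.

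\emph{Equivariance of the MMLE.} Let $(\widehat\Sigma_1,\widehat\Sigma_2)$ solve the fixed-point equations \eqref{eq:mmle} for the sample $X_n$, and set $\widetilde X_n=AX_nB^\top$. We claim that $(A\widehat\Sigma_1A^\top, B\widehat\Sigma_2B^\top)$ solves \eqref{eq:mmle} for the transformed sample. Indeed,
\[
\frac{1}{Nq}\sum_n \widetilde X_n (B\widehat\Sigma_2B^\top)^{-1}\widetilde X_n^\top
=A\Bigl(\frac{1}{Nq}\sum_n X_n\widehat\Sigma_2^{-1}X_n^\top\Bigr)A^\top
=A\widehat\Sigma_1A^\top ,
\]
and the second equation follows in the same way. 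Equivalently, the Gaussian log-likelihood of the transformed sample at $(A\Sigma_1A^\top,B\Sigma_2B^\top)$ equals the original log-likelihood at $(\Sigma_1,\Sigma_2)$ minus a constant $N(q\log|\det A|+p\log|\det B|)$. Hence the maximizer sets correspond bijectively under this map. The scalar ambiguity is harmless, because the product $\widehat\Sigma_1\ot\widehat\Sigma_2$ does not depend on the normalization. This gives condition~2 of Theorem~\ref{thm:law}, provided the maximizer exists and is unique, which is condition~1.

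\emph{Existence and uniqueness almost surely.} This is the main obstacle. I would invoke the existing literature rather than reprove it. \citet{soloveychik2016gaussian} and \citet{drton2021} show, via geodesic convexity of the negative log-likelihood on the cone of positive definite matrices and the quiver/stability characterization, that for generic data the MLE of the Kronecker product exists and is unique as soon as $N$ exceeds the threshold $p/q+q/p$. In the form stated by \citet{drton2021}, generic samples have a unique MLE when $N\ge \lfloor p/q+q/p\rfloor+2$. It then remains to show that the $X_n$ are generic. Since $Z_n$ is absolutely continuous and $\Sigma^{1/2}$ is invertible, the joint law of $(X_1,\ldots,X_N)$ is absolutely continuous on $(\R^{p\times q})^N$. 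The non-generic set is a proper Zariski-closed set, hence of Lebesgue measure zero. Therefore uniqueness holds almost surely. The delicate point is to check that the threshold quoted matches the ``uniqueness'' regime of \citet{drton2021} and not merely the ``boundedness'' regime. If there is a gap, I would fall back on geodesic strict convexity modulo scaling: geodesic convexity makes the set of maximizers convex along geodesics, and strictness off the scaling direction then yields uniqueness of the product.

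\emph{Failure of equivariance for the other estimators.} It suffices to give a single counterexample for each. For partial tracing, $\widehat\Sigma_1\propto\sum_j \widehat C_{\cdot j\cdot j}$, which equals $\frac1N\sum_n X_nX_n^\top$. Under $X\mapsto AXB^\top$ this becomes $\frac1N A\sum_n X_nB^\top BX_n^\top A^\top$. That is not $A\widehat\Sigma_1A^\top$ times a constant unless $B^\top B\propto I$. Taking $p=q=1$ is too trivial, so I would use a concrete sample with $N=1$, $X=\mathrm{diag}(1,0)$ and $B=\mathrm{diag}(1,2)$, or any nonorthogonal $B$ for which the mismatch is visible. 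For the best separable approximation, i.e.\ the Frobenius-nearest $\Sigma_1\ot\Sigma_2$ to $\widehat C$ (the leading term of the Van Loan--Pitsianis SVD), the Frobenius norm is not invariant under congruence by $A\ot B$. Hence the projection does not commute with the transformation. I would exhibit this with a small explicit nonseparable $\widehat C$, for instance $p=q=2$ with two observations, together with a diagonal nonscalar $A$, and verify numerically or symbolically that the leading rearranged singular pair changes direction.
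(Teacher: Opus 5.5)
Your overall route matches the paper's. Almost-sure existence and uniqueness come from the sample-size threshold in the literature. Equivariance comes from invariance of the Gaussian likelihood; you verify this directly via the fixed-point equations and the log-likelihood shift, whereas the paper only cites it. The two competitors are then dismissed by direct computation.

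The step that fails as written is your explicit witness for partial tracing. With $X=\mathrm{diag}(1,0)=e_1e_1^\top$, $A=I_2$ and $B=\mathrm{diag}(1,2)$ you get $XB^\top=e_1(Be_1)^\top=X$. So the transformed sample coincides with the original one, and the equivariance identity holds exactly. This is structural, not a poor choice of $B$. For a single rank-one observation $X=uv^\top$, the empirical covariance $\widehat C=(uu^\top)\ot(vv^\top)$ is separable, and partial tracing yields
\[
\widetilde\Sigma_1\ot\widetilde\Sigma_2=\frac{\|Au\|^2\|Bv\|^2}{\|u\|^2\|v\|^2}\,(A\ot B)(\widehat\Sigma_1\ot\widehat\Sigma_2)(A\ot B)^\top
\]
for all invertible $A,B$. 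That is, it is equivariant up to a scalar that is immaterial for the test. In particular, your heuristic that a mismatch appears unless $B^\top B\propto I$ is false for such data, since $XB^\top BX^\top=\|Bv\|^2uu^\top\propto XX^\top$. A witness needs a non-separable $\widehat C$, which is exactly the requirement you do impose for the best separable approximation.

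A working choice is $X_1=\dots=X_N=I_2$, $A=I_2$, $B=\mathrm{diag}(1,2)$. Then $\widetilde\Sigma_1\ot\widetilde\Sigma_2=\tfrac14\,\mathrm{diag}(1,4)\ot\mathrm{diag}(1,4)$, whereas $(A\ot B)(\widehat\Sigma_1\ot\widehat\Sigma_2)(A\ot B)^\top=\tfrac14\,I_2\ot\mathrm{diag}(1,4)$, and these are not proportional. Non-proportionality is an open condition and both sides are polynomial in the data. Hence the failure persists on an open set of samples, and so occurs with positive probability under the absolutely continuous laws of Theorem~\ref{thm:law}.

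Your fallback for uniqueness is also not an independent route. When $N<pq$, strict geodesic convexity modulo scaling can fail even though the likelihood is maximized. Take $p=q=N=2$, $X_1=e_1e_1^\top$, $X_2=e_2e_2^\top$. The pair $\Sigma_1=\mathrm{diag}\big(1/(4c),1/(4d)\big)$, $\Sigma_2=\mathrm{diag}(c,d)$ solves the fixed-point equations \eqref{eq:mmle} for every $c,d>0$; by geodesic convexity each such solution is a maximizer. Yet the product $\Sigma_1\ot\Sigma_2$ depends on $c/d$. Ruling this out for generic data is precisely what the threshold $N\ge\lfloor p/q+q/p\rfloor+2$ delivers, so you must rely on the cited result, exactly as the paper does.

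With that, a corrected partial-tracing witness, and the planned explicit check for the best separable approximation actually carried out, your argument is complete.
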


The sample size condition is for the MMLE to uniquely exist almost surely \citep[it is a sufficient, and almost necessary condition -- almost due to edge cases, cf.][]{drton2021}.
For this, neither distributional assumptions on $X$ nor  separability of the ground truth $\Sigma$ are needed.

\begin{corollary}\label{cor:exact_level}
Let $N \geq \lfloor p/q+q/p \rfloor  + 2$. Consider the p-value $\widehat{p}_{N,M}$ computed via the Monte Carlo scheme in the previous section. Under the assumptions of Theorem~\ref{thm:law}, for $\alpha \in (0,1)$, the test that rejects if and only if $\widehat{p}_{N,M} \leq \alpha$ satisfies $P(\widehat{p}_{N,M} \leq \alpha) = \lfloor \alpha (M+1) \rfloor / (M+1) \leq \alpha$.
\end{corollary}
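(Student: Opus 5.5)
The plan is to reduce Corollary~\ref{cor:exact_level} to the standard exchangeability argument for Monte Carlo tests. The reduction uses Corollary~\ref{cor:cor_of_thm2}, Proposition~\ref{prop:MMLE} and Lemma~\ref{lem:statistics_equal}.

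\emph{Step 1: the null law of $T_N$ is known.} By Proposition~\ref{prop:MMLE}, under $N \geq \lfloor p/q+q/p\rfloor+2$ the MMLE exists uniquely almost surely and is matrix affine equivariant. By Lemma~\ref{lem:statistics_equal}, $T_N=\frac{1}{pq}\|\widehat C-I_{p,q}\|_F^2=\frac{1}{pq}\sum_k(\lambda_k-1)^2$, where $\lambda_k$ are the eigenvalues of $\widehat C$. So $T_N$ depends on the sample only through these eigenvalues. Corollary~\ref{cor:cor_of_thm2} then gives
\[
T_N(X_1,\ldots,X_N)\sim T_N(Z_1^\star,\ldots,Z_N^\star).
\]
Since the Monte Carlo replicates $T_{N,1}^\star,\ldots,T_{N,M}^\star$ are drawn independently of the data, the $M+1$ variables $T_N,T_{N,1}^\star,\ldots,T_{N,M}^\star$ are i.i.d. In particular they are exchangeable.

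\emph{Step 2: ties occur with probability zero.} Let $R$ be the rank of $T_N$ among the $M+1$ values, counted from the top, so that $\widehat p_{N,M}=R/(M+1)$ when there are no ties. If the common law of these variables is continuous, exchangeability makes $R$ uniform on $\{1,\ldots,M+1\}$. Then
\[
P(\widehat p_{N,M}\le\alpha)=P\bigl(R\le\alpha(M+1)\bigr)=\lfloor\alpha(M+1)\rfloor/(M+1)\le\alpha.
\]

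\emph{Main obstacle: continuity of the law of $T_N$.} This is the step I expect to be hardest. The $Z_n$ are absolutely continuous, and $T_N$ is a real-analytic function of the data on the full-measure set where the MMLE exists. That set is open, and there the MMLE is a smooth implicit solution of \eqref{eq:mmle}. A non-constant real-analytic function has level sets of Lebesgue measure zero, so the law of $T_N$ has no atoms.

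It remains to show $T_N$ is non-constant. I would exhibit two configurations with different values of $\sum_{n,n'}\langle Y_n,Y_{n'}\rangle^2$. For instance, a configuration whose whitened sample is approximately orthogonal gives one value, and perturbing a single observation changes it. Connectedness is not even needed: one can argue on each connected component, or simply note that $T_N$ is not a.e.\ constant.

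Even if the tie argument were left unverified, breaking ties at random or in the conservative direction still yields $P(\widehat p_{N,M}\le\alpha)\le\alpha$. The equality in the statement, however, does require the no-ties argument above.
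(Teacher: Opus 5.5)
Your proposal follows the paper's own argument, which the paper never writes out: it obtains the corollary by combining Proposition~\ref{prop:MMLE} with Corollary~\ref{cor:cor_of_thm2} and then using the standard exchangeability/rank argument for Monte Carlo tests, with ties dismissed as a null event only in passing at the end of the proof of Theorem~\ref{thm:consistency}. Your analyticity sketch for the no-ties step goes beyond the paper, but it needs two repairs: you must show non-constancy of $T_N$ on every connected component of the full-measure domain (``not a.e.\ constant'' alone does not rule out an atom), and non-constancy genuinely requires $\min(p,q)\ge 2$, since for $q=1$ the MMLE is the sample covariance, so $\widehat C=I_{p,q}$, $T_N\equiv 0$, $\widehat p_{N,M}\equiv 1$, and the stated equality fails for $\alpha\ge 1/(M+1)$ even though all the hypotheses hold.
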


The test is thus an exactly valid finite-sample level-$\alpha$ test up to the Monte Carlo approximation error.

\subsection{High-dimensional Consistency}

Having established level control under $H_0$, the natural next step is to show that the power of the test tends to one under $H_1$. We will show consistency in the high-dimensional regime, where $p$ and $q$ increase with $N$, under the following assumptions.
\begin{description}
    \item[(A1)] Let $p/\sqrt{N} \to \gamma_1$ and $q/\sqrt{N} \to \gamma_2$ with $\gamma_1,\gamma_2 \in (0,\infty)$.
    \item[(A2)] Let $\|\Sigma\|_\op$, $\|\Sigma_1\|_\op$ and $\|\Sigma_2\|_\op$ be bounded, and let $\lambda_{\text{min}}(\Sigma_1)$ and $\lambda_{\text{min}}(\Sigma_2)$ be bounded away from zero, all uniformly in $N$ that is large enough.
    \item[(A3)] Let $X_n = \Sigma^{1/2}Z_n$, $n=1,\ldots,N$, where $Z_n$ are i.i.d.~with mutually independent entries $\E Z_n=0$, $\cov(Z_n)=I_{p,q}$, and finite eighth moments.
    \item[(A4)] Let $\frac{1}{\sqrt{pq}}\|\widehat{\Sigma}_1 \ot\widehat{\Sigma}_2 - \Sigma_1 \ot \Sigma_2 \|_F = o_p(1)$, where $(\Sigma_1,\Sigma_2)$ is the pseudo-true parameter defined in~\eqref{eq:pseudo-parameter}.
    \item[(A5)] Let $\frac{1}{\sqrt{pq}}\|\Sigma - \Sigma_1 \ot \Sigma_2 \|_F \geq \Delta_0 > 0$ for all $N$ large enough.
\end{description}
Firstly, let us discuss the individual assumptions. Assumptions (A1)-(A3) are fairly standard. Notice (A1) imposes the high-dimensional regime, where $pq \asymp N$. However, separating the convergence of the row and column dimensions ensures we are not in the ultra-rectangular case, where $p \gg q$ or vice versa.

On the other hand, Assumption (A4) is problem-specific. Under $H_0$ and Gaussianity, the much stronger rate $\|\widehat{\Sigma}_1\ot\widehat{\Sigma}_2 - \Sigma_1 \ot \Sigma_2 \|_F= \mathcal{O}_p(1)$ holds \citep{franks2026near}. Nonetheless, the rate of convergence of the pseudo-true parameter under $H_1$ has not been explored yet, to the best of our knowledge. The underlying proof strategy in \citet{franks2026near} depends crucially on the invariance of the separable class to congruences of the same class, and hence it is difficult to generalize their analysis to the pseudo-true parameter. In general, the slightly weaker rate $\frac{1}{\sqrt{pq}}\|\widehat{\Sigma_1}\ot\widehat{\Sigma}_2 - \Sigma_1 \ot \Sigma_2 \|_F=\mathcal{O}_p(1)$ is available for an unstructured covariance, and thus the standard MLE misspecification theory \citep{white1982maximum} can be used to obtain the rate. It is reasonable to expect that identifying the pseudo-true separable parameter in the non-separable case can be done in a slightly better, more informed way than unstructured covariance estimation. However, proving this is beyond the scope of the present paper.

Finally, Assumption (A5) guarantees high-dimensional departure from $H_0$. We would argue that this is very natural, given the aim to test whether $\Sigma=\Sigma_1 \ot \Sigma_2$, or equivalently $\|\Sigma-\Sigma_1 \ot \Sigma_2\| > 0$. It is part of the proof of the following theorem (cf.~the Appendix) that (A5) can be translated to a dense departure from sphericity on the whitened level.

\begin{theorem}\label{thm:consistency}
    Under Assumptions (A1)-(A5), the Monte Carlo test is consistent, i.e.~for a given level $\alpha \in [1/(M+1),1]$ it holds $P(\widehat{p}_{N,M} \leq \alpha) \stackrel{p}{\to} 1$ for $N \to \infty$.
\end{theorem}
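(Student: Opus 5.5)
The plan is to compare the observed statistic $T_N$ with the Monte Carlo null replicates $T^\star_{N,m}$. We show that the replicates concentrate near a deterministic level $\tau_N$, while under (A1)--(A5) the observed $T_N$ exceeds $\tau_N$ by a margin bounded away from zero. Once this is established, each indicator $\mathbf{1}[T^\star_{N,m}\ge T_N]$ tends to zero in probability. Since $M$ is fixed, a union bound gives $\#[T^\star_{N,m}\ge T_N]=0$ with probability tending to one. Then $\widehat p_{N,M}=1/(M+1)\le\alpha$, and the claim follows.

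\emph{Step 1 (null replicates).} By Lemma~\ref{lem:statistics_equal}, $T_N=\frac{1}{pq}\|\widehat C-I_{p,q}\|_F^2$ with $\tr\widehat C=pq$. For the MC samples $Z_n^\star$ with identity covariance, write $\widehat C^\star=(\widetilde\Sigma_1\ot\widetilde\Sigma_2)^{-1/2}S^\star(\widetilde\Sigma_1\ot\widetilde\Sigma_2)^{-1/2}$, where $S^\star=\frac1N\sum Z_n^\star\otimes Z_n^\star$. Under $H_0$ with $\Sigma=I$, the Gaussian MMLE rate of \citet{franks2026near} gives $\|\widetilde\Sigma_1\ot\widetilde\Sigma_2-I\|_F=O_p(1)$, hence $\frac{1}{pq}\|\widetilde\Sigma_1\ot\widetilde\Sigma_2-I\|_F^2=o_p(1)$; for general $F_0$ one could instead invoke (A4) with $\Sigma=I$. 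Standard moment computations, using finite eighth moments, give $\frac{1}{pq}\|S^\star-I\|_F^2=\frac{pq}{N}+O_p(N^{-1/2})$, which tends to $\gamma_1^2\gamma_2^2$ by (A1). Combining these through operator-norm perturbation gives $T^\star_{N,m}=\tau_N+o_p(1)$ with $\tau_N:=pq/N$. Here we also need $\|S^\star\|_\op=O_p(1)$, which holds by Bai--Yin since $pq/N$ converges.

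\emph{Step 2 (observed statistic).} Write $W=\Sigma_1\ot\Sigma_2$, $\widehat W=\mmle$, $S=\frac1N\sum X_n\otimes X_n$, and $D:=W^{-1/2}\Sigma W^{-1/2}$. The key algebraic fact is that the pseudo-true equations \eqref{eq:pseudo-parameter} give $\tr D=pq$, and more generally that the partial traces of $D$ are identities. We then decompose
\[
\widehat C-I=\big(\widehat W^{-1/2}S\widehat W^{-1/2}-W^{-1/2}SW^{-1/2}\big)+W^{-1/2}(S-\Sigma)W^{-1/2}+(D-I).
\]
The first term contributes $o_p(1)$ after scaling by $1/\sqrt{pq}$ in Frobenius norm. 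This uses (A4), the bounds in (A2) (which make $\|W^{-1/2}\|_\op$ bounded and $\widehat W^{-1/2}$ eventually bounded), $\|S\|_\op=O_p(1)$, and the Lipschitz property of $A\mapsto A^{-1/2}$ on well-conditioned matrices in Frobenius norm. Expanding the square leaves three contributions. The noise term satisfies $\frac1{pq}\|W^{-1/2}(S-\Sigma)W^{-1/2}\|_F^2=\frac{1}{N}\frac{(\tr D)^2+\tr(D^2)}{pq}+o_p(1)\ge \tau_N+o_p(1)$, with the bound coming from $(\tr D)^2/(pq)=pq$. The signal term satisfies $\frac1{pq}\|D-I\|_F^2\ge c\,\Delta_0^2$, since $\|D-I\|_F\ge\|\Sigma-W\|_F/\|W\|_\op$ and $\|W\|_\op$ is bounded by (A2). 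The cross term $\frac{2}{pq}\langle W^{-1/2}(S-\Sigma)W^{-1/2},D-I\rangle$ has mean zero and variance $O\big((pq)^{-2}N^{-1}\,pq\,\|D-I\|_\op^2\big)=o(1)$ under (A3). Together these give $T_N\ge\tau_N+c\Delta_0^2+o_p(1)$.

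\emph{Step 3 and main obstacle.} Combining the two steps, $P(T^\star_{N,m}\ge T_N)\le P(T^\star_{N,m}-\tau_N\ge c\Delta_0^2/2)+P(T_N-\tau_N<c\Delta_0^2/2)\to0$, and the union bound over the $M$ replicates finishes the proof. The main obstacle is the first term of the decomposition in Step 2, namely controlling the whitening error. Its cross terms with $D-I$ and with the noise must be bounded using only the normalized Frobenius rate in (A4), not operator-norm convergence. The approach is to write $\widehat W^{-1/2}=W^{-1/2}+E$ with $\frac{1}{\sqrt{pq}}\|E\|_F=o_p(1)$ and $\|E\|_\op=O_p(1)$, then bound every term by the product of a normalized Frobenius norm and an operator norm. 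This yields $o_p(1)$ contributions, and Cauchy--Schwarz handles the mixed terms.
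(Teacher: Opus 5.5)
Your proposal is correct and follows essentially the paper's route. It uses oracle whitening, disposes of the whitening-error term through (A4) with Frobenius-times-operator-norm bounds, and uses a mean--variance computation to place the oracle-whitened statistic at the signal $\frac{1}{pq}\|D-I\|_F^2\ge c\,\Delta_0^2$ plus the shift $pq/N$, around which the Monte Carlo replicates also concentrate. Expanding $\|\widehat C-I\|_F^2$ into three pieces rather than splitting $\tr(\widehat C^2-D^2)$, and treating the replicates explicitly, are only cosmetic differences.

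Two minor points. First, $pq/N\to\gamma_1\gamma_2$, not $\gamma_1^2\gamma_2^2$. Second, you need eventual operator-norm boundedness of $\widehat W^{-1/2}$ for the $ESE$ term and the Lipschitz step. This does not follow from (A2) and the normalized Frobenius rate in (A4); it requires an operator-norm version of (A4). The paper's own proof passes over this point in the same way.
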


\section{Connections to Prior Art}\label{sec:art}

Our tests estimate the separable covariance under the null exactly as the LRT of \citet{MITCHELL2006} does.
However, our test avoids the computationally prohibitive and statistically unsound estimation under the alternative. From another perspective, the LR statistic can be written as
\[
-2 \log\Lambda = N\sum_{j=1}^{pq}(\widehat\lambda_j - \log \widehat\lambda_j - 1)
\]
where $\{\widehat\lambda_j\}$ are eigenvalues of the matrix-whitened data. It is thus clear that the statistic is not well defined unless $N \geq pq$. Moreover, this spectral statistic is not linear, which impedes efficient computation.

On the other hand, the proposed test statistic can also be understood as a measure of dispersion of the eigenvalues of the whitened empirical covariance $\widehat C$. Under separability, it should be $\widehat{C} \approx I_{p,q}$, so the eigenvalues $\lambda_1,\ldots,\lambda_{pq}$ should concentrate around one. Thus, any non-separable structure that whitening cannot remove is spread out. A natural summary of this spread is the standard deviation of the spectrum. Because $\tr{\widehat{C}}=pq$ after whitening (see Lemma~\ref{lem:statistics_equal}), the eigenvalues have mean one, and hence
\[
T_N^{(Na)}=\frac{1}{pq}\sum_{j=1}^{pq}(\lambda_j-1)^2=\frac{1}{pq}\sum_{j=1}^{pq}(\lambda_j-\bar\lambda)^2.
\]
In words, the statistic is exactly the variance of the whitened eigenvalue spectrum. The intuitive dispersion measure thus coincides with the John/Nagao/Ledoit-Wolf statistic. Moreover, the relationship to the LRT test can be understood through the approximation $\lambda-\log\lambda-1=\tfrac12(\lambda-1)^2+o((\lambda-1)^2)$ near $\lambda=1$. The proposed statistic is thus the second-order surrogate of the LRT's statistic, specifically $-2\log\Lambda\approx \tfrac{Npq}{2}T_N$. By this approximation, the logarithmic singularity at zero is avoided, leaving $T_N$ well defined even when $N < pq$, which is when the LRT breaks down. Secondly, the spectral variance makes the statistic a dense detector, sensitive to departures spread across many eigenvalues, the regime formalized in Assumption~(A5). Finally, the spectral viewpoint also showcases how whitening removes the global scale, which is why the scale-invariant John statistic and the scale-dependent Nagao statistic coincide here. Thus the eigenvalue dispersion is a pure measure of the shape of the whitened cloud. This raises the question of whether one can remove the scale even more aggressively, at the level of each individual observation, to insulate the test from nuisance features of the sampling distribution. We pursue this in Section~\ref{sec:angular}, where normalizing each whitened observation to the unit sphere yields the angular test.

The test of separability that is most commonly adopted for functional data \citep{aston2017tests} uses partial tracing to estimate the separable covariance under the null, i.e.~it does not take advantage of the maximum likelihood geometry. Thus the test works even in the infinite-dimensional regime, where the maximum likelihood geometry is ill-posed, but lacks power in regular regimes, where the partial tracing estimators are inferior to MMLE. As a result, the test of \citet{aston2017tests} cannot compete with the proposed test in the high-dimensional regime (cf.~Appendix~\ref{sec:sim_aston}), which is why we do not compare our method with the one of \citet{aston2017tests} in the main simulation study in Section~\ref{sec:empirical}. Another reason for this is that the test of \citet{aston2017tests} depends on the choice of two tuning parameters (row and column ranks under separability), and no automatic selection procedure for these is available.

Interestingly, our statistic, e.g.~Nagao's one, can be written as
\[
T_N^{(Na)} = \frac{\|\widehat{C}\|_F^2}{pq} - 1.
\]
This is very close to the following statistic proposed by \citet{sung2026testing}:
\[
T_3 = \frac{\|\widehat{C}\|_F^2}{\widehat{\sigma}_1^2} - 1,
\]
where $\widehat\sigma_1$ is the leading separable component score of $\widehat{C}$, introduced by \citet{masak2023separable}. Only in the latest version of their manuscript, \citet{sung2026testing} correctly pointed out that their $T_3$ does not in fact depend on the separable component score, which is always constant after matrix whitening, and hence $T_3$ actually coincides with the elliptical test of Section~\ref{sec:test}. Still, our work differs in several important aspects:
\begin{enumerate}
    \item We show consistency for dense alternatives, while \citet{sung2026testing} only study consistency for one of their statistics (not the one that coincides with our elliptical test) under the sparse (partial isotropy) alternative regime.
    \item Our exposition is developed from first principles, relying on the MMLE and its invariance, as well as high-dimensional convergence rates, rather than on more elaborate frameworks such as the Kronecker-core decomposition \citep{hoff2023core} or separable component expansion \citep{masak2023separable}. We believe this makes the genesis of the test as well as the theoretical results more easily accessible.
    \item We propose the angular version of the test (in the next section) to alleviate the necessity to know the underlying distribution exactly. Consistency is shown again under the dense alternative regime.
    \item The proposed test is computationally efficient, and distributed in a ready-to-use \texttt{R}~package \texttt{elseptest} with pre-tabulated quantiles, making our method appealing in practice.
\end{enumerate}

Finally, our assumptions to prove consistency of the test are different, in some sense weaker than those of \citet{sung2026testing}. In particular, the rate in Assumption (A4) can be immediately replaced by $\|\widehat{\Sigma}_1 \ot\widehat{\Sigma}_2 - \Sigma_1 \ot \Sigma_2 \|_2 = o_p(1)$, which is closely related to Assumption (A3) of \citet{sung2026testing}. The latter postulates the spectral norm rate of $\mathcal{O}_p\big(\log(N)/\sqrt{N}\big)$, which is naturally stronger than the $o_p(1)$ that would guarantee Assumption (A4) of Section~\ref{sec:test}. However, \citet{sung2026testing} make the assumption on the whitened level directly, without any conditioning assumption such as our (A2). The proof techniques are completely different, and so any closer comparison seems difficult. Secondly, we indeed replace the assumed Frobenius norm rate (A4) for an operator norm assumed rate (B4) for the angular test that will be introduced in the subsequent section, because this test utilizes radial normalization, which is nonlinear, and so more uniform control of the whitening operator is required. Still, the rate we assume below in (B4) is just $o_p(1)$ and thus the previous comparison applies as is. That being said, our own numerical results suggest that the stronger rates assumed by \citet{sung2026testing} are valid, and the authors provide a more detailed asymptotic analysis, albeit only of the spectral procedure and under spiked departures from sphericity on the whitened level, which cannot be easily translated to spiked departures on the original covariance of the data and differs substantially from the dense departure on the original level we work with. Our results should thus be seen as complementary in numerous aspects.

\section{The Angular Test} \label{sec:angular}

In Section~\ref{sec:test}, we showed how to test for separability of the covariance tensor of matrix-variate data by recasting the problem as a sphericity test after MMLE-whitening. The whitening step removes the scale, leading to two of the most well-known sphericity tests (Nagao's test and John's test) to coincide in this case, while the Ledoit-Wolf statistic amounts to a monotone transformation of the other two statistics, which the Monte Carlo calibration is naturally impervious to. In this section, we propose to follow the whitening step by an additional spherical projection step, introducing a new testing procedure.

\begin{definition}\label{def:angular_test}
    For a random sample $X_1,\ldots,X_N \in \R^{p \times q}$, we define the angular statistic
    \[
    T_N^{(A)} = pq \left\| \widehat{S} - \frac{1}{pq} I_{p,q}\right\|_F^2,
    \]
    where $\widehat{S} = \frac{1}{N}\sum_{n=1}^N U_n \otimes U_n$ is the empirical covariance of the matrix-whitened and sphere-projected sample, i.e.
    \[
    U_n = \frac{(\mmle)^{-1/2} X_n}{\| (\mmle)^{-1/2} X_n \|_F}
    \]
    with $(\widehat \Sigma_1,\widehat \Sigma_2)$ being the MMLE.
\end{definition}

The angular statistic is then used within the Monte Carlo scheme of Section~\ref{sec:test} to produce the p-value of the angular test. Note that both centering and the scale of the angular statistic are different compared to the statistics from Definition~\ref{def:stats}. This is needed to obtain high-dimensional consistency, but as a monotone transformation of the statistic, it does not affect the finite sample performance in any way. Still, one might expect that removing the radius component completely amounts to a loss of power. However, this does not seem to be the case: we demonstrate in Section~\ref{sec:empirical} that the empirical power does not suffer from the spherical projection step. From an intuitive point of view, the angular test is the version of the elliptical test, where instead of the sample covariance $\hat{C}$ of the whitened data, one uses the spatial-sign covariance~\citep{DURRE201680} instead. Under an elliptical family, these two, on the population level, share the eigenvectors, but the eigenvalues are not the same (or proportional), except in the case where $\hat{C}$ is isotropic. The eigenvalues of the spatial-sign 
are a distorted version of the true ones, typically pulled closer together (i.e.~shrunk). In that sense, the variability within the eigenvalues is less pronounced for the spatial-signed covariance used in the angular test. 
However, on the population level, the specific shrinkage depends on the spectrum of $\hat{C}$, but not on $F_0$. Thus, the matching transformation also applies on the level of MC, thus potentially explaining why the test does not lose power, as well as why it remains a useful diagnostic despite discarding the radial information.

On the other hand, what we hope to achieve by this is to make the test more robust with respect to the choice of the underlying distribution $F_0$ used for Monte Carlo. In Model 1, $Z_n$ has stochastic representation as $Z_n=r_n V_n$, giving $X_n=r_n\Sigma^{1/2}V_n$. Writing $\tilde X_n=(\hat\Sigma_1\ot\hat\Sigma_2)^{-1/2}X_n$, $\|\tilde X_n\|_F=r_n\|(\hat\Sigma_1\ot\hat\Sigma_2)^{-1/2}\Sigma^{1/2}V_n\|$. Thus, working with $\tilde X_n/\|\tilde X_n\|_F$ as opposed to $\tilde X_n$, removes the effect of the univariate radial component $r_n$ as well as the overall scale.

Thus, ignoring the fact that we use the data to estimate the separable approximation of the covariance, the spherical projection removes the only component of $X_n$ which is distribution specific, its radial part $r_n$. More specifically, if the whitening were performed with the true covariance, under $H_0$, the resulting angular projection $\Sigma^{-1/2} X_n/\|\Sigma^{-1/2} X_n\|_F=U_n$ would have spherical uniform distribution (see e.g. \cite{DURRE201680}), regardless of~$F_0$. 

While this is something difficult to capture in a theoretical analysis (because exact level control is not possible without the knowledge of the underlying distribution, while asymptotic level control would require the study of asymptotic distribution of the test statistic, something that is difficult in the high-dimensional regime even in the case of more straightforward tests), in practical terms, the spherical projection step alleviates the necessity to know the underlying distribution $F_0$ exactly.

\begin{proposition}\label{prop:angular_level}
    Let $N \geq \lfloor p/q+q/p \rfloor  + 2$, and consider the p-value $\widehat{p}_{N,M}$ obtained via the Monte Carlo scheme from Section~\ref{sec:test} with $T_N := T_N^{(A)}$. Under the assumptions of Theorem 1, for $\alpha \in (0,1)$, the test that rejects if and only if $\widehat{p}_{N,M} \leq \alpha$ is a finite-sample size-$\alpha$ test, up to the Monte Carlo approximation error.
\end{proposition}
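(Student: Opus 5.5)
The plan is to reduce Proposition~\ref{prop:angular_level} to Corollary~\ref{cor:exact_level} by checking that the angular statistic $T_N^{(A)}$, seen as a function of the whitened sample, satisfies the hypotheses of Theorem~\ref{thm:law}. The first step invokes Proposition~\ref{prop:MMLE}. Under $N \geq \lfloor p/q+q/p\rfloor+2$, the MMLE exists uniquely almost surely and is matrix affine equivariant. Hence, by Theorem~\ref{thm:law}, the whitened sample $\widetilde X_n=(\mmle)^{-1/2}X_n$ has the same law as $\widetilde Z_n^\star$, the corresponding whitened Monte Carlo sample. This holds provided the statistic applied to the whitened sample is orthogonally invariant.

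The second step verifies this invariance. The map $\widetilde X_n \mapsto U_n=\widetilde X_n/\|\widetilde X_n\|_F$ commutes with $X\mapsto U X V^\top$ for orthogonal $U,V$, because the Frobenius norm is preserved. Under this transformation, $\widehat S$ becomes $(U\ot V)\widehat S(U\ot V)^\top$. Since $\|(U\ot V)(\widehat S-\tfrac{1}{pq}I_{p,q})(U\ot V)^\top\|_F$ equals $\|\widehat S-\tfrac1{pq}I_{p,q}\|_F$, the statistic depends on the whitened data only through the eigenvalues of $\widehat S$, and is therefore orthogonally invariant. Along the way I would note that $\|\widetilde X_n\|_F>0$ almost surely by absolute continuity, so $U_n$ is well defined. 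Together these steps give
\[
T_N^{(A)}(X_1,\ldots,X_N)\sim T_N^{(A)}(Z_1^\star,\ldots,Z_N^\star)
\]
under $H_0$, whatever the separable factors are.

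The third step is the standard Monte Carlo argument, exactly as in Corollary~\ref{cor:exact_level}. The statistics $T_N^{(A)},T^\star_{N,1},\ldots,T^\star_{N,M}$ are i.i.d.\ under $H_0$ because the Monte Carlo draws come from $F_0$, the true law of $Z_n$. They are also almost surely distinct: the statistic is a nonconstant real-analytic-type function of an absolutely continuous sample, so ties have probability zero. By exchangeability, the rank of $T_N$ is uniform on $\{1,\ldots,M+1\}$, and therefore $P(\widehat p_{N,M}\le\alpha)=\lfloor\alpha(M+1)\rfloor/(M+1)\le\alpha$.

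I expect the main obstacle to be the tie-free and continuity claim. Ties only affect conservativeness, though: with ties, the p-value $(\#[T^\star\ge T]+1)/(M+1)$ is still super-uniform, so the level bound $\le\alpha$ survives without proving continuity. I would argue it that way to avoid the delicate step. A minor subtlety is that Theorem~\ref{thm:law} requires absolute continuity and identity covariance of $Z_n$ but not orthogonal invariance of $F_0$. The invariance of $T_N^{(A)}$ used above is deterministic, so this causes no issue.
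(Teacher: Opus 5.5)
Your proposal is correct and follows essentially the paper's route. The paper reuses the representation from the proof of Theorem~\ref{thm:law}, notes that spherical projection commutes with the orthogonal map $Q=Q_1\ot Q_2$, and concludes that the projected samples satisfy $u_n=Qv_n$, so the two empirical spatial-sign covariances share their eigenvalues. Your insistence on deterministic rather than merely distributional invariance is what is needed, since $Q$ depends on the data. Your exchangeability argument for the Monte Carlo p-value, with the super-uniformity fallback for ties, spells out a step the paper leaves implicit.
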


Secondly, the angular test statistic also has the inner product representation, allowing for an efficient computation.

\begin{lemma}\label{lem:angular_computation}
    The angular test statistic $T_{N}^{(A)}$ of Definition~\ref{def:angular_test} satisfies
        \[
        T_N^{(A)} = \frac{pq}{N^2} \sum_{n=1}^N\sum_{m=1}^N \langle U_n, U_m\rangle^2 -1.
        \]
\end{lemma}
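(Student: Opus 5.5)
The plan is to expand the squared Frobenius norm directly, using only the inner-product structure of the tensor products. We write
\[
\Big\|\widehat S - \tfrac{1}{pq} I_{p,q}\Big\|_F^2 = \|\widehat S\|_F^2 - \tfrac{2}{pq}\tr(\widehat S) + \tfrac{1}{(pq)^2}\|I_{p,q}\|_F^2 ,
\]
where the Frobenius inner product of two operators on $\R^{p\times q}$ is $\langle A,B\rangle_F = \tr(A^\top B)$. Each of the three terms is then evaluated in turn.

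The identity term is immediate, since $\|I_{p,q}\|_F^2 = \tr(I_{p,q}) = pq$.

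For the trace term, we use that $\tr(U\otimes U) = \langle U,U\rangle = \|U\|_F^2$. Because every $U_n$ has unit Frobenius norm by construction, this gives $\tr(\widehat S) = \frac1N\sum_n \|U_n\|_F^2 = 1$. The construction requires $(\mmle)^{-1/2}X_n \neq 0$, which holds almost surely under absolute continuity; we note this in passing.

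For the quadratic term, we use the rank-one identity $\langle U\otimes U, V\otimes V\rangle_F = \langle U,V\rangle^2$. This can be checked after vectorization, where $U\otimes U$ corresponds to $uu^\top$ and $\tr(uu^\top vv^\top) = (u^\top v)^2$. By bilinearity we then get $\|\widehat S\|_F^2 = N^{-2}\sum_{n,m}\langle U_n,U_m\rangle^2$.

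Substituting the three pieces gives
\[
\|\widehat S\|_F^2 - \frac{2}{pq} + \frac{1}{pq} = \|\widehat S\|_F^2 - \frac{1}{pq}.
\]
Multiplying by $pq$ yields the claim. There is no real obstacle here. The only point needing care is the identification of the Frobenius inner product of tensor products with squared matrix inner products, which is the same identity already used in Lemma~\ref{lem:statistics_equal}, so it may simply be invoked from there.
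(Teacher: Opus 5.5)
Your proposal is correct and follows the paper's proof essentially step for step. You expand the squared Frobenius norm, use $\tr(\widehat S)=1$ (from $\|U_n\|_F=1$) and $\tr(I_{p,q})=pq$, and evaluate $\|\widehat S\|_F^2=\tr(\widehat S^2)$ via the rank-one identity $\langle U_n\otimes U_n, U_m\otimes U_m\rangle_F=\langle U_n,U_m\rangle^2$; your side remark that $(\widehat\Sigma_1\ot\widehat\Sigma_2)^{-1/2}X_n\neq 0$ almost surely is a small point the paper leaves implicit.
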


On the other hand, the spherical projection adds another non-linear step to the statistic's calculation procedure, so the theoretical considerations become slightly more cumbersome. In particular, while it is easy to show for the elliptical test (cf.~the proof of Theorem~\ref{thm:consistency}) that a dense departure from separability guarantees a dense departure from sphericity on the matrix-whitened level, showing the same requires additional work when the spherical projection is involved. Hence we provide the following statement separately.

\begin{proposition}\label{prop:relevant_hypothesis}
    Let
	\[
	S(C):=\E\!\left[\frac{Y \otimes Y}{\|Y\|^2}\right] \qquad \&
	\qquad
	Y=C^{1/2}Z,
	\]
	where $Z$ is spherical in \(\R^{p \times q}\), $C\succ0$, and $\tr(C)=pq$.
	Then
	\[
	S=\frac{1}{pq} I_{p,q}
	\quad\Leftrightarrow\quad
	\Sigma=\Sigma_1 \ot \Sigma_2.
	\]
	Moreover, under Assumptions (B2) and (B5) of Theorem~\ref{thm:angular_consistency}, there exists
	$\wideparen{\Delta}_0>0$ such that
	\[
	pq\Big\|S-\frac{1}{pq}I_{p,q}\Big\|_F^2
	\ge \wideparen{\Delta}_0
	\]
	for all $N$ large enough.
\end{proposition}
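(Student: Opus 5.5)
The plan has three steps: diagonalize $S(C)$ in the eigenbasis of $C$, prove the equivalence through a monotonicity property of the resulting weights, and then show that $S(C)$ is close to $C/(pq)$ in normalized Frobenius norm, so that the density from (B5) passes from $C$ to $S(C)$. Throughout, $C$ denotes the population whitened covariance
\[
C=(\Sigma_1\ot\Sigma_2)^{-1/2}\,\Sigma\,(\Sigma_1\ot\Sigma_2)^{-1/2},
\]
with $(\Sigma_1,\Sigma_2)$ the pseudo-true parameter \eqref{eq:pseudo-parameter}. The normalization $\tr(C)=pq$ follows by taking traces in \eqref{eq:pseudo-parameter}. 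I assume (B2) and (B5) are the angular analogues of (A2) and (A5).

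\emph{Step 1 (diagonalization).} Since $Z$ is spherical, $Z/\|Z\|_F$ is uniform on the sphere, so $S(C)$ depends only on $C$, and we may take $Z\sim\mathcal N(0,I_{p,q})$ without loss of generality. Write $C=\sum_j\lambda_j\, e_j\otimes e_j$. By rotation invariance of $Z$, the coordinates $z_j=\langle Z,e_j\rangle$ are i.i.d. $\mathcal N(0,1)$, and the sign symmetry $z_j\mapsto -z_j$ kills all off-diagonal terms. Hence
\[
S(C)=\sum_j w_j\, e_j\otimes e_j,\qquad w_j=\E\Big[\lambda_j z_j^2/D\Big],\qquad D=\sum_k\lambda_k z_k^2,
\]
with $\sum_j w_j=1$.

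\emph{Step 2 (the equivalence).} If $\lambda_i>\lambda_j$, swap $z_i\leftrightarrow z_j$, which leaves the law invariant, and average. This gives $w_i-w_j=\E\big[(\lambda_i-\lambda_j)(z_i^2+z_j^2)/(2D)\big]$ plus a term that is nonnegative. The second term comes from pairing the configurations $z_i^2>z_j^2$ and $z_i^2<z_j^2$, whose denominators are ordered accordingly. Therefore $w_i>w_j$, so the map $\lambda\mapsto w$ is strictly order preserving. Consequently $S=I_{p,q}/(pq)$ forces all $\lambda_j$ to be equal, and with $\tr(C)=pq$ this gives $C=I_{p,q}$. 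It remains to note that $C=I_{p,q}$ is equivalent to $\Sigma=\Sigma_1\ot\Sigma_2$. One direction is the definition of $C$. For the other, when $\Sigma$ is separable the pseudo-true parameter recovers its factors up to scale; this can be verified directly in \eqref{eq:pseudo-parameter}, or via Appendix~\ref{sec:covariance_map}. The scale is then fixed by $\tr(C)=pq$.

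\emph{Step 3 (quantitative bound).} First, (B2) gives $\|C\|_\op\le K$ for some constant $K$, and $\|\Sigma-\Sigma_1\ot\Sigma_2\|_F\le\|\Sigma_1\ot\Sigma_2\|_\op\,\|C-I_{p,q}\|_F$. Combined with (B5), this yields $\frac1{pq}\|C-I_{p,q}\|_F^2\ge c\Delta_0^2$, that is, $pq\|C/(pq)-I_{p,q}/(pq)\|_F^2\ge c\Delta_0^2$. By the triangle inequality it then suffices to show $pq\sum_j\epsilon_j^2\to0$, where $\epsilon_j=w_j-\lambda_j/(pq)$. To do this, let $A=\{D\ge pq/2\}$ and use the identity $1/D-1/(pq)=(pq-D)/(pq\,D)$ to split
\[
\epsilon_j=\E\Big[\tfrac{\lambda_j z_j^2(pq-D)}{pq\,D}\mathbf 1_A\Big]+\E\Big[\tfrac{\lambda_j z_j^2}{D}\mathbf 1_{A^c}\Big]-\tfrac{\lambda_j}{pq}\E[z_j^2\mathbf 1_{A^c}].
\]
\begin{itemize}
\item On $A$, Cauchy--Schwarz together with $\operatorname{Var}(D)=2\sum_k\lambda_k^2\le 2K^2pq$ gives a bound of order $K^2/(pq)^{3/2}$ for each $j$. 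Squaring, summing over the $pq$ indices and multiplying by $pq$ gives $O(1/pq)$.
\item For the last two terms, $\E[z_j^2\mathbf 1_{A^c}]\le\sqrt{3\,P(A^c)}$ by Cauchy--Schwarz. Moreover $\sum_j\E[\lambda_j z_j^2\mathbf 1_{A^c}/D]=P(A^c)$, so the second term contributes at most $pq\,P(A^c)^2$.
\item Since $\E D=pq$ and the $\lambda_k$ are bounded by $K$, a Bernstein/Laurent--Massart lower-tail bound gives $P(A^c)\le e^{-c'pq}$. Both remaining contributions are therefore exponentially small.
\end{itemize}
Hence $pq\|S-I_{p,q}/(pq)\|_F^2\ge(\sqrt{c}\,\Delta_0-o(1))^2$, and we may take $\wideparen\Delta_0=c\Delta_0^2/2$ for $N$ large.

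The main obstacle is Step 3. The bound must be uniform in the spectrum of $C$, whose smallest eigenvalues may be arbitrarily close to zero, so $1/D$ cannot be controlled pointwise. This is why I truncate on $A$ and use the exact identity $\sum_j w_j=1$ on $A^c$, rather than relying on moment bounds for $1/D$. A secondary delicate point is making the strict monotonicity argument in Step 2 fully rigorous; the swap-and-average pairing is the first route I would try.
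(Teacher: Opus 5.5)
Your Step 2 has a sign error, and as written it does not prove monotonicity. Put $a=z_i^2$, $b=z_j^2$, $R=\sum_{k\neq i,j}\lambda_k z_k^2$, $D=\lambda_i a+\lambda_j b+R$ and $D'=\lambda_i b+\lambda_j a+R$. Your decomposition reads
\[
w_i-w_j=\frac{\lambda_i-\lambda_j}{2}\,\E\Big[\frac{a+b}{D}\Big]+\frac{\lambda_i+\lambda_j}{2}\,\E\Big[\frac{a-b}{D}\Big].
\]
The swap-and-pair step gives $\E[(a-b)/D]=-\frac{\lambda_i-\lambda_j}{2}\,\E[(a-b)^2/(DD')]$, which is \emph{negative} when $\lambda_i>\lambda_j$. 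The reason is that the configurations with $a>b$ are exactly those with the larger denominator, so the positive part of $a-b$ is damped and the negative part amplified. You therefore have a positive term plus a negative one. The negative term is not negligible: for $pq=2$ and $\lambda_j\downarrow 0$ your first term diverges, while $0\le w_i-w_j\le 1$. The missing step is to put both pieces over the common denominator $DD'$, which is what the paper does. Using $D+D'=(\lambda_i+\lambda_j)(a+b)+2R$ one gets
\[
w_i-w_j=\frac{\lambda_i-\lambda_j}{2}\,\E\Big[\frac{R(a+b)+2(\lambda_i+\lambda_j)ab}{DD'}\Big],
\]
whose integrand is a.s.\ positive. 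This gives strict monotonicity, and the rest of your Step 2 then goes through.

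Your Step 3 is correct and takes a different route from the paper. The paper reuses the identity above with $m\le\lambda_k\le M$ and $V$ uniform on the sphere. It bounds $R\ge m(1-a-b)$ and both denominators by $M$, and uses $a+b\sim\mathrm{Beta}(1,(pq-2)/2)$ to get $|s_i-s_j|\ge \frac{c_0}{pq}|\lambda_i-\lambda_j|$ with $c_0=m/(5M^2)$. Summing over pairs then gives $pq\|S-\frac{1}{pq}I_{p,q}\|_F^2\ge c_0^2\,\frac{1}{pq}\|C-I_{p,q}\|_F^2$. That bound is non-asymptotic (valid for any $pq\ge 3$), but it needs two-sided spectral bounds on $C$, hence $\lambda_{\min}(\Sigma)$ from (B2), and it carries the small constant $c_0^2$.

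Your concentration argument shows $pq\|S-C/(pq)\|_F^2=O(1/(pq))$ via truncation on $\{D\ge pq/2\}$ and a Laurent--Massart lower tail. It needs only $\|C\|_{2}\le K$ and asymptotically recovers the full separation $\frac{1}{pq}\|C-I_{p,q}\|_F^2$. However, it requires $pq\to\infty$, which comes from (B1) rather than from (B2) and (B5). This is harmless in the setting of Theorem~\ref{thm:angular_consistency}, but you should state it. A side benefit of your route is that the quantitative bound does not depend on Step 2, so the flaw above affects only the exact equivalence.
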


Finally, we can provide the consistency statement for the angular test.

\begin{theorem}\label{thm:angular_consistency}
    Suppose that the following conditions hold:
    \begin{description}
    \item[(B1)] \(p/\sqrt N\to \gamma_1\) and \(q/\sqrt N\to \gamma_2\) with
	\(\gamma_1,\gamma_2\in(0,\infty)\).
	\item[(B2)] \(\|\Sigma\|_{2}\), \(\|\Sigma_{1}\|_{2}\), and
	\(\|\Sigma_{2}\|_{2}\) are bounded, and
	\(\lambda_{\min}(\Sigma)\), \(\lambda_{\min}(\Sigma_{1})\), and
	\(\lambda_{\min}(\Sigma_{2})\) are bounded away from zero, all uniformly for
	\(N\) large enough.
	\item[(B3)] Model 1 holds with $Z_n$, $n=1,\dots,N$, i.i.d.~from an absolutely continuous orthogonally invariant distribution.
	\item[(B4)] The separable covariance estimator is operator-norm consistent in the sense
	\[
	\|\widehat\Sigma_{1}\ot \widehat\Sigma_{2} -\Sigma_{1}\ot \Sigma_{2}\|_{2}
	=o_p(1).
	\]
	\item[(B5)] Let $\frac{1}{\sqrt{pq}}
	\|\Sigma-\Sigma_{1}\ot \Sigma_{2}\|_F
	\ge \Delta_0 > 0 $
	for all $N$ large enough.
    \end{description}
    Then the Monte Carlo test based on $T_N^{(A)}$ is consistent. That is, for a fixed level $\alpha \in [1/(M+1), 1]$, we have $P(\widehat{p}_{N,M} \leq \alpha) \stackrel{p}{\to} 1$ for $N \to \infty$. 
\end{theorem}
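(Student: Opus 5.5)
The argument has three parts. First, the observed angular statistic $T_N^{(A)}$ stays bounded below by a positive constant in probability. Second, the Monte Carlo replicates $T_{N,m}^\star$ converge to zero in probability. Third, the p-value $\widehat p_{N,M}$ then equals $1/(M+1)\le\alpha$ with probability tending to one. Once the first two parts are in place, the third is immediate: with $M$ fixed, a union bound over $m=1,\ldots,M$ gives $P(\exists m: T_{N,m}^\star\ge T_N^{(A)})\to 0$.

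For the Monte Carlo replicates, I would argue as follows. Under $H_0$-type data $Z_1^\star,\ldots,Z_N^\star$, the null theory of Section~\ref{sec:test} applies, so (B4) holds with target $I_{p,q}$ up to scale, uniformly in operator norm. Write $W_n=\widetilde Z_n^\star/\|\widetilde Z_n^\star\|_F$ and $V_n=Z_n^\star/\|Z_n^\star\|_F$, where the $V_n$ are uniform on the sphere. I would show $\max_n\|W_n-V_n\|_F=o_p(1)$ in the weak sense needed for the following bound. Lemma~\ref{lem:angular_computation} gives
\[
T_N^{(A)} = \frac{pq}{N^2}\sum_{n,m}\langle U_n,U_m\rangle^2-1 .
\]
For uniform $V_n$, the diagonal terms contribute $pq/N$. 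The off-diagonal terms have mean $1/(pq)$ each, so together they contribute $pq\cdot (N-1)/(N\,pq)\to 1$. This gives $T^{\star}\approx pq/N\to\gamma_1^2\gamma_2^2$ under (B1).

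The centring is therefore not zero but the deterministic constant $c_0=\gamma_1^2\gamma_2^2$. So the second part becomes: the replicates converge to $c_0$, and the observed statistic exceeds $c_0+\wideparen\Delta_0/2$ with probability tending to one.

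For the observed statistic, I would decompose
\[
\widehat S-\frac{1}{pq}I=\big(\widehat S-\widehat S^{o}\big)+\big(\widehat S^{o}-S\big)+\Big(S-\frac{1}{pq}I\Big),
\]
where $\widehat S^{o}$ is built from whitening by the population $\Sigma_1\ot\Sigma_2$ and $S=S(C)$ with $C\propto(\Sigma_1\ot\Sigma_2)^{-1/2}\Sigma(\Sigma_1\ot\Sigma_2)^{-1/2}$. The terms are handled as follows.

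The first term is controlled by (B4) and (B2). Operator-norm closeness of the whitening maps implies that $\widehat U_n$ and $U_n^{o}$ are close uniformly in $n$, since radial normalization is Lipschitz away from zero and $\|\cdot\|_F^2/(pq)$ concentrates near a positive constant. Hence $\sqrt{pq}\,\|\widehat S-\widehat S^{o}\|_F=o_p(1)$.

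The second term, the i.i.d.\ part, requires expanding $pq\|\widehat S^{o}-\tfrac1{pq}I\|_F^2$ through the Gram representation of Lemma~\ref{lem:angular_computation}. Its expectation is $pq\|S-\tfrac1{pq}I\|_F^2+pq/N\cdot(1-\|S\|_F^2)$, which is $pq\|S-\tfrac1{pq}I\|_F^2+c_0+o(1)$. For the fluctuations, I would bound the variance of the U-statistic $\sum_{n\neq m}\langle U_n,U_m\rangle^2$ using orthogonal invariance (B3) and fourth moments of the projected Gaussian-like directions, aiming for a variance of $o(1)$ after scaling.

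The third term is handled by Proposition~\ref{prop:relevant_hypothesis}, which gives $pq\|S-\tfrac1{pq}I\|_F^2\ge\wideparen\Delta_0$.

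The main obstacle is the first term. $T_N^{(A)}$ is a quadratic form scaled by $pq$, so errors of order $o_p(1)$ per entry are not automatically negligible. I would expand the squared Frobenius norm, bound the cross term by Cauchy--Schwarz against the leading $O_p(1)$ terms, and use the fact that $\|\widehat S-\widehat S^o\|_F\le \max_n\|\widehat U_n-U_n^o\|_F\cdot 2\|\cdot\|$ of averaged rank-one terms. This should yield a relative error of $o_p(1)$ times the bounded quantities $\sqrt{pq}\,\|\widehat S\|_F$ and $\sqrt{pq}\,\|\widehat S^{o}\|_F$.
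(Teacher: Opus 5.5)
\textbf{Gap in the estimation-error term.} Your overall structure is the paper's. You split the statistic into $\widehat S-\widehat S^{o}$, $\widehat S^{o}-S$ and $S-\frac{1}{pq}I_{p,q}$, centre both the observed and Monte Carlo statistics at $pq/N$, take the separation from Proposition~\ref{prop:relevant_hypothesis}, and finish with a union bound over the fixed $M$ replicates. The gap is in the estimation-error term, which you identify as the main obstacle but do not close.

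Uniform closeness $\max_n\|\widehat U_n-U_n^{o}\|_F\le\varepsilon=o_p(1)$ gives, via the triangle inequality, only $\|\widehat S-\widehat S^{o}\|_F\le 2\varepsilon$. You need $o_p((pq)^{-1/2})$. The relative bound you then invoke (an error of $o_p(1)$ times $\|\widehat S\|_F+\|\widehat S^{o}\|_F$) does not follow from per-observation closeness. With $d_n=\widehat u_n-u_n^{o}$, what such closeness gives in general is $\|\frac1N\sum_n d_n u_n^\top\|_F\le\varepsilon\,\|\frac1N\sum_n u_nu_n^\top\|_2^{1/2}$. For averages of unit rank-one terms, $\|\cdot\|_2^{1/2}\ge\|\cdot\|_F$, and the ratio can diverge. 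For example, a common component of size $N^{-1/4}$ in all $u_n$ together with a constant perturbation $d_n=\varepsilon w$ gives a ratio of order $N^{1/4}$. Your route therefore needs an extra random-matrix input, $\|\widehat S^{o}\|_2=O_p(1/(pq))$. This holds here, via sub-Gaussianity of $\sqrt{pq}\,V_n$ and $N\asymp pq$, but it is neither stated nor proved in your proposal. The same gap affects your Monte Carlo step, where you assume $\max_n\|W_n-V_n\|_F=o_p(1)$ suffices.

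\textbf{How the paper closes it.} The paper treats the perturbation as one linear map shared by all observations. Set $G_N=(\widehat\Sigma_1\ot\widehat\Sigma_2)^{-1/2}(\Sigma_1\ot\Sigma_2)^{1/2}=I+E_N$. Then (B4), (B2) and Lipschitz continuity of $X\mapsto X^{-1/2}$ on a compact part of the positive-definite cone give $\|E_N\|_2=o_p(1)$. Moreover $\widehat u_n\widehat u_n^\top=G_NP_n^{o}G_N^\top/(1+\alpha_n)$, where $P_n^{o}=u_n^{o}(u_n^{o})^\top$ and $\sup_n|\alpha_n|\le\|E_N+E_N^\top+E_N^\top E_N\|_2$.

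Because $G_N$ does not depend on $n$, averaging produces the terms $E_N\widehat S^{o}$, $\widehat S^{o}E_N^\top$, $E_N\widehat S^{o}E_N^\top$ and $\frac1N\sum_n\alpha_nP_n^{o}$. The weights $1/(1+\alpha_n)$ are absorbed by PSD domination: $0\preceq\frac1N\sum_n c_nP_n^{o}\preceq(\sup_n c_n)\widehat S^{o}$ for $c_n\ge 0$. Hence $\|\widehat S-\widehat S^{o}\|_F\le O_p(\|E_N\|_2)\,\|\widehat S^{o}\|_F=o_p((pq)^{-1/2})$. This uses only the second-moment bound $\|\widehat S^{o}\|_F=O_p(N^{-1/2})$ of Lemma~\ref{lem:rates}, with no operator-norm control of $\widehat S^{o}$.

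\textbf{Smaller points.}
\begin{itemize}
\item $pq/N\to\gamma_1\gamma_2$, not $\gamma_1^2\gamma_2^2$. This is harmless, since only a common centring matters.
\item $\|\widetilde Y_n\|_F^2/(pq)$ need not concentrate under (B3), for example under matrix-$t$ radial laws. You do not need it: the radius cancels, and $\|G_N u_n^{o}\|\ge 1-\|E_N\|_2$ gives the Lipschitz control directly.
\item Section~3 gives null distributional invariance, not an operator-norm rate. So (B4) for the $F_0$-reference sample is an extra assumption, which the paper's proof also uses implicitly.
\end{itemize}
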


The results stated above provide essentially the same theoretical guarantees for the angular test that are available for the elliptical test from the previous section, albeit under slightly different assumptions. Roughly speaking, one can see by comparing Assumptions (B1)-(B5) of the previous theorem to Assumptions (A1)-(A5) of Section~\ref{sec:test} that the angular test requires better conditioning of the ground truth covariance $\Sigma$, milder moment assumptions, and an MMLE rate of convergence w.r.t.~the operator norm, as opposed to the Frobenius norm. Although the operator norm rate is strictly speaking slightly stronger (the Frobenius norm rate of (A4) follows from Assumption (B4)~of the previous theorem directly), we believe it requires a comparable amount of work to obtain the two rates, as suggested by the analysis of \citet{franks2026near}. We show in Appendix~\ref{sec:appendix_sim} numerically that the rates for the pseudo-parameters seem to be the same as the ones provided by \citet{franks2026near} under separability, which are actually stronger than what Assumptions (A4) or (B4) impose.

Nonetheless, we recognize that the provided theory does not sufficiently distinguish between the elliptical test of Section~\ref{sec:test} and the angular test of this section, apart from the mild difference in assumptions. Therefore, in the following section, we provide numerical evidence that the angular test empirically robustifies the whitened test w.r.t.~misspecification of the data generating distribution, while retaining essentially the same power.

\section{Empirical Demonstration}\label{sec:empirical}

\subsection{Simulation Studies}\label{sec:sim}

Recall that we write the two factor dimensions as $p$ and $q$, so that an observation is a matrix $X\in\mathbb{R}^{p\times q}$ and the covariance is a tensor $\Sigma \in \R^{p \times q \times p \times q}$ . Sample size is indexed by a multiplier $m>0$ through
\begin{equation}
  N \;=\; \max(4,\lceil m\,pq \rceil),
  \label{eq:n-mult}
\end{equation}
so that $m<1$ leads to $N<pq$, while $m\ge 1$ corresponds to the regular regime of $N\geq pq$. This relates to the aspect ratio of \citet{sung2026testing} via $\gamma:=pq/N=1/m$. The lower bound of 4 observations is used to ensure the existence of the MMLEs when $p = q$ \citep{soloveychik2016gaussian}. 
We examine two distinct simulation scenarios: First, we replicate the moderate-spike design from \citet{sung2026testing} for Gaussian data. Next, we conduct our main simulation, which involves both Gaussian and matrix-$t$ distributed data based on a specific non-separable model \citep{masak2023separable}.

Critical values are obtained by Monte Carlo under separability. For each configuration $(p,q,N)$, we draw $999$ data sets of size $N$ from the separable null $\Sigma=I_{pq}$ under the \emph{normal} reference distribution and take the empirical $(1-\alpha)$-quantile of each statistic as its critical value at $\alpha=0.05$.
By matrix affine equivariance, this single calibration is, for each configuration $(p,q,N)$, valid for all separable covariances $\Sigma_1 \ot \Sigma_2$, so the corresponding quantiles can be tabulated once per configuration and reused. This is a major advantage of the proposed Monte Carlo tests since we can compute the quantiles of the test statistic under $H_0$ once for each $(p,q,N)$ and store them, resulting in fast computation times, especially compared to bootstrap-based procedures.

Calibrating the null from Gaussian draws while the data may originate from a different distribution is deliberate: it is the practitioner's default when the sampling law is unknown, and the proposed tests are intended to remain approximately pivotal across sampling distributions.

\subsubsection{Setup 1: Spiked Alternative Shrunk toward the Identity}
\label{sec:sim-reproduce}

In the first setting, we reproduce the Gaussian moderate-spike design of \citet{sung2026testing}, in which a non-separable target $\Sigma$ \citep[denoted $C_2$ in][]{sung2026testing} is shrunk toward the identity as in
\begin{equation}
  \Sigma_{w} \;=\; w\,\Sigma + (1-w)\,I_{p,q} ,
  \qquad w\in[0,1],
  \label{eq:c2w}
\end{equation}
so that larger $w$ yields stronger non-separability and $w=0$ returns the separable null. We extend the original grid to include $w=0$, which allows us to also capture the empirical level of each test in the same figure. Our results in Figure~\ref{fig:hoff_01} reproduce those of \citet{sung2026testing} and since we estimate the mean, the relevant comparison is to their unknown-mean variant. At $w=0$ all tests attain approximately the nominal level. The \emph{Elliptical} test performs as well as the \emph{Likelihood Ratio} test for $N > pq$ and performs better than the test based on the largest eigenvalue. The proposed \emph{Angular} test (Definition~\ref{def:angular_test}) performs as well as the \emph{Elliptical} test.

\begin{figure}[!t]
    \centering
    \includegraphics[width=1\linewidth]{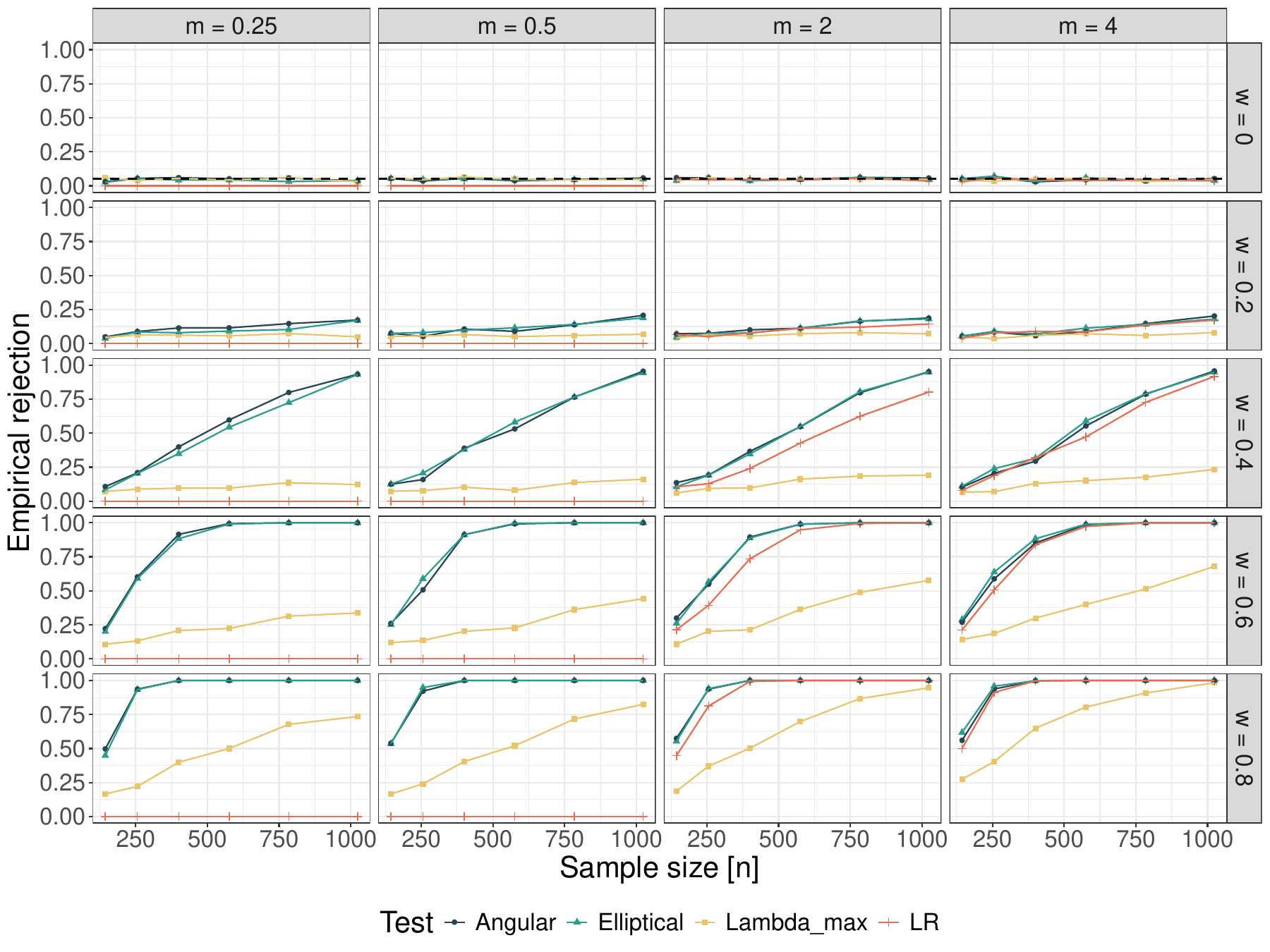}
    \caption{Replication of the results of \citet{sung2026testing} where $m = \gamma^{-1}$, introducing the proposed test as well. \texttt{Lambda\_max} refers to the test based on the largest eigenvalue and corresponds to $\phi_1$ of \citet{sung2026testing} while \texttt{Elliptical} is equivalent to their $\phi_3$.}
    \label{fig:hoff_01}
\end{figure}

\subsubsection{Setup 2: Rank-2 Separable Component Alternative}
\label{sec:sim-family}

Our main synthetic example studies the empirical performance of the proposed tests within a family of targets with a controlled measure of non-separability. Every (non-separable) covariance of matrix-valued data can be decomposed into its \emph{separable components} \citep{masak2023separable} as in
\[
\Sigma = \sum_{r=1}^{pq} \sigma_r A_r \ot B_r,
\]
where $\{A_r\} \subset \R^{p \times p}$ and $\{B_r\} \subset \R^{q \times q}$ are, respectively, orthonormal sets, and $\sigma_1 \geq \sigma_2 \geq \ldots \in \R$ are the \emph{separable component scores}. We generate $\{A_r\}$ and $\{B_r\}$ randomly and fix $\sigma_r=0$ for $r > 2$, obtaining either the separable null (if also $\sigma_2=0$) or the non-separable alternative (if $\sigma_2>0$). We can quantify the departure from separability by the rank-two index
\begin{equation}
  \rho(\Sigma) \;=\; \frac{\sigma_2}{\sqrt{\sigma_1^2+\sigma_2^2}}
          \;\in\;\big[0,\tfrac{1}{\sqrt 2}\big],
  \label{eq:rho}
\end{equation}
which equals to 0 if and only if $\Sigma$ is separable.
For each dimension setting, we generate $1100$ such covariances: $100$ separable ones ($\rho=0$) and $1000$ non-separable ones, with the non-separability index spread to cover the attainable range. The lowest attainable $\rho > 0$ for the non-separable case in our setup increases with increasing dimension $pq$ as a consequence of keeping the conditioning fixed across the dimensions.

For each $\Sigma$, $100$ data sets are drawn, and the rejection decisions are recorded. To display power as a function of $\rho$, the $1100$ covariances, each with $100$ replications, in each panel are grouped into $11$ bins, where the first bin contains the separable targets at $\rho=0$. The per-bin rejection rates are plotted as dots joined by lines. The bin locations differ across dimension settings, reflecting the dimension-dependent range of $\rho$.

Figures~\ref{fig:simulations_setup1} and \ref{fig:simulations_setup1_t} summarize the simulation results and show a grid of panels faceted by dimension across the columns, and by sample-size regime down the rows,
\begin{equation*}
  p=q\in\{5,10,20,50\}, \qquad m\in\{0.1,0.25,0.5,1,2,4\}, \qquad N \;=\; \max(4,\lceil m\,pq \rceil) .
\end{equation*}
Within each panel, the horizontal axis is the non-separability index $\rho$ of \eqref{eq:rho} and the vertical axis is empirical power. Each grid is produced once for Gaussian data and once for heavy-tailed data: i.i.d. observations are generated from either a centered matrix-normal distribution \citep{gupta2018} or a centered matrix-$t$ distribution \citep{thompson2020classification} with covariance $\Sigma$.

The separable bins ($\rho=0$), together with the $w=0$ column of Section~\ref{sec:sim-reproduce}, provide the level check, visualized by the dashed black line at $\alpha = 0.05$: their rejection rates estimate the empirical size, so any distortion induced by the Gaussian calibration under the matrix-$t$ data is visible at the left edge of each panel.

By design, we get exact level control for the Gaussian setting shown in Figure~\ref{fig:simulations_setup1}. In all settings, the power of the tests increases with increasing non-separability $\rho$. 
The power also increases with increasing $m$ and thus $N$. For the non-Gaussian case shown in Figure~\ref{fig:simulations_setup1_t}, only the angular test maintains an approximately nominal size, and it does so increasingly well as the degrees of freedom grow. A mild residual size inflation remains for the heaviest tails and the smallest sample sizes, because the variability of the MMLE whitening under heavy-tailed data is undervalued by the Monte Carlo reference based on Gaussian data.

These results admit a coherent interpretation in the light of the theory developed above. Under Gaussian data (Figure~\ref{fig:simulations_setup1}), the empirical size sits on the nominal level in every panel, as guaranteed by Corollary~\ref{cor:exact_level}. Crucially, the power curves of the angular test are virtually indistinguishable from those of the elliptical test in every panel, confirming that the radial normalization costs essentially nothing, in line with the intuition that the scale information was already removed by the whitening step. Under matrix-$t$ data with Gaussian calibration (Figure~\ref{fig:simulations_setup1_t}), the elliptical test shows severe size inflation, growing as the degrees of freedom decrease. This is precisely the mechanism quantified in Proposition~\ref{prop:oracle-elliptical-radial-sensitivity} in Appendix~\ref{sec:oracle-angular-robustness}: the heavy-tailed radial component displaces the statistic by a distribution-specific amount relative to a Monte Carlo reference cloud whose spread is an order of magnitude smaller. The angular test removes the radial component and hence the displacement, to some extent, and accordingly maintains an approximately nominal size for misspecified distributions. The mild residue for the heaviest tails and smallest sample sizes is attributable to the remaining channel through which the data distribution enters, namely the variability of the estimated MMLE whitening, which is larger under heavy tails than under the Gaussian reference (cf.~Appendix~\ref{sec:oracle-angular-robustness}).

\begin{figure}
    \centering
    \includegraphics[width=1\linewidth]{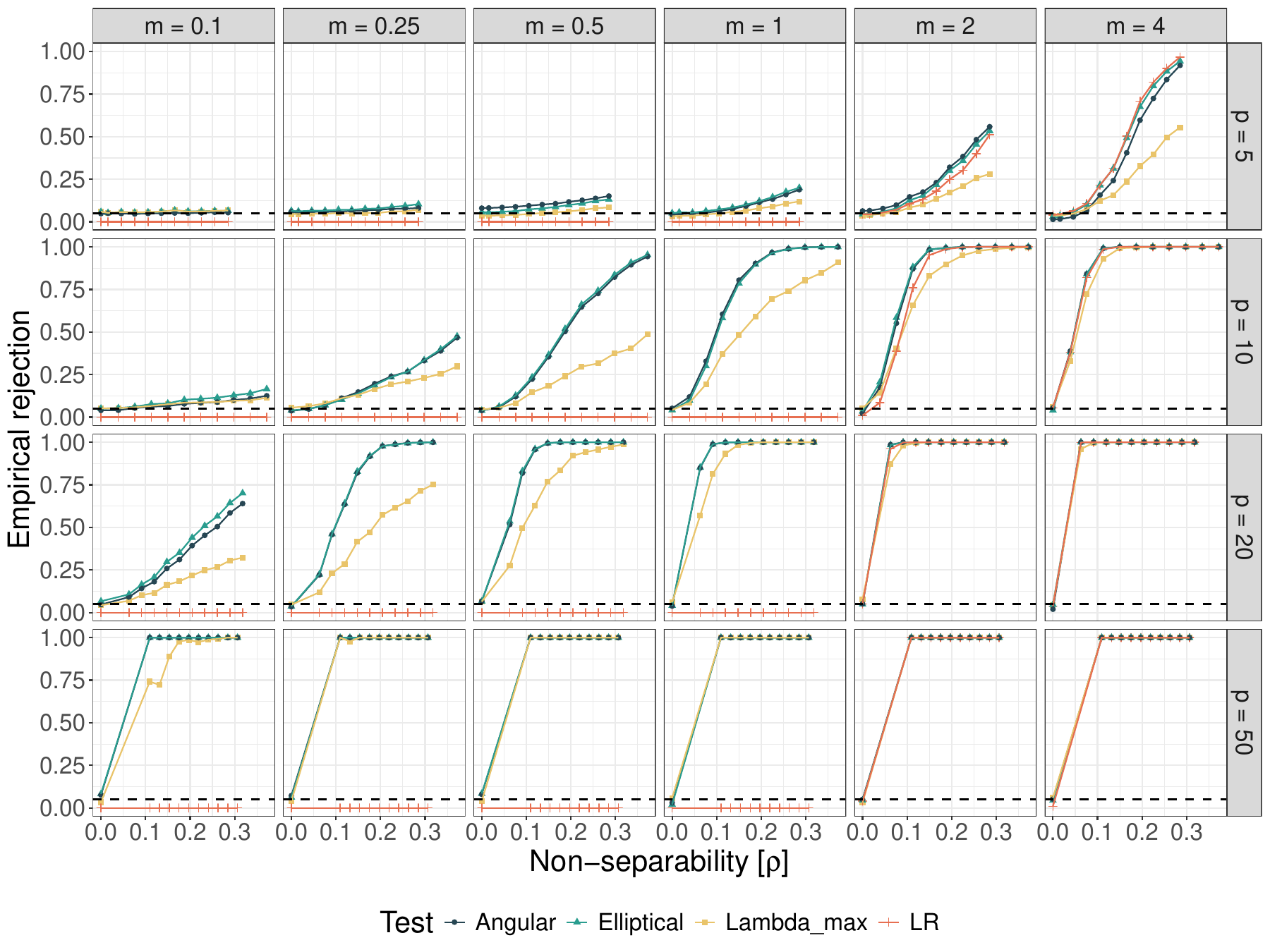}
    \caption{Empirical power vs.~the non-separability index for the Gaussian distribution. The sample size is fixed for each panel at $N = m\;pq$.}
    \label{fig:simulations_setup1}
\end{figure}

\begin{figure}
    \centering
    \includegraphics[width=1\linewidth]{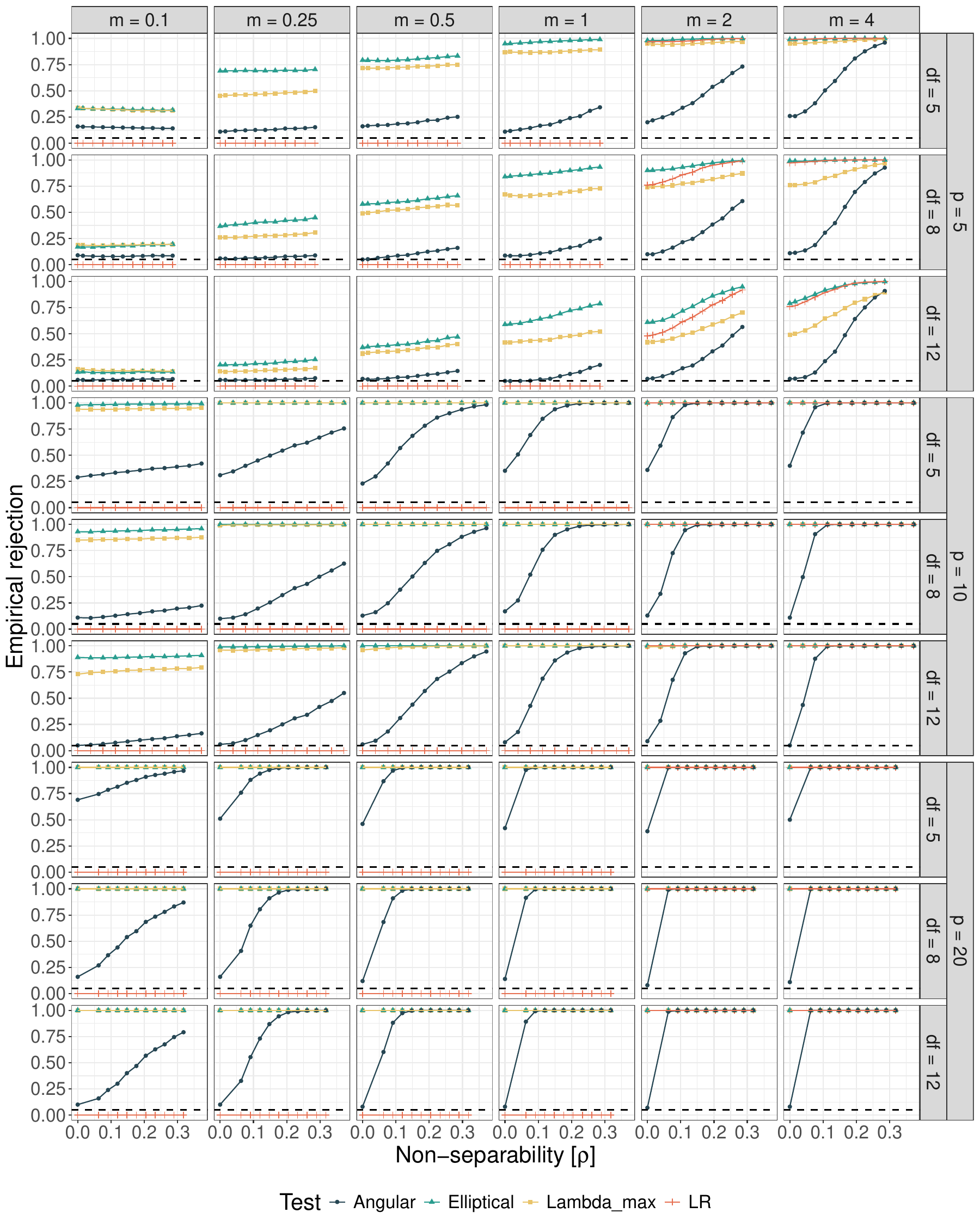}
    \caption{Empirical power vs.~the non-separability index for the $t$-distribution with degrees of freedom $5$, $8$, or $12$. The sample size is fixed for each panel at $N = m\;pq$.}
    \label{fig:simulations_setup1_t}
\end{figure}

\newpage
\subsection{Application: Acoustic Phonetic Data}
\label{sec:phonetics}

As a real-data illustration, we revisit the acoustic phonetic dataset of \citet{pigoli2014distances}.
The data consist of $219$ speech recordings from $23$ speakers of five Romance languages, French, Italian, Portuguese, American Spanish, and Iberian Spanish, each pronouncing the words for the numbers one to ten. The design is unbalanced, as not all digits are available for all speakers, so the number of recordings differs across languages, ranging from $25$ to $60$ (reported as the sample sizes $N_l$ in Table~\ref{tab:phonetics}). Recordings are sampled at $16{,}000$ samples per second. Relevant phonetic features of a recorded sound are most naturally explored in a time--frequency representation, and we consider two such representations of each recording.

\textbf{Log-spectrograms.}
We use the same log-spectrograms as \citet{aston2017tests}. They mapped each recording to a two-dimensional log-spectrogram via a short-time Fourier transform with a Gaussian window of width $10$\,ms. The raw log-spectrograms were non-linearly time-aligned after smoothing to remove phase distortion induced by differences in speaking rate. The resulting log-spectrograms were sampled on an equispaced grid of $81$ frequencies $\times$ $100$ time bins.

\textbf{Mel-frequency cepstral coefficients (MFCCs).}
As a second representation, we summarize each recording by its first $12$ MFCCs, resampled on a common feature grid of size $99$.  The MFCCs were computed using the \texttt{R} package \texttt{tuneR} \citep{tuneR} with the default settings and the frame/window length of 10 ms, which implements the usual sequence of preprocessing steps \citep[see][for details]{sueur2018sound}: short-time spectral analysis of the waveforms, transformation to 40 mel-filterbank energies, logarithmic compression, and a discrete cosine transform applied across the mel-frequency bands. This yields, for each recording, a $12\times 99$ matrix of cepstral features.

For both representations, a datum is a matrix-valued 
observation indexed by a \emph{frequency-type} direction and a binned \emph{temporal} direction. Estimating the full covariance is delicate: in either representation, the number of recordings available within a single language is far smaller than the number of free parameters of an unstructured covariance. A \emph{separable} covariance, which factorizes into a frequency-direction component and a temporal component as in $\Sigma=\Sigma_{1}\ot\Sigma_{2}$, involves dramatically fewer parameters and is therefore an attractive modeling assumption~\citep{pigoli2018statistical}. The purpose of this application is to assess that assumption with the angular test proposed in Section~\ref{sec:angular}. The test is carried out separately for each language $l=1,\dots,5$, and for the two matrix-variate representation of the data.

We apply the angular test to each language and each representation separately, calibrating the null distribution with $999$ Monte Carlo samples. A central motivation for our test is its behavior away from Gaussianity. Acoustic data are well known to be non-Gaussian, and a separability test whose level is controlled only under Gaussianity risks rejection merely because the data-generating distribution departs from the normal model rather than because the covariance is non-separable. This is precisely why \citet{aston2017tests} resorted to a nonparametric bootstrap calibration. Our angular test addresses this concern by design: while exact level control can only be guaranteed under Gaussianity, its nominal level shows markedly better distributional robustness than the alternatives considered in Section~\ref{sec:sim}. Rejections reported below are therefore unlikely to be artifacts of distributional misspecification.

The results are unambiguous. For every language and under both representations, the test statistic computed from the observed data exceeded all $999$ Monte Carlo samples, so that none of the simulated statistics was as extreme as the observed one. This yields the smallest $p$-value attainable with this calibration, namely $\widehat{p} = 1/(999+1) = 0.001$,
for all ten tests (Table~\ref{tab:phonetics}).

\begin{table}[t]
	\centering
	\caption{Monte Carlo $p$-values for the test of separability of the within-language covariance of the log-spectrograms and MFCCs of the five Romance languages, based on $999$ Monte Carlo samples. In every case, the observed test statistic exceeded all Monte Carlo samples. The sample sizes $N_l$ (number of recordings per language) are also reported.}

    \smallskip
	\label{tab:phonetics}
	\begin{tabular}{lccccc}
		\toprule
		Representation & French & Italian & Portuguese & Am.\ Spanish & Ib.\ Spanish \\
		\midrule
		Sample size $N_l$ & $60$ & $50$ & $25$ & $46$ & $38$ \\
		\midrule
        Log-spectrograms & $ 0.001$ & $ 0.001$ & $ 0.001$ & $ 0.001$ & $ 0.001$ \\
		MFCCs            & $ 0.001$ & $ 0.001$ & $ 0.001$ & $ 0.001$ & $ 0.001$ \\
		\bottomrule
	\end{tabular}
\end{table}

These findings strengthen the conclusion of \citet{aston2017tests}, who rejected separability for the log-spectrograms of the same languages using an empirical-bootstrap test, and extend it to the MFCC representation, which, to our knowledge, has not been previously considered for this data set. A further advantage over \citet{aston2017tests} is that the angular test requires no tuning of the number of projection directions: their bootstrap test is applied to a nested family of eigen-projection index sets and can reach different conclusions depending on that choice. For instance, separability is not rejected for French when only the leading directions are retained, and is rejected only once sufficiently many directions are included. The angular test yields a single, tuning-free decision that rejects in every language and representation. The fact that separability is rejected under two quite different time-frequency representations suggests that the non-separability is a genuine feature of the within-language acoustic variation rather than an artifact of any particular representation.

\section{Discussion}

We study tests for covariance separability for high-dimensional matrix-variate data, when the row and column dimensions are proportional to the sample size, as in $pq\asymp N$. Within the elliptical framework, separability can be reformulated as a sphericity problem after whitening by a separable covariance estimate. This converts the original testing problem, which would otherwise require reasoning about an unstructured covariance tensor, into a test for whether the whitened observations have covariance proportional to the identity. The first test we discuss, i.e. the \emph{elliptical test}, uses the sample variance of the whitened eigenvalues as the test statistic. The resulting procedure is computationally aligned with the purpose of separability itself: it avoids forming or storing the full empirical covariance tensor and can be implemented through inner products of the whitened observations.

\newpage
Interestingly, after matrix whitening, several classical sphericity statistics coincide with the \emph{elliptical} testing strategy. In particular, the matrix-whitened versions of John’s, Nagao’s, and Ledoit–Wolf’s statistics reduce to the same test statistic up to constants handled automatically by Monte Carlo calibration. This gives a single test statistic while preserving the intuition behind all these classical procedures. Under the null hypothesis, Monte Carlo calibration yields finite-sample level control when the calibration distribution is correctly specified, and the separable covariance estimator is uniquely defined and matrix affine equivariant. The matrix normal maximum likelihood estimator~\eqref{eq:mmle} satisfies these requirements under mild conditions, making it a natural choice for the whitening step. The test is shown to be consistent in the high-dimensional regime w.r.t.~dense alternative.

The empirical results highlight both the strength and the limitations of the elliptical test.
Specifically, Gaussian calibration of the elliptical test can be unreliable under heavy-tailed matrix-variate distributions. In that case, rejection may reflect merely tail misspecification rather than genuine non-separability.
This motivates the key practical refinement: an \emph{angular} version of the test, which is the procedure we recommend for practical use. By projecting the whitened observations onto the unit sphere, the angular test mitigates the distribution-specific effect of the radial component, thereby making the test more robust to miscalibration at the MC level. Overall, the angular test uses a similar strategy to the elliptical one, just using the standard deviation of the eigenvalues of the spatial-sign covariance of the whitened data as the test statistic.

The simulations show that angular projection indeed substantially improves robustness under heavy-tailed elliptical distributions with misspecified MC calibration, while retaining nearly the same power as the non-angular test under Gaussian data. In particular, for matrix-$t$ data, unlike the elliptical test, the angular test respects the level in almost all settings, with violations in the very heavy-tailed, small-sample regimes. This effect is consistent with the fact that the whitening factors are still estimated from the data, so, for small sample sizes, their variability may remain distribution-dependent. 

Several open questions remain.
Firstly, the finite-sample level guarantees are not tied to the asymptotic regime, but the consistency of the tests is. Extending the power analysis to ultra-high-dimensional settings would be a natural future step. Secondly, the consistency results rely on convergence assumptions for the separable maximum likelihood estimator under alternatives. These assumptions are natural and supported empirically, but the corresponding theoretical guarantees for the pseudo-true separable parameter remain open. Finally, while the angular test reduces sensitivity to radial misspecification within the elliptical family, it does not eliminate all distributional dependence, because the whitening factors are estimated.%

\newpage

\bibliographystyle{plainnat}
\bibliography{refs}

\appendix
\renewcommand{\thesection}{\Alph{section}}
\makeatletter
\def\@seccntformat#1{\ifcsname AppPrefix@#1\endcsname\csname AppPrefix@#1\endcsname\else\csname the#1\endcsname\quad\fi}
\newcommand\AppPrefix@section{Appendix \thesection\quad}
\makeatother

\section{Additional Simulations} \label{appendix:sim}

\subsection{Convergence rates of MMLE}
\label{sec:appendix_sim}
Figure~\ref{fig:mmle_convergence} shows the Frobenius norm rate 
$$\frac{1}{\sqrt{pq}}\|\widehat{\Sigma}_1 \ot\widehat{\Sigma}_2 - \Sigma_1 \ot \Sigma_2 \|_F$$ 
and the operator norm rate 
$$\|\widehat\Sigma_{1}\ot \widehat\Sigma_{2} -\Sigma_{1}\ot \Sigma_{2}\|_{2}$$
for Gaussian data based on the $1000$ non-separable covariances of Section~\ref{sec:empirical}, where $(\Sigma_1,\Sigma_2)$ is the pseudo-true parameter defined in~\eqref{eq:pseudo-parameter}. The separable errors are based on data generated from the best separable rank-one approximation of the non-separable covariances.

These results numerically verify that the rates for the pseudo-parameters are comparable to the ones provided by \citet{franks2026near} under separability, which are actually stronger than what assumptions (A4) and (B4) of Theorems~\ref{thm:consistency} and~\ref{thm:angular_consistency}, respectively, impose.

\begin{figure}[H]
	\centering
	\includegraphics[width=1\linewidth]{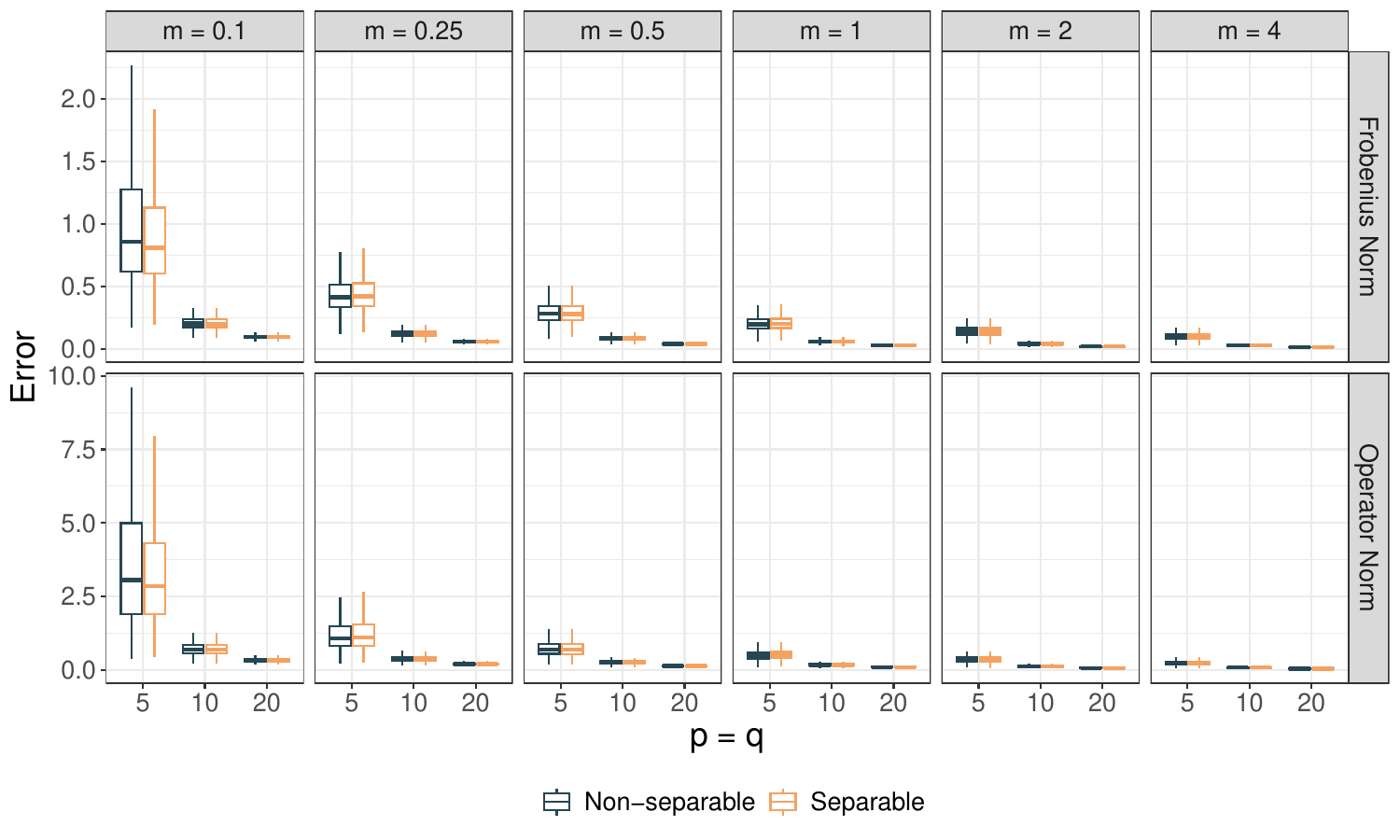}
	\caption{Convergence rates of MMLE to the pseudo-parameters. The box plots do not include points outside of the whiskers.}
	\label{fig:mmle_convergence}
\end{figure}

\subsection{Comparison with the Functional Data Test}
\label{sec:sim_aston}

In Figure~\ref{fig:simulations_setup1_Aston}, we analyze the performance of the projection-based empirical bootstrap test (as per the suggestion of the authors) of \cite{aston2017tests} with $l \in \{1,2,3\}$ projections in both directions, using the same setup as in Section~\ref{sec:sim-family} for $p = q = 10$. While the empirical bootstrap test (EBT) respects the level, it lacks the power of the other tests. This is likely because it is based on covariance estimators \citep[i.e. the partially traced ones of][rather than the MMLEs we opt for]{aston2017tests} that are less efficient in the regime considered here. Another reason is presumably that the EBT does not use the full spectral information available, since it truncates the ranks in both the row and column space to $l$ PCA-defined projections.

\begin{figure}[!h]
    \centering
    \includegraphics[width=1\linewidth]{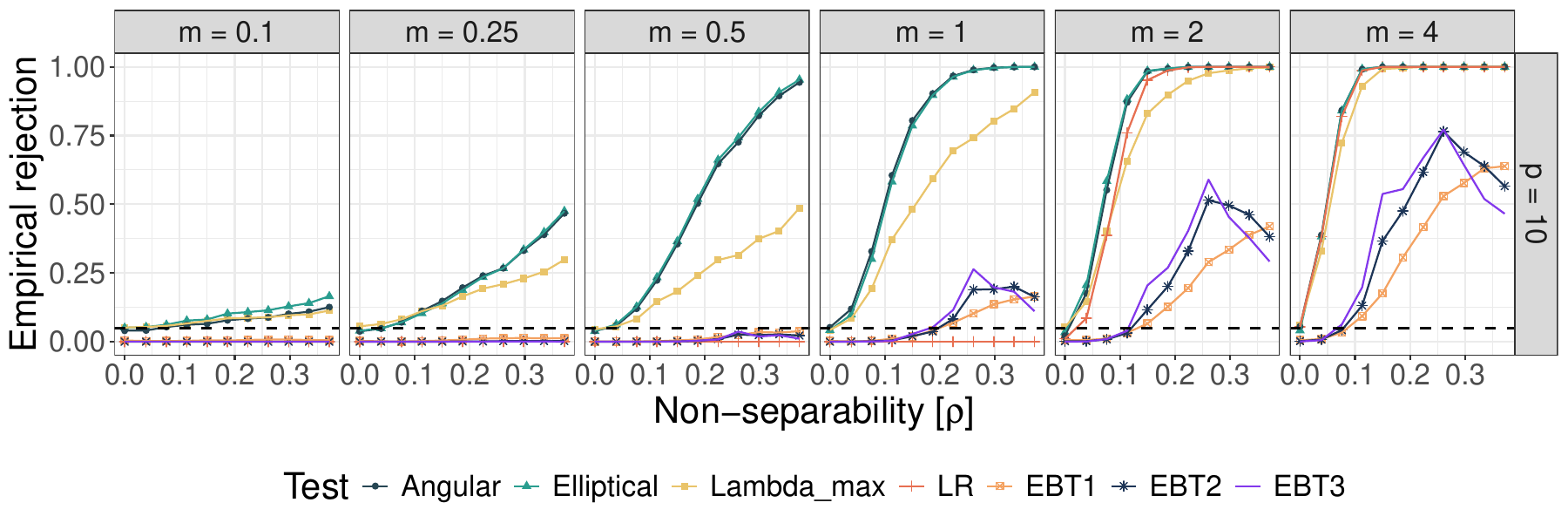}
    \caption{Empirical power vs.~the non-separability index for the Gaussian distribution. The sample size is fixed for each panel at $N = m\;pq$.}
    \label{fig:simulations_setup1_Aston}
\end{figure}

\section{Proofs}

\subsection{Proof of Lemma~\ref{lem:statistics_equal}}

We start by showing that $\tr(\widehat C)=pq$. Before we do that, let us treat the population level first.

\begin{lemma}\label{lem:trace}
    For the whitened population-level covariance
    \[
    C = (\Sigma_1 \ot \Sigma_2)^{-1/2} \Sigma (\Sigma_1 \ot \Sigma_2)^{-1/2},
    \]
    it holds $\tr(C)=pq$.
\end{lemma}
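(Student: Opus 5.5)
The plan is to use the pseudo-parameter fixed-point equations \eqref{eq:pseudo-parameter}, which $(\Sigma_1,\Sigma_2)$ satisfy by definition, and write the trace of $C$ as an expectation. With $X$ a random matrix with covariance $\Sigma$, we have $C = \E[W\otimes W]$ for $W := (\Sigma_1 \ot \Sigma_2)^{-1/2}X$. Hence
\[
\tr(C) = \E\|W\|_F^2 = \E\,\tr\big(W^\top W\big).
\]

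The first step is to identify the operator $(\Sigma_1 \ot \Sigma_2)^{-1/2}$. Since $\Sigma_1,\Sigma_2$ are positive definite, the tensor product of their inverse square roots is positive definite, and its square is $(\Sigma_1\ot\Sigma_2)^{-1}$. This holds because $\ot$ is multiplicative on rank-one elements, and it is easiest to check via the Kronecker representation $\Sigma_2^{-1/2}\otimes_K\Sigma_1^{-1/2}$ together with uniqueness of the positive definite square root. Therefore
\[
(\Sigma_1 \ot \Sigma_2)^{-1/2} = \Sigma_1^{-1/2}\ot\Sigma_2^{-1/2},
\]
and by the vectorization identity in the Remark, $W=\Sigma_1^{-1/2}X\Sigma_2^{-1/2}$.

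Next, we compute
\[
\tr(W^\top W)=\tr\big(\Sigma_2^{-1/2}X^\top\Sigma_1^{-1}X\Sigma_2^{-1/2}\big)=\tr\big(\Sigma_2^{-1}X^\top\Sigma_1^{-1}X\big).
\]
Taking expectations and using linearity of trace and expectation, together with the second equation in \eqref{eq:pseudo-parameter}, $\E(X^\top\Sigma_1^{-1}X)=p\Sigma_2$, gives
\[
\tr(C)=\tr\big(\Sigma_2^{-1}\cdot p\Sigma_2\big)=p\,\tr(I_q)=pq.
\]
We could equally use the first equation, $\E(X\Sigma_2^{-1}X^\top)=q\Sigma_1$, to obtain the same value. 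This is consistent because the scale normalization of the pair cancels in $C$.

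The only real obstacle is bookkeeping. We must justify the square-root factorization of the tensor product, and we must confirm that the expectation in \eqref{eq:pseudo-parameter} is taken under the same law whose covariance is $\Sigma$; this requires zero mean and finite second moments, which are assumed throughout. No separability of $\Sigma$ is needed.

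The sample statement $\tr(\widehat C)=pq$ will then follow by the identical argument with the empirical measure in place of the population law. The MMLE equations \eqref{eq:mmle} are exactly \eqref{eq:pseudo-parameter} with $\E$ replaced by the sample average.
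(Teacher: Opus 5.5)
Your proof is correct and follows the paper's argument: both write $\tr(C)=\E\|W\|_F^2$ for the oracle-whitened matrix and then take the trace of a pseudo-parameter fixed-point equation. The only difference is that you get the whitened identity by direct substitution into the second equation of \eqref{eq:pseudo-parameter}, whereas the paper asserts $\frac{1}{q}\E\,\widetilde Y_1\widetilde Y_1^\top=I_p$ by appealing to invariance of the MLE, so your version is slightly more self-contained.
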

\begin{proof}
Let us denote $\widetilde{Y}_n := \Sigma_1^{-1/2} X_n \Sigma_2^{-1/2}$ the oracle whitened data, where $\Sigma_1,\Sigma_2$ satisfy the population KKT conditions for MMLE, i.e., the fixed-point equations~\eqref{eq:pseudo-parameter}
\[
\frac{1}{q} \mathbb{E} X_1 \Sigma_2^{-1} X_1^\top = \Sigma_1, \qquad   \frac{1}{p} \mathbb{E} X_1^\top \Sigma_1^{-1} X_1 = \Sigma_2.
\]
Note that $C = \cov(\widetilde{Y}_1)$, and the population KKT conditions for MMLE based on $\widetilde{Y}_1$ are
\[
\frac{1}{q} \mathbb{E} \widetilde{Y}_1 \widetilde{Y}_1^\top = I_p,\qquad \frac{1}{p} \mathbb{E} \widetilde{Y}_1^\top \widetilde{Y}_1 = I_q,
\]
since the MMLE based on $\widetilde{Y}_1$ has to yield identity matrices due to the invariance property of the MLE estimators. Using the first equation, we get that $\mathrm{tr}(\mathbb{E} \widetilde{Y}_1 \widetilde{Y}_1^\top)=pq$. At the same time
\[
\mathrm{tr}(\mathbb{E} \widetilde{Y}_1 \widetilde{Y}_1^\top)=\mathbb{E} \mathrm{tr}(\widetilde{Y}_1 \widetilde{Y}_1^\top)= \mathbb{E} \| \widetilde{Y}_1 \|_F^2 = \mathrm{tr}(C),
\]
which proves the claim.
\end{proof}

The proof of $\tr(\widehat C)=pq$ is almost identical, only taking place on the sample level. The KKT conditions for the MMLE can be written using the whitened sample as
\[
\frac{1}{q}\left[ \frac{1}{N} \sum Y_n Y_n^\top \right] = I_p \qquad \& \qquad \frac{1}{p} \left[ \frac{1}{N} \sum Y_n^\top Y_n \right] = I_q,
\]
yielding $\mathrm{tr}( \frac{1}{N} \sum Y_n Y_n^\top )=pq$, but also
\[
\mathrm{tr}\left( \frac{1}{N} \sum Y_n Y_n^\top \right) = \frac{1}{N} \sum \mathrm{tr}(Y_n Y_n^\top) = \frac{1}{N} \sum \|Y_n\|_F^2 = \frac{1}{N} \sum \mathrm{tr}(Y_n \otimes Y_n) = \mathrm{tr}(\widehat{C}).
\]

Nagao's and John's statistics are thus exactly equal, while the Ledoit-Wolf statistic differs from it by the deterministic additive constant. The inner product representation now follows straightforwardly. Starting e.g.~from the Nagao's statistic:
\[
T_N^{(\mathrm{Na})}
=
\frac{1}{pq}\|\widehat C-I_{p,q}\|_F^2
=
\frac{1}{pq}
\left\{
\operatorname{tr}(\widehat C^2)
-2\operatorname{tr}(\widehat C)
+\operatorname{tr}(I_{p,q})
\right\}.
\]
Using \(\operatorname{tr}(\widehat C)=pq\) and
\(\operatorname{tr}(I_{p,q})=pq\), this simplifies to
\[
T_N^{(\mathrm{Na})}
=
\frac{1}{pq}\operatorname{tr}(\widehat C^2)-1.
\]
At the same time, we have
\[
\operatorname{tr}(\widehat C^2)
=
\left\langle \widehat C,\widehat C \right\rangle_F
=
\frac{1}{N^2}
\sum_{n=1}^N
\sum_{n'=1}^N
\left\langle
Y_n\otimes Y_n,\,
Y_{n'}\otimes Y_{n'}
\right\rangle_F .
\]
Finally, for rank-one tensors, we have
\[
\left\langle
Y_n\otimes Y_n,\,
Y_{n'}\otimes Y_{n'}
\right\rangle_F
=
\langle Y_n,Y_{n'}\rangle_F^2,
\]
which completes the proof of Lemma~\ref{lem:statistics_equal}.

\subsection{Proof of Theorem~\ref{thm:law}}

Under the null, $X_n = (\Sigma_1\ot\Sigma_2)^{1/2} Z_n$. Let $\widetilde\Sigma_1\ot\widetilde\Sigma_2$ denotes the separable covariance estimator based on $\{Z_n\}$, while $\widehat\Sigma_1\ot\widehat\Sigma_2$ still denotes the one based on $\{X_n\}$. By equivariance,
\begin{equation}\label{eq:equivar}
    \widehat\Sigma_1\ot\widehat\Sigma_2 = (\Sigma_1\ot\Sigma_2)^{1/2} ( \widetilde\Sigma_1\ot\widetilde\Sigma_2)  (\Sigma_1\ot\Sigma_2)^{1/2} 
\end{equation}

Write $A:=\mat(\widetilde\Sigma_1\ot\widetilde\Sigma_2) \in \R^{pq \times pq}$ and $B := \mat((\Sigma_1\ot\Sigma_2)^{1/2})\in \R^{pq \times pq}$ for short, and define
\[
Q := (BAB)^{-1/2} B A^{1/2}. 
\]
Then
\[
Q Q^\top=(BAB)^{-1/2}BAB (BAB)^{-1/2}=I,
\]
so $Q$ is orthogonal and
\[
(BAB)^{-1/2}B=Q A^{-1/2}.
\]
Therefore, whitening of $x_n = \vec(X_n)$ using the RHS of equation~\eqref{eq:equivar} gives
\[
(B A B)^{-1/2}(x_n) = (B A B)^{-1/2}(B (z_n)) = (Q A^{-1/2})(z_n) = Q(A^{-1/2}(z_n)),
\]
which is just an orthogonal transformation of the whitening of $Z_n$.

We used the matricizations above for simplicity, to be able to use matrix transposes instead of adjoint operators, which would need to be defined. In the rest of the proof, working with tensors is simpler, so let us re-define our short-hands as $A:=\widetilde\Sigma_1\ot\widetilde\Sigma_2$ and $B := (\Sigma_1\ot\Sigma_2)^{1/2}$. Now, we have
\[
    BAB = \big(\Sigma_1^{1/2}\widetilde\Sigma_1\Sigma_1^{1/2}\big) \ot \big(\Sigma_2^{1/2}\widetilde\Sigma_2\Sigma_2^{1/2}\big).
\]
    Using the square-root identity of Kronecker product $(M_1\ot M_2)^{1/2} = (M_1^{1/2}\ot M_2^{1/2})$,
\[
    (BAB)^{-1/2} = \big(\Sigma_1^{1/2}\widetilde\Sigma_1\Sigma_1^{1/2}\big)^{-1/2} \ot \big(\Sigma_2^{1/2}\widetilde\Sigma_2\Sigma_2^{1/2}\big)^{-1/2}, \qquad A^{1/2} = \widetilde\Sigma_1^{1/2} \ot \widetilde\Sigma_2^{1/2}.
\]
    Substituting into the definition of $Q$ and applying the mixed-product property twice more,
\[
    Q = (BAB)^{-1/2}BA^{1/2} = Q_1 \ot Q_2, \quad\text{where}\quad Q_i := \big(\Sigma_i^{1/2}\widetilde\Sigma_i\Sigma_i^{1/2}\big)^{-1/2}\Sigma_i^{1/2}\widetilde\Sigma_i^{1/2}, \quad i=1,2.
\]
Therefore, $Q_1Q_1^\top=I_p$ and $Q_2Q_2^\top=I_q$ are orthogonal matrices.

Hence, the whitening $\widetilde X_n$ of $X_n$ is just an orthogonal transformation of the whitening $\widetilde Z_n$ of $Z_n$, i.e.~$\widetilde X_n=Q_1\widetilde Z_nQ_2^\top$, $n=1,\dots,N$. Thus, for any orthogonally invariant test statistic $T_N$
is $T_N(\{\tilde X_n\})\sim T_N(\{\tilde Z_n^\star\})$, where $\tilde X_n$ and $\tilde Z_n$ are the whitenings of $X_n$ and $Z_n$, respectively, and $\tilde Z_n^\star\sim \tilde Z_n$.

\paragraph{Proof of Corollary~\ref{cor:cor_of_thm2}
} Following the notation of the proof of Theorem \ref{thm:law}, the empirical covariance of the whitened $\{X_n\}$ is just an orthogonal transformation of the empirical covariance of the whitened $\{Z_n\}$, specifically
\begin{equation}\label{eq:invariance}
\frac{1}{N} \sum (BAB)^{-1/2}(x_n) \otimes (BAB)^{-1/2}x_n
= Q \left(\frac{1}{N} \sum A^{-1/2}(z_n) \otimes A^{-1/2}z_n\right) Q^\top.
\end{equation}
Hence the two whitened covariances have the same eigenvalues, and so we have $T_N(\{X_n\}) \sim T_N(\{Z_n^\star\})$ for the spectral statistic, with $\{Z_n^\star\}$ generated from the distribution of $\{Z_n\}$. 

\subsection{Proof of Proposition~\ref{prop:MMLE}}

Firstly, under the sample size condition, the MMLE exists and is uniquely defined (up to the usual scale indeterminacy, which however vanishes in the product) almost surely by \citet{drton2021}. Secondly, the matrix equivariance of of the MMLE follows easily from the outer product structure and invariance of maximum likelihood estimators; for the proof see e.g.~\citet[choosing $h=n$ in Lemma 3.0.1]{MMCD} and also \citet{eaton1989group}.

On the other hand, the partial tracing or the best separable approximation estimators are not matrix equivariant, since they are tied to the ambient Euclidean coordinates. These estimators commute with orthogonal changes of bases, but not with general invertible linear transformations.

We first show this for partial tracing for which
\[
\widehat\Sigma_1 = \frac{1}{qN}\sum_{n=1}^N X_nX_n^\top \qquad\& \qquad \widehat\Sigma_2 = \frac{1}{pN}\sum_{n=1}^N X_n^\top X_n .
\]
Under a general separable linear transformation $\widetilde X_n := (A \ot B)X_n$, we obtain
\[
\widetilde\Sigma_1 = \frac{1}{qN}\sum_{n=1}^N AX_n B^\top BX_n^\top A^\top \qquad\& \qquad \widetilde\Sigma_2 = \frac{1}{pN}\sum_{n=1}^N B^\top X_n^\top A A^\top X_n B,
\]
which are generally not equal (even up to scale) to
\[
A\left(\frac{1}{qN}\sum_{n=1}^N X_n X_n^\top \right) A^\top \qquad\& \qquad B^\top\left(\frac{1}{pN}\sum_{n=1}^N  X_n^\top X_n \right) B.
\]

For the best separable approximation estimator
\[
(\widehat\Sigma_1,\widehat\Sigma_2) = \argmin_{\Sigma_1,\Sigma_2} \| \widehat{\Sigma} - \Sigma_1 \ot \Sigma_2 \|_F^2,
\]
the situation is similar. Frobenius norm is invariant under orthogonal transformations only. Specifically, forming the empirical covariance with a general separable linear transformation $\widetilde X_n := (A \ot B)X_n$, we see immediately that
\[
\| (A \ot B) \widehat{\Sigma} (A\ot B)^\top - \Sigma_1 \ot \Sigma_2 \|_F^2 \neq \| \widehat{\Sigma} - (A\ot B)(\Sigma_1 \ot \Sigma_2) (A \ot B)^\top \|_F^2
\]
in general, where $\widehat{\Sigma}$ denotes the empirical covariance of $\{X_n\}$. The first term in left hand side of the previous inequality is precisely the empirical covariance of $\{\widetilde X_n\}$, since
\[
(A\ot B) X \otimes (A\ot B) X = [(A\ot B) \ot (A\ot B)](X \otimes X) = (A\ot B) (X \otimes X) (A\ot B)^\top,
\]
which follows immediately from the rank one element definition of $\ot$ in Section~\ref{sec:background}.

\subsection{Proof of Theorem~\ref{thm:consistency}}

Let us first recap the setup, and introduce some new notation that will be useful for the proof. Under the assumptions, $X_1,\ldots,X_N \stackrel{i.i.d}{\sim}(0,\Sigma)$, where the distribution is not necessarily Gaussian, and $Z_1,\ldots,Z_N \stackrel{i.i.d}{\sim}(0,I_{p,q})$. We work with separably-whitened data $Y_n = (\widehat\Sigma_1 \ot \widehat\Sigma_2)^{-1/2} X_n$ with the population-level covariance
\[
C  = \cov(\widetilde{Y}_1) := (\Sigma_1 \ot \Sigma_2)^{-1/2} \Sigma (\Sigma_1 \ot \Sigma_2)^{-1/2}.
\]
For the proof, we will also need the oracle version $\widetilde Y_n = (\Sigma_1 \ot \Sigma_2)^{-1/2} X_n$. We will denote by $\widehat{C}$ the empirical covariance of $\{Y_n\}$, and $\widetilde{C}$ the empirical covariance of $\{\widetilde Y_n\}$, i.e.
\[
\begin{split}
   \widehat C &= (\widehat\Sigma_1 \ot \widehat\Sigma_2)^{-1/2}\left[ \frac{1}{N}\sum_{n=1}^N X_n \otimes X_n\right] (\widehat\Sigma_1 \ot \widehat\Sigma_2)^{-1/2},\\
   \widetilde C &= (\Sigma_1 \ot \Sigma_2)^{-1/2} \left[ \frac{1}{N}\sum_{n=1}^N X_n \otimes X_n\right] (\Sigma_1 \ot \Sigma_2)^{-1/2}.
\end{split}
\]
Note that the previous expressions for the covariance only utilize linearity of the operators in conjunction with the following tensor product calculations:
\[
\begin{split}
A X B^\top \otimes CXD^\top &= (A \ot B) X \otimes (C \ot D) X = [(A\ot B) \ot (C \ot D)](X \otimes X)\\
& = (A \ot B) (X \otimes X) (C \ot D)^\top.
\end{split}
\]

Recall that due to Lemma~\ref{lem:statistics_equal} it is
\[
T_N = \frac{1}{pq}\|\widehat{C} - I_{p,q}\|_F^2 = \frac{1}{pq} \tr(\widehat{C}^2) -1
\]
in the case of John and Nagao, while for Ledoit-Wolf one removes $pq/N$ from the statistic.

Now, Assumption (A5), together with (A2), guarantees dense departure from sphericity on the whitened level. Specifically,
\[
\begin{split}
\widetilde{\Delta}_N :&= \frac{1}{\sqrt{pq}} \| C - I_{p,q} \|_F = \frac{1}{\sqrt{pq}} \left\| (\Sigma_1 \ot \Sigma_2)^{-1/2} (\Sigma - \Sigma_1 \ot \Sigma_2) (\Sigma_1 \ot \Sigma_2)^{-1/2} \right\|_F\\
&\geq \frac{\frac{1}{\sqrt{pq}}\|\Sigma - \Sigma_1 \ot \Sigma_2 \|_F}{\lambda_{\max}(\Sigma_1 \ot \Sigma_2)} \geq \frac{\Delta_0}{\lambda_{\max}(\Sigma_1) \lambda_{\max}(\Sigma_2)} =: \widetilde{\Delta}_0 > 0.
\end{split}
\]

If we show that $T_N - \widetilde{\Delta}_N^2 \pto0$, consistency is proven. The zero limit is pertinent to the Ledoit-Wolf statistic, while the limit will be $\gamma_1\gamma_2$ for John's and Nagao's statistic. However, constants do not matter in the Monte Carlo scheme, rather the existence of separation in important. The Monte Carlo test calibrates for the variability of $T_N$ calculated under the null around either zero or $\gamma_1\gamma_2$ (depending on the statistic used) with the fluctuation vanishing with increasing $N$, while $T_N$ calculated on the data under the alternative is separated from either zero or $\gamma_1\gamma_2$ by at least $\widetilde{\Delta}_0^2$, yielding consistency.

We start by introducing the oracle:
\[
T_N - \widetilde{\Delta}_N^2 = \frac{1}{pq} \tr(\widehat{C}^2-C^2)  = \frac{1}{pq}\tr(\widehat{C}^2-\widetilde{C}^2) + \frac{1}{pq}\tr(\widetilde{C}^2-C^2) =: (I) + (II).
\]
Let us deal with the term $(I)$ first:
\[
(I) \leq \frac{1}{\sqrt{pq}}\|\widehat{C} - \widetilde{C}\|_F \frac{1}{\sqrt{pq}}\left[ \| \widehat{C}\|_F + \|\widetilde{C}\|_F \right].
\]
Both terms in the brackets are $\| \Sigma \|_F+\mathcal{O}_p(1)$, i.e.~they are $\mathcal{O}_p(1)$ after the multiplication by $(pq)^{-1/2}$. As for $\|\widehat{C} - \widetilde{C}\|_F$, notice we can write
\[
\widehat{C} - \widetilde{C} = A S A - BSB
\]
where $A:=(\widehat\Sigma_1 \ot \widehat\Sigma_2)^{-1/2}$ and $B := (\Sigma_1 \ot \Sigma_2)^{-1/2}$. Re-parametrizing via $A=B+D$, we can write
\[
\widehat{C} - \widetilde{C} = DSB+BSD+DSD,
\]
and thus
\[
\begin{split}
    \|\widehat{C} - \widetilde{C}\|_F \leq &\| (\widehat\Sigma_1 \ot \widehat\Sigma_2)^{-1/2} - (\Sigma_1 \ot \Sigma_2)^{-1/2} \|_F \| \widehat{\Sigma}_N \|_{\text{op}} \| (\Sigma_1 \ot \Sigma_2)^{-1/2} \|_{\text{op}} +\\
    &\| (\Sigma_1 \ot \Sigma_2)^{-1/2} \|_{\text{op}} \| \widehat{\Sigma}_N \|_{\text{op}} \| (\widehat\Sigma_1 \ot \widehat\Sigma_2)^{-1/2} - (\Sigma_1 \ot \Sigma_2)^{-1/2} \|_F +  \\
    &\| (\widehat\Sigma_1 \ot \widehat\Sigma_2)^{-1/2} - (\Sigma_1 \ot \Sigma_2)^{-1/2} \|_F^2 \| \widehat{\Sigma}_N \|_{\text{op}}.
\end{split}
\]
Since $\| \widehat{\Sigma}_N \|_{\text{op}}$ and $\| (\Sigma_1 \ot \Sigma_2)^{-1/2} \|_{\text{op}}$ are both $\mathcal{O}_p(1)$, while the Frobenius norm term is $o_p(1)$ after multiplication with the remaining $(pq)^{-1/2}$ as per Assumption (A4), we have overall that $(I) = o_p(1)$.

For the term $(II)$, we will show it is equal to $\frac{pq}{N} + o_p(1)$ via mean-variance calculations. At this point, it is beneficial to vectorize: let $x_n = \vec(X_n)$ and $z_n = \vec(Z_n)$ both in $R^{pq}$, i.e.~$x_n = \mat(\Sigma)^{1/2}\, z_n$, where $\mat(\Sigma) \in \R^{pq \times pq}$. Denote
\[
\Gamma := \mat(\Sigma)^{1/2} (\Sigma_2 \otimes_K \Sigma_1)^{-1}\mat(\Sigma)^{1/2},
\]
which is just a matrix representation of $\Gamma_{\mathrm{tensor}} := \Sigma^{1/2} (\Sigma_1 \ot \Sigma_2)^{-1}\Sigma^{1/2}$, so in particular traces of their powers are the same \citep{golub2013matrix}. And since it is straightforward to verify that $\tr(\Gamma_\mathrm{tensor}^2)=\tr(C^2)$, it is also $\tr(\Gamma^2)=\tr(C^2)$. Denoting $A_{n,m} := z_n^\top \Gamma z_m$, we can write
\[
\begin{split}
\mathrm{tr}(\widetilde C^2) &= \langle \widetilde C, \widetilde C \rangle = \frac{1}{N^2} \sum_{n=1}^N \sum_{m=1}^N \langle {\Sigma}_1^{-1/2} X_n {\Sigma}_2^{-1/2}, {\Sigma}_1^{-1/2} X_m {\Sigma}_2^{-1/2} \rangle \\
&= \frac{1}{N^2} \sum_{n=1}^N \sum_{m=1}^N \langle ({\Sigma}_1 \ot {\Sigma}_2)^{-1/2} X_n , ({\Sigma}_1 \ot {\Sigma}_2)^{-1/2} X_m \rangle\\
&= \frac{1}{N^2} \sum_{n=1}^N \sum_{m=1}^N \langle ({\Sigma}_2 \otimes_K {\Sigma}_1)^{-1/2} x_n , ({\Sigma}_2 \otimes_K {\Sigma}_1)^{-1/2} x_m \rangle \\
&= \frac{1}{N^2} \sum_{n=1}^N \sum_{m=1}^N \big(x_n^\top (\Sigma_2 \otimes_K \Sigma_1)^{-1} x_m\big)^2= \frac{1}{N^2} \sum_{n=1}^N \sum_{m=1}^N A_{n,m}^2.
\end{split}
\]

We begin with the mean calculations. For $m\neq n$, $z_n$ and $z_m$ are independent and thus
\[
\mathbb{E}A_{n,m}^2 = \mathbb{E} \mathbb{E} [z_n^\top \Gamma z_m z_m^\top \Gamma z_n| z_n] = \mathbb{E}z_n^\top \Gamma \mathbb{E}[z_m z_m^\top|z_n] \Gamma z_n,
\]
where the conditional expectation in the middle is just $\mathrm{Cov}(z_m)=I$. Thus the contribution of the off-diagonal entries is
\[
\mathbb{E}A_{n,m}^2 = \mathbb{E}z_n^\top \Gamma^2 z_n = \mathrm{tr}(\Gamma^2)=\mathrm{tr}(C^2).
\]
For the diagonal entries, we will use an explicit fourth moment bound, so let $\mu_4:=\E z_{1,1}^4$. In $z_n^\top \Gamma z_n=\sum_{i,j} \gamma_{i,j} z_{n,i}z_{n,j}$, we will drop the index $n$ for better readability. Then $\E A_{n,n}^2 = \sum_{i,j,k,l} \gamma_{i,j} \gamma_{k,l} \E z_i z_j z_k z_l$.

Now it is time to carry out standard pairing argument. The expectation $\E z_i z_j z_k z_l$ zeroes out unless either $i=j=k=l$ (resulting in the 4th moment $\mu_4$) or two pairs of the indices are the same (resulting in the variance squared, which is equal to one here).
\begin{itemize}
  \item When $i=j$ and $k=l$, the sum reduces to $\tr(\Gamma)^2$, which is again just $\tr{C}^2$.
  \item When $i=k$ and $j=l$ (or alternatively $i=l$ and $j=k$), the sum reduces to $\tr(\Gamma^2) = \tr(C^2)$, which is added twice due to the alternative option.
  \item When $i=j=k=l$, the sum reduced to $\mu_4 \sum_i \gamma_{i,i}^2 \leq \mu_4 \tr(\Gamma^2)$ which is again just $\mu_4 \tr(C^2)$.
\end{itemize}
We have to remember that the $i=j=k=l$ case is a sub-case of the other three cases, so in total we have
\[
\E A_{n,n}^2 = \tr(C)^2 + 2 \tr(C^2)+(\mu_4-3)\sum_i \gamma_{i,i}^2.
\]
Overall, we thus have
\[
\E\, \mathrm{tr}(\widetilde C^2) = \frac{N(N-1)}{N^2} \tr(C^2) + \frac{N}{N^2}\left[ \tr(C)^2 + 2 \tr(C^2) + (\mu_4 - 3) \sum_i \gamma_{i,i}^2 \right].
\]
Since $\tr(C)=pq$ by Lemma~\ref{lem:trace}, we have
\[
\frac{1}{pq}\E \mathrm{tr}(\widetilde C^2) = \frac{1}{pq}\tr(C^2)+\frac{pq}{N} + \mathcal{O}\left(\frac{1}{N}\right)
\]
and so the mean of $(II)$ has the desired limiting behavior.

Now we show that the variance vanishes by similar calculations. Firstly note that
\[
\mathrm{Var}(\mathrm{tr}(\widetilde C^2)) = \frac{1}{N^4} \sum_{m,n,m',n'} \cov(A_{n,m}^2,A_{n',m'}^2).
\]
For disjoint pairs of indices, the covariance is zero. However, we have to again separate three cases when $(m,n) \cap(m',n')\neq\emptyset$.

Case 1. For $n=m=n'=m'$, we have
\[
\cov(A_{n,m}^2,A_{n',m'}^2) \leq \E A_{n,m}^4 \leq \|\Gamma\|_{\text{op}}^4\E \|z_n\|_F^8 = \|C\|_\op^4 \mathcal{O}((pq)^4),
\]
but there is just $N$ such entries.

Case 2. For the perfectly off-diagonal entries, where $n=n'\neq m=m'$, by the Cauchy-Schwarz inequality
\[
|\cov(A_{n,m}^2,A_{n',m'}^2)| \leq \left( \E A_{n,m}^4 \E A_{n',m'}^4 \right)^{1/2} \leq \max_{n,m} \E A_{n,m}^4.
\]
Thus if we manage to show that $\max_{n,m} \E A_{n,m}^4 = \mathcal{O}(p^2q^2)$, we will have the vanishing variance $\mathrm{var}\big(\mathrm{tr}(\frac{1}{pq}\widetilde C^2)\big) = \mathcal{O}(\frac{1}{N})$ in this case.

To obtain $\max_{n,m} \E A_{n,m}^4 = \mathcal{O}(p^2q^2)$, we will use Rosenthal's bound \citep{rosenthal1970,merlevede2013rosenthal} for the fourth moment with an additional conditioning step:
\[
\max_{n,m} \E A_{n,m}^4 = \max_{n,m} \E (z_n^\top \Gamma z_m)^4 = \max_{n,m} \E \E[ (z_n^\top \Gamma z_m)^4|z_m].
\]
Given $z_m$, we can write $\E (z_n^\top \Gamma z_m)^4 = \E (\sum_k z_{n,k} u_k)^4$ where $u = \Gamma z_m$ is fixed. Rosenthal's bound be applied here to obtain
\[
\E[ (z_n^\top \Gamma z_m)^4|z_m] \leq K_4 \sum_{k} \E(z_{n,k} u_k)^4 + \left( \sum_k \E(z_{n,k} u_k)^2 \right)^2,
\]
where $K_4$ is a constant independent of everything (depending only on the power, which is fixed to four). But $u_k$ is fixed in this conditional calculation, so $\E(z_{n,k} u_k)^4=u_k^4 \mu_4$ $\E(z_{n,k} u_k)^2 = u_k^2$. Hence we have
\[
\E[ (z_n^\top \Gamma z_m)^4|z_m] \leq K_4 \mu_4 \sum_{k} u_k^4 + (\sum_k u_k^2)^2 \leq (K_4 \mu_4 + 1) \|u\|_2^4
\]
Let us plug in back for $u$ and integrate out the remaining randomness:
\[
\E (z_n^\top \Gamma z_m)^4 \leq (K_4 \mu_4 + 1) \|\Gamma z_m\|_2^4 \leq (K_4 \mu_4 + 1) \|\Gamma\|_2^4 \E \|z_m\|_2^4.
\]
Since $\Gamma$ and $C$ have the same non-zero eigenvalues, $\|\Gamma\|_2=\|C\|_2$ and hence
\[
\E \|z_m\|_2^4= \E (\sum_k z_{m,k}^2)^2=\sum_k\sum_l\E z_{m,k}^2 z_{m,l}^2=pq(pq-1)+\mu_4pq=\mathcal{O}(p^2q^2).
\]
In the perfectly off-diagonal case, there is at most
\[
\Big|\Big\{(m,n,m',n'): \{m,n\} \cap \{m',n'\} \neq \emptyset\Big\}\Big| \leq 4 N^3,
\]
entries.

Case 3. In the intermediate case, where exactly three of the indices are equal, assume w.l.o.g.~that $n\neq m=n'=m'$. Then we have
\[
\cov(A_{n,m}^2,A_{n',m'}^2) \leq \E (z_n^\top \Gamma z_m)^2(z_m^\top \Gamma z_m)^2,
\]
which, conditioning on $z_m$ is equal to
\[
\E z_m^\top \Gamma \underbrace{\E z_n z_n^\top}_{=I_{p,q}} \Gamma z_m (z_m^\top \Gamma z_m)^2.
\]
Since $\Gamma^2 \preceq \|\Gamma\|_\op \Gamma$, we have in turn
\[
\cov(A_{n,m}^2,A_{n',m'}^2) \leq \|\Gamma\|_\op \E (z_m^\top \Gamma z_m)^3 \leq \|\Gamma\|_\op^4 \E \|z_m\|_F^6 = \mathcal{O}(p^3q^3),
\]
and there is just $\mathcal{O}(N^2)$ such entries.

Overall, we have
\[
\mathrm{Var}(\mathrm{tr}(\widetilde C^2)) = \mathcal{O}\left(\frac{p^4q^4}{N^3} + \frac{p^3q^3}{N^2} + \frac{p^2q^2}{N}\right),
\]
where the 1st summand comes from Case 1, the 2nd summand from Case 3, and the 3rd summand from Case 2. Dividing this rate by $(pq)^2$, since we are interested in the variance of $\frac{1}{pq}\mathrm{tr}(\widetilde C^2)$, and noting that $\mathcal{O}(pq)=\mathcal{O}(N)$
finishes the variance calculations.

Putting the mean and variance behavior together using a standard Chebyshev-type argument yields
\[
T_N = \widetilde\Delta_N^2 + \frac{pq}{N} + o_p(1),
\]
in the case of John's or Nagao's statistic, while
\[
T_N = \widetilde\Delta_N^2 + o_p(1),
\]
in the case of the Ledoit-Wolf one. Notice that the probability of ties in the simulated statistics is zero, which is why we are allowed to keep the number of Monte Carlo runs $M$ fixed. This yields consistency and the proof is complete.

\subsection{Proof of Proposition~\ref{prop:angular_level}}

Similarly to the whitened test, the goal is again to show that $T_N^{(A)}(\{X_n\}) \sim T_N^{(A)}(\{z_n\})$ and hence the Monte Carlo simulation correctly calibrates the test up to the approximation error.

Consider the spherical projection operator $\Pi(y) = y/\|y\|$. Since orthogonal transformations preserve Frobenius norms, the spherical projections does not break equivariance since
\[
\Pi(Qy) = \frac{Q y}{\|Qy\|}=Q \frac{y}{\|y\|} = Q\, \Pi(y).
\]

Using the same notation as in the proof of Theorem~\ref{thm:law}, we have
\[
\vec\big((\mmle)^{-1/2} X_n \big) = Q(A^{-1/2}z_n).
\]
Applying the spherical projection operator to both sides of the previous equation, we obtain
\[
u_n := \vec(U_n) = \Pi(Q A^{-1/2}z_n) = Q \Pi(A^{-1/2}z_n) =: Q v_n,
\]
where $U_n$ is from Definition~\ref{def:angular_test}. Thus the empirical covariance based on the matrix-whitened and sphere-projected sample $\{u_n\}$ obtained from $\{X_n\}$ has the same eigenvalues as the empirical covariance based on the matrix-whitened and sphere-projected sample $\{v_n\}$ obtained from $\{z_n\}$. Since the angular test only depends on the eigenvalues of such empirical covariances, we have $T_N^{(A)}(\{X_n\})\sim T_N^{(A)}(\{z_n^\star\})$, where $\{z_n^\star\}$ generated from the distribution of $\{z_n\}$.

\subsection{Proof of Lemma~\ref{lem:angular_computation}}

The proof is straightforward. By definition:
\[
T_N := T_N^{(A)} = pq \left\| \widehat{S} - \frac{1}{pq} I_{p,q}\right\|_F^2,
\]
where
\[
\widehat{S} = \frac{1}{N}\sum_{n=1}^N U_n \otimes U_n \quad \text{and} \quad U_n = \frac{(\widehat{\Sigma}_1 \ot \widehat{\Sigma}_2)^{-1/2}X_n}{\| (\widehat{\Sigma}_1 \ot \widehat{\Sigma}_2)^{-1/2}X_n \|_F}.
\]
Because \(\|U_n\|_F=1\), we have
\[
\operatorname{tr}(\widehat S)
=
\frac{1}{N}\sum_{n=1}^N \operatorname{tr}(U_n\otimes U_n)
=
\frac{1}{N}\sum_{n=1}^N \|U_n\|_F^2
=
1.
\]
Therefore,
\[
\begin{aligned}
T_N
&=
pq\left\|
\widehat S-\frac{1}{pq}I_{p,q}
\right\|_F^2 =
pq\left\{
\operatorname{tr}(\widehat S^2)
-\frac{2}{pq}\operatorname{tr}(\widehat S)
+\frac{1}{p^2q^2}\operatorname{tr}(I_{p,q})
\right\} \\
&=
pq\left\{
\operatorname{tr}(\widehat S^2)
-\frac{2}{pq}
+\frac{1}{pq}
\right\} =
pq\,\operatorname{tr}(\widehat S^2)-1.
\end{aligned}
\]
Moreover,
\[
\begin{aligned}
\operatorname{tr}(\widehat S^2)
&=
\langle \widehat S,\widehat S\rangle_F \\
&=
\frac{1}{N^2}
\sum_{n=1}^N
\sum_{n'=1}^N
\left\langle
U_n\otimes U_n,\,
U_{n'}\otimes U_{n'}
\right\rangle_F \\
&=
\frac{1}{N^2}
\sum_{n=1}^N
\sum_{n'=1}^N
\langle U_n,U_{n'}\rangle_F^2,
\end{aligned}
\]
which completes the proof.

\subsection{Proof of Proposition~\ref{prop:relevant_hypothesis}}

Because \(Z\) is spherical, we may write \(Z=RV\), where \(V\) is uniform on the unit sphere \(\mathbb S^{d-1}\) and \(R\ge 0\) is independent of \(V\). The radial part cancels under normalization, so
\[
	\frac{Y}{\|Y\|_F}
	\sim
	\frac{C^{1/2}V}{\langle V, C V\rangle^{1/2}}.
\]
Consider the matrix-vector version of the previous formula, i.e.
\[
	\frac{y}{\|y\|}
	\sim
	\frac{\mat(C)^{1/2}v}{\sqrt{v^\top \mat(C) v}},
\]
and the eigendecomposition $\mat(C) = Q \Lambda Q^\top$. We have
\[
\mat(S)
	=
	\E\left[
	\frac{\mat(C)^{1/2}vv^\top \mat(C)^{1/2}}{v^\top \mat(C) v}
	\right]
    = \E\left[ \frac{Q \Lambda^{1/2}Q^\top v v^\top Q \Lambda^{1/2}Q^\top}{v^\top Q \Lambda Q^\top v} \right].
\]
Since $V$ is uniform and thus rotation-invariant, we can replace $Q^\top v$ by $v$ everywhere to obtain
\[
\mat(S) = Q \E\left[ \frac{\Lambda^{1/2}v v^\top \Lambda^{1/2}}{v^\top \Lambda v} \right] Q^\top.
\]
It is thus enough to deal with the diagonal case $\mat(C)=\Lambda$ for which
\[
\mat(S) = \E\left[ \frac{\Lambda^{1/2}v v^\top \Lambda^{1/2}}{\sum_{k=1}^{pq} \lambda_k v_k^2} \right].
\]
By symmetry under sign changes of the coordinates of $V$, the off-diagonal entries vanish. Thus $S$ must be diagonal:
	\[
	\mat(S)=\mathrm{diag}(s_1,\dots,s_{pq})
	\quad\text{with}\quad
	s_j
	=
	\E\!\left[
	\frac{\lambda_j v_j^2}{\sum_{k=1}^d \lambda_k v_k^2}
	\right].
	\]
	
We claim that for \(i\neq j\),
	\[
	(\lambda_i-\lambda_j)(s_i-s_j)\ge 0,
	\]
	with strict inequality whenever \(\lambda_i\neq \lambda_j\).
Indeed, write
	\[
	a:=v_i^2,\qquad b:=v_j^2,\qquad D:=\sum_{k\neq i,j}\lambda_k v_k^2.
	\]
	Then
	\[
	s_i-s_j
	=
	\E\!\left[
	\frac{\lambda_i a-\lambda_j b}{\lambda_i a+\lambda_j b+D}
	\right].
	\]
	Using exchangeability of the coordinates of \(v\), we may average this integrand
	with the one obtained by swapping \(a\) and \(b\), which gives
	\begin{equation}\label{eq:eigdiff}
	s_i-s_j
	=
	\frac12\E\!\left[
	\frac{\lambda_i a-\lambda_j b}{\lambda_i a+\lambda_j b+D}
	+
	\frac{\lambda_i b-\lambda_j a}{\lambda_i b+\lambda_j a+D}
	\right].
	\end{equation}
	A direct algebraic simplification yields
	\[
	\frac{\lambda_i a-\lambda_j b}{\lambda_i a+\lambda_j b+D}
	+
	\frac{\lambda_i b-\lambda_j a}{\lambda_i b+\lambda_j a+D}
	=
	(\lambda_i-\lambda_j)
	\frac{D(a+b)+2ab(\lambda_i+\lambda_j)}
	{(\lambda_i a+\lambda_j b+D)(\lambda_i b+\lambda_j a+D)}.
	\]
	The fraction on the right is nonnegative, and strictly positive almost surely.
	Hence
	\begin{equation}\label{eq:proof4}
	\mathrm{sign}(s_i-s_j)=\mathrm{sign}(\lambda_i-\lambda_j).
	\end{equation}
	
	Now suppose
	\[
	\mat(S)=\frac1d I_{pq}.
	\]
	Then all diagonal entries of $\mat(S)$ are equal, i.e.
	\[
	s_1=\cdots=s_{pq}.
	\]
	By~\eqref{eq:proof4}, this implies
	\[
	\lambda_1=\cdots=\lambda_{pq}.
	\]
	Since \(\tr(C)=\tr(\Lambda)=d\), the common value must be \(1\). Thus $
	\Lambda=I_{pq}$, and $C=I_{p,q}$.
	
	Conversely, if $C=I_{p,q}$, then $Y=Z$ is spherical, so $Y/\|Y\|_F$ is uniform on
	the unit sphere and therefore
	\[
	S=\E\left[\frac{Z \otimes Z}{\|Z\|_F^2}\right]=\frac1{pq} I_{p,q}.
	\]
	
	This proves that $S=I_{p,q}/pq$ if and only if $C=I_{p,q}$, which is if and only if $\Sigma=\Sigma_1 \ot \Sigma_2$ for some $\Sigma_1$ and $\Sigma_2$ by the proof of Theorem~\ref{thm:consistency}. This proves the equivalence.

    It now remains to prove the relevant hypothesis statement. To this end, note that by assumption (B2) the eigenvalues of $C$ are uniformly bounded from above and below, and
    \[
    \frac{1}{pq}\|C-I_{p,q}\|^2_F \geq \widetilde{\Delta}^2_0 > 0
    \]
    exactly the same as in the proof of Theorem~\ref{thm:consistency}.

    It remains to show that the separation is uniform in the growing dimension
\(d=pq\). Consider again formula \eqref{eq:eigdiff}.
Since \(D\ge m(1-a-b)\) and both denominators are bounded above by \(M\), we have
\[
|s_i-s_j|
\ge
\frac{|\lambda_i-\lambda_j|}{2}
\frac{m}{M^2}
\mathbb E[(a+b)(1-a-b)].
\]
Now \(a+b\sim \mathrm{Beta}(1,(d-2)/2)\) by \cite[Theorems 1.4 and 1.5]{FKN}, and so \cite[Theorems 1.3]{FKN}
\[
\mathbb E[(a+b)(1-a-b)]
=
\frac{2(d-2)}{d(d+2)}
\ge
\frac{2}{5d}
\]
for all \(d\ge 3\). Hence
\[
|s_i-s_j|
\ge
\frac{c_0}{d}|\lambda_i-\lambda_j|,
\qquad
c_0:=\frac{m}{5M^2}.
\]
Therefore,
\[
2d\left\|S(C)-\frac1dI_d\right\|_F^2
=
\sum_{i,j=1}^d(s_i-s_j)^2
\ge
\frac{c_0^2}{d^2}
\sum_{i,j=1}^d(\lambda_i-\lambda_j)^2
=
\frac{2c_0^2}{d}\|C-I_d\|_F^2.
\]
Thus
\[
d\left\|S(C)-\frac1dI_d\right\|_F^2
\ge
c_0^2\frac1d\|C-I_d\|_F^2.
\]
By the whitened separation obtained from Assumption (B5) in the proof of Theorem~\ref{thm:consistency}, we have
\(d^{-1}\|C-I_d\|_F^2\ge \widetilde\Delta_0^2\), and hence
\[
d\left\|S(C)-\frac1dI_d\right\|_F^2
\ge
c_0^2\widetilde\Delta_0^2
=:\wideparen\Delta_0^2>0.
\]

\subsection{Proof of Theorem~\ref{thm:angular_consistency}}

Let us briefly recall the notation. For $n=1,\ldots,N$, the data matrices $X_n \in \R^{p\times q}$ are zero mean with covariance $\Sigma \in \R^{p \times q \times p \times q}$, i.e.~$X_n = \Sigma^{1/2} Z_n$ with $Z_n$ having mean zero, variance one and uncorrelated entries. The matrix whitened data matrices are denoted $Y_n = (\mmle)^{-1/2} X_n$, and we also need the oracle-whitened version $\widetilde{Y}_n = (\Sigma_1 \ot \Sigma_2)^{-1/2} X_n$, which has the covariance $C$. The sphere-projected matrices $U_n= Y_n / \|Y_n\|_F$, $n=1,\ldots,N$, yield the empirical covariance $\widehat{S}$ from Definition~\ref{def:angular_test}. Finally, we will also need their oracle versions $\widetilde{U}_n=\widetilde{Y}_n/\|\widetilde{Y}_n\|_F$, whose population covariance is denoted by $S \in \R^{p \times q \times p \times q}$. The data matrices are denoted with capital letters, while their vectorizations will be denoted with lowercase letters.

Before we prove the theorem itself, we provide one more lemma.

\begin{lemma}\label{lem:rates}
    Under Assumptions (B1)-(B3) of Theorem~\ref{thm:angular_consistency}, we have
\[
	\|S\|_{2}=O\big((pq)^{-1}\big),
	\qquad
	\|S\|_F=O\big((pq)^{-1/2}\big),
	\qquad
	\|\widetilde S\|_F=O_p\big(N^{-1/2}\big).
\]
\end{lemma}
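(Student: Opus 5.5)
The plan is to reduce everything to bounds on the spectrum of the oracle-whitened covariance $C=(\Sigma_1\ot\Sigma_2)^{-1/2}\Sigma(\Sigma_1\ot\Sigma_2)^{-1/2}$. Here $\widetilde S:=\frac1N\sum_{n=1}^N\widetilde U_n\otimes\widetilde U_n$ denotes the empirical covariance of the oracle sphere-projected sample.

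\emph{Step 1: spectrum of $C$.} By (B2), $\lambda_{\min}(C)\ge \lambda_{\min}(\Sigma)/(\|\Sigma_1\|_2\|\Sigma_2\|_2)=:m>0$. Likewise, $\lambda_{\max}(C)\le \|\Sigma\|_2/(\lambda_{\min}(\Sigma_1)\lambda_{\min}(\Sigma_2))=:M<\infty$. Both constants are uniform in $N$ large enough.

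\emph{Step 2: the bound on $\|S\|_2$.} By (B3), $Z=RV$ with $V$ uniform on the unit sphere of $\R^{p\times q}$ and $R$ independent of $V$. As in the proof of Proposition~\ref{prop:relevant_hypothesis}, the radius cancels. Write $d=pq$. For any unit $A\in\R^{p\times q}$,
\[
\langle A,S(A)\rangle=\E\left[\frac{\langle C^{1/2}A,V\rangle^2}{\langle V,C V\rangle}\right]\le \frac1m\,\E\langle C^{1/2}A,V\rangle^2 .
\]
The inequality uses $\langle V,CV\rangle\ge m\|V\|_F^2=m$. Since $\E[V\otimes V]=d^{-1}I_{p,q}$, the last expectation equals $\|C^{1/2}A\|_F^2/d\le M/d$. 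Since $S$ is positive semidefinite, this gives $\|S\|_2\le M/(m\,pq)=O((pq)^{-1})$.

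\emph{Step 3: the bound on $\|S\|_F$.} This follows immediately from $\|S\|_F\le\sqrt{pq}\,\|S\|_2$, which gives $\|S\|_F=O((pq)^{-1/2})$.

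\emph{Step 4: the bound on $\|\widetilde S\|_F$.} I use the inner-product representation from the proof of Lemma~\ref{lem:angular_computation}:
\[
\|\widetilde S\|_F^2=\frac1{N^2}\sum_{n,n'}\langle\widetilde U_n,\widetilde U_{n'}\rangle^2 .
\]
The diagonal terms equal $1$ because $\|\widetilde U_n\|_F=1$. For $n\neq n'$, independence gives $\E\langle\widetilde U_n,\widetilde U_{n'}\rangle^2=\E\langle \widetilde U_n,S(\widetilde U_n)\rangle$. By Step 2 this is at most $\|S\|_2$, and one also recognizes it as exactly $\tr(S^2)$. Hence
\[
\E\|\widetilde S\|_F^2\le \frac1N+\|S\|_2=O\big(N^{-1}+(pq)^{-1}\big)=O(N^{-1}),
\]
using (B1), which gives $pq\asymp N$. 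Since $\|\widetilde S\|_F^2\ge0$, Markov's inequality yields $\|\widetilde S\|_F^2=O_p(N^{-1})$, i.e.\ $\|\widetilde S\|_F=O_p(N^{-1/2})$. No variance computation is needed for this step.

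The only delicate point is Step 1. It requires the two-sided eigenvalue bounds on $C$, and this is exactly where (B2)'s lower bound on $\lambda_{\min}(\Sigma)$, which (A2) did not have, enters. The remaining steps are elementary given the reduction to uniform $V$.
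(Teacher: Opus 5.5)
Your proposal is correct and follows essentially the same route as the paper: the radial cancellation $\widetilde U=C^{1/2}V/\langle V,CV\rangle^{1/2}$, the quadratic-form bound $\langle A,SA\rangle\le\langle A,CA\rangle/(m\,pq)$, and the expectation of $N^{-2}\sum_{n,n'}\langle\widetilde U_n,\widetilde U_{n'}\rangle^2$ followed by Markov's inequality. The differences are minor. You obtain $\|S\|_F$ from $\|S\|_F\le\sqrt{pq}\,\|S\|_2$ rather than from $\tr(S^2)\le\|S\|_2\tr(S)$ with $\tr(S)=1$. Your Step 1 also spells out the two-sided spectral bounds on $C$, which the paper uses only implicitly: it states the bounds for $\Sigma_1\ot\Sigma_2$ and then applies the same constants to $C$.
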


\begin{proof}
Firstly, note that by assumption~(B2), there exist constants \(0<m<M<\infty\) such that
\[
mI_{p,q}\preceq \Sigma_{1}\ot\Sigma_{2}\preceq M I_{p,q}
\]
for all $N$ large enough.

Secondly, we will use a re-normalization trick to re-express $\widetilde{U}_n$ as
\[
\widetilde{U}_n = \frac{(\Sigma_1 \ot \Sigma_2)^{-1/2} \Sigma^{1/2} Z_n}{\| (\Sigma_1 \ot \Sigma_2)^{-1/2} \Sigma^{1/2} Z_n\|_F} \frac{\|Z_n\|_F}{\|Z_n\|_F} =: \frac{C^{1/2} V_n}{\sqrt{\langle V_n, C V_n\rangle}}, 
\]
where $V_n$ is uniform on the Frobenius unit sphere in $\R^{p\times q}$. Therefore,
\[
S = \E\left[ \frac{C^{1/2}V \otimes C^{1/2}V}{\langle V, C V\rangle}\right],
\]
and we have for any unit-norm $X \in \R^{p \times q}$
\[
\begin{split}
\langle X, S X\rangle = \langle X \otimes X, S\rangle &\leq \frac{1}{m} \langle X \otimes X , (C \ot C)^{1/2} \E[V \otimes V] \rangle  \\
&\leq \frac{1}{m} \langle (C \ot C)^{1/2} (X \otimes X) ,  \frac{1}{pq} I_{p}\ot I_q \rangle\\
&= \frac{1}{mpq} \tr(C^{1/2}X \otimes C^{1/2}X) = \frac{1}{mpq} \langle C^{1/2}X, C^{1/2}X \rangle \\
&=\frac{1}{mpq} \langle X,CX \rangle \leq \frac{M}{mpq},
\end{split}
\]
where in the first equality we used that the trace of a rank one operator is the inner product as in $\tr(U \otimes V) = \langle U,V\rangle$.
Therefore, $\|S\|_{2} = \mathcal O\big( (pq)^{-1} \big)$, which is the first claim.

As for the second claim, since $\tr(S) = \E \|\widetilde{U}\|_F^2=1$, we obtain immediately that
\[
\|S\|_F^2=\tr(S^2)\le \|S\|_{2}\tr(S)=\|S\|_{2}=\mathcal O\big( (pq)^{-1} \big),
\]
and so it is $\|S\|_F = \mathcal O\big( (pq)^{-1/2} \big)$.

For the final claim, note that
\[
\| \widetilde S\|_F^2 = \tr(\widetilde S^2) = \frac{1}{N^2} \sum_{n,m=1}^N \langle \widetilde{U}_n,\widetilde{U}_m \rangle^2.
\]
Taking expectation and utilizing independence for $n \neq m$, we obtain
\[
\E \tr(\widetilde{S}^2) = \frac{N(N-1)}{N^2}\tr(S^2) + \frac{1}{N}.
\]
But by the previous bound and our high-dimensional scaling $\tr(S^2) = \mathcal{O}(N^{-1})$, and so
\[
\E \|\widetilde{S}\|_F^2 = \mathcal{O}(N^{-1}).
\]
Finally, boundedness of expectation implies boundedness in probability for non-negative random variables such as $\|\widetilde{S}\|_F^2$ by Markov's inequality, yielding the desired rate:
\[
\|\widetilde{S}\|_F = \mathcal{O}(N^{-1/2}).
\]
\end{proof}

We are now ready to show Theorem~\ref{thm:angular_consistency} following a similar path as in the proof of Theorem~\ref{thm:consistency}. Firstly, Proposition~\ref{prop:relevant_hypothesis} yields that dense departure from separability guarantees dense departure from sphericity on the matrix-whitened level, i.e.
\begin{equation}\label{eq:departure}
\wideparen{\Delta}_N^2 := pq \| S - \frac{1}{pq} I_{pq} \|^2_F \geq \wideparen{\Delta}_0^2 > 0
\end{equation}
for $N$ large enough.

Consider the centered statistic, introduce the oracle and treat the resulting two terms separately:
\[
	T_N^{(A)}-\wideparen{\Delta}_N^2
	=
	pq\,\tr(\widehat S^2-S^2)
	=
	pq\,\tr(\widehat S^2-\widetilde S^2)
	+
	pq\,\tr(\widetilde S^2-S^2)
	=: (I)+(II).
\]
If we show that $(I)$ and $(II)$ are both $o_p(1)$ up to an additive constant, we will have the Monte Carlo reference distribution concentrated around that additive constant, while the observed statistic under the alternative will be separated from this null center by at least $\wideparen{\Delta}_0^2$, yielding consistency of the test.

Let us begin by showing that $(I)=o_p(1)$. By Assumption~(B2), there exist constants \(0<m<M<\infty\) such that
\begin{equation}\label{eq:spectral_control}
mI_{p,q}\preceq \Sigma_{1}\ot\Sigma_{2}\preceq M I_{p,q}
\end{equation}
for all $N$ large enough. Since, by Assumption~(B4),
\[
\|\widehat\Sigma_{1}\ot\widehat\Sigma_{2}-\Sigma_{1}\ot\Sigma_{2}\|_{2}=o_p(1),
\]
Weyl's inequality implies that, with probability tending to one,
\[
\frac m2 I_{d_N}\preceq \widehat\Sigma_{1}\ot\widehat\Sigma_{2}\preceq 2M I_{d_N}.
\]
Hence both $\widehat\Sigma_{1}\ot\widehat\Sigma_{2}$ and $\Sigma_{1}\ot\Sigma_{2}$ belong, eventually with high probability, to the same compact subset of the positive-definite cone. On this set, the square-root inverse map
\[
f:\mathcal S_{++}^{d_N}\to \mathcal S_{++}^{d_N},
\qquad
f(X)=X^{-1/2},
\]
is Lipschitz in operator norm. Therefore
\[
\|\widehat\Sigma_{1}^{-1/2}\ot\widehat\Sigma_{2}^{-1/2}-\Sigma_{1}^{-1/2}\ot\Sigma_{2}^{-1/2}\|_{2}=o_p(1),
\]
Therefore, denoting $G_N = \mat\big((\mmle)^{-1/2} (\Sigma_{1}\ot\Sigma_{2})^{1/2}\big)$ and $E_N=G_N - I_{pq}$, we have
\[
\|E_N\|_{2} = 
o_p(1).
\]

Now, write
\[
y_n=G_N\widetilde y_n,
\qquad
u_n=\frac{G_N\widetilde u_n}{\|G_N\widetilde u_n\|},
\]
and define
\[
\widetilde P_n:=\widetilde u_n\widetilde u_n^\top,
\qquad
P_n:=u_n u_n^\top.
\]
We can write
\[
P_n = u_nu_n^\top
= \frac{G_N\widetilde u_n\widetilde u_n^\top G_N^\top}{\|G_N\widetilde u_n\|^2}
= \frac{G_N\widetilde P_nG_N^\top}{\|G_N\widetilde u_n\|^2},
\]
where the denominator can be expressed as
\[
\|G_N\widetilde u_n\|^2
= \widetilde u_n^\top(I+E_N)^\top(I+E_N)\widetilde u_n
= 1+\widetilde u_n^\top\!\bigl(E_N+E_N^\top+E_N^\top E_N\bigr)\widetilde u_n
= 1+\alpha_n
\]
and the numerator expands as
\[
G_N\widetilde P_nG_N^\top
= \widetilde P_n + E_N\widetilde P_n + \widetilde P_nE_N^\top + E_N\widetilde P_nE_N^\top.
\]
Subtracting $\widetilde P_n = (1+\alpha_n)\widetilde P_n/(1+\alpha_n)$ yields
\[
P_n-\widetilde P_n
=
\frac{
	E_N\widetilde P_n+\widetilde P_nE_N^\top+E_N\widetilde P_nE_N^\top
	-\alpha_{n}\widetilde P_n
}{1+\alpha_{n}},
\]
where
\[
\alpha_{n}
=
\widetilde u_n^\top(E_N+E_N^\top+E_N^\top E_N)\widetilde u_n.
\]

Since \(\|E_N\|_{2}=o_p(1)\) and $\widetilde u_n$ have norms one, we have
\[
\sup_{1\le n\le N}|\alpha_{n}|=o_p(1),
\]
and thus
\[
\sup_{1\le n\le N}\frac1{|1+\alpha_{n}|}=\mathcal O_p(1).
\]
Averaging the projector identity over $N$, we obtain
\[
\|\widehat S-\widetilde S\|_F
\le
\mathcal O_p(1) \Bigg(
\|E_N\widetilde S\|_F
+
\|\widetilde S E_N^\top\|_F
+
\|E_N\widetilde S E_N^\top\|_F
+
\left\|
\frac1N\sum_{n=1}^N \alpha_{n}\widetilde P_n
\right\|_F
\Bigg).
\]

By Lemma~\ref{lem:rates}, $\|\widetilde S\|_F=\mathcal O_p(d_N^{-1/2})$, and thus
\[
\|E_N\widetilde S \|_F
\le
\|E_N\|_{2}\|\widetilde S \|_F
=
o_p(1)\,\mathcal O_p(d_N^{-1/2})
=
o_p(d_N^{-1/2}),
\]
and similarly
\[
\|\widetilde S E_N^\top\|_F=o_p\big((pq)^{-1/2}\big)
\qquad \& \qquad
\|E_N\widetilde S E_N^\top\|_F
\le
\|E_N\|_{2}^2\|\widetilde S \|_F
=
o_p\big((pq)^{-1/2}\big).
\]
Furthermore, denoting for short
\[
B_N:=E_N+E_N^\top+E_N^\top E_N,
\]
it is $\alpha_{n}=\widetilde u_n^\top B_N\widetilde u_n$ and thus
\[
\left\|
\frac1N\sum_{n=1}^N \alpha_{n}\widetilde P_n
\right\|_F
\le
\|B_N\|_{2}\,\|\widetilde S\|_F
=
o_p(1)\,\mathcal O_p\big((pq)^{-1/2}\big)
=
o_p\big((pq)^{-1/2}\big).
\]
Overall, we obtain $\|\widehat S-\widetilde S\|_F=o_p\big((pq)^{-1/2}\big)$.

Finally,
\[
|(I)|
=
pq\bigl|\tr((\widehat S -\widetilde S )(\widehat S +\widetilde S ))\bigr|
\le
pq\|\widehat S -\widetilde S \|_F
\bigl(\|\widehat S \|_F+\|\widetilde S \|_F\bigr).
\]
Since
\[
\|\widehat S \|_F
\le
\|\widehat S -\widetilde S \|_F+\|\widetilde S \|_F
=
O_p\big((pq)^{-1/2}\big),
\]
we conclude that $(I)=o_p(1)$.

Now let us treat the remaining term and show that $(II)=pq/N + o_p(1)$. Denoting $\Xi := \widetilde S - S$, we have
\[
\tr(\widetilde S^2)-\tr(S^2)
=
2\tr(S \Xi)+\tr(\Xi^2),
\]
and so
\begin{equation}\label{eq:proof1}
(II)
=
2pq\tr(S \Xi)+pq\|\Xi\|_F^2.
\end{equation}

Let us treat the linear term in equation~\eqref{eq:proof1} first. We can write
\[
2pq\tr(S \Xi) = \frac{2pq}{N} \sum_{n=1}^N \left( \langle \widetilde{U}_n, S \widetilde{U}_n\rangle - \tr(S^2)\right).
\]
Since $\E \langle \widetilde{U}_n, S \widetilde{U}_n\rangle = \langle \E[\widetilde{U}_n \otimes \widetilde{U}_n], S \rangle=\langle S, S \rangle$, the summands are centered and i.i.d. Moreover, by Lemma~\ref{lem:rates}, we have
\[
0\leq \langle \widetilde{U}_n, S \widetilde{U}_n\rangle \leq \|S\|_{2}=\mathcal O\big((pq)^{-1}\big)
\quad \& \quad
\tr(S^2)=\mathcal O\big((pq)^{-1}\big),
\]
so the variance of the summands is $\mathcal{O}\big( (pq)^{-2} \big)$ and thus
\[
\mathrm{var}\left(
\frac{2pq}{N} \sum_{n=1}^N \left( \langle \widetilde{U}_n, S \widetilde{U}_n\rangle - \tr(S^2)\right)
\right)
=
\mathcal O(N^{-1}),
\]
which implies by Chebyshev's inequality that the linear terms converges to zero:
\[
	2pq\,\tr(S\Xi)=o_p(1).
\]

As for the quadratic term, denote
\[
\xi_n := \widetilde U_n \otimes \widetilde U_n - S, \quad \text{so that} \quad \Xi = \frac{1}{N}\sum_{n=1}^N \xi_n
\]
and
\begin{equation}\label{eq:proof2}
pq\|\Xi\|_F^2
=
\frac{pq}{N^2}\sum_{n=1}^N \|\xi_n\|_F^2
+
\frac{2pq}{N^2}\sum_{1\le n<m\le N}\langle \xi_n, \xi_m \rangle.
\end{equation}

For the diagonal term in~\eqref{eq:proof2}, note that
\[
\| \xi_n \|_F^2 = \| \widetilde U_n \otimes \widetilde U_n \|_F^2 - 2 \langle \widetilde U_n, S \widetilde U_n \rangle + \| S \|_F^2.
\]
Since $\|\widetilde U_n\|_F=1$, the first summand is equal to one. Secondly $\E \langle \widetilde U_n, S \widetilde U_n \rangle = \|S\|_F^2$ as before, and so
\[
\E\|\xi_n\|_F^2=1-\|S\|_F^2.
\]
At the same time, $\| \xi_n \|_F \leq 1 + \|S\|_F$ and so by the law of large numbers we have
\[
\frac{pq}{N^2}\sum_{n=1}^N \|\xi_n\|_F^2
	=
	\frac{pq}{N}\bigl(1-\|S\|_F^2\bigr)+o_p(1)
	=
	\frac{pq}{N}+o_p(1),
\]
because $\|S\|_F^2 = o_p(1)$ by Lemma~\ref{lem:rates}.

For the off-diagonal term in~\eqref{eq:proof2}, define the kernel
\[
k(U,V) = \langle U \otimes U - S, V \otimes V - S \rangle.
\]
Then
\[
\frac{2pq}{N^2}\sum_{1\le n<m\le N}\tr(\xi_n \xi_m)
=
\frac{2pq}{N^2}\sum_{1\le n<m\le N} k(\widetilde U_n,\widetilde U_m).
\]
The kernel is degenerate since
\[
\E[k(U,V)|V] = \langle \E[U \otimes U] - S, V \otimes V - S \rangle = 0.
\]
Expand the kernel as in
\[
k(U,V) = \langle U,V\rangle^2 - \langle U, S U \rangle - \langle V, S V \rangle + \|S\|_F^2,
\]
and bound its square by
\begin{equation}\label{eq:proof3}
k(U,V)^2 \leq 4 \big[ \langle U,V\rangle^4 + \langle U, S U \rangle^2 + \langle V, S V \rangle^2 + \|S\|_F^4 \big].
\end{equation}
If we show that $\E[k(U,V)^2] = \mathcal{O}(N^{-2})$, we will have the off-diagonal term in~\eqref{eq:proof2} being $o_p(1)$, leading to $(II)=pq/N+o_p(1)$, which will complete the proof.

Consider thus the four summands in~\eqref{eq:proof3}. The last one is $\mathcal{O}(N^{-2})$ by Lemma~\ref{lem:rates}. For the second one,
\[
\langle U, S U \rangle^2 \leq \|S\|_{2}^2 \|U\|_F^4 = \|S\|_{2}^2
\]
which is again $\mathcal{O}(N^{-2})$ by Lemma~\ref{lem:rates}, and similarly for the third summand. It thus remains to bound $\E[\langle U,V\rangle^4]$, where $U$ and $V$ are i.i.d., distributed as $\widetilde{U}_1$.

To do that, we utilize again the renormalization trick from the proof of Lemma~\ref{lem:rates}. Specifically, we can write
\[
U = \frac{C^{1/2} A}{ \langle A, C A \rangle^{1/2}} \quad \& \quad V = \frac{C^{1/2} B}{ \langle B, C B \rangle^{1/2}}
\]
where $A$ and $B$ are independent random variables distributed uniformly on the sphere, and $C$ is the matrix-whitened version of the covariance (i.e.~whitening can be naturally done on the sphere as well).

Now, due to~\eqref{eq:spectral_control}, we have
\[
\langle U , V \rangle^4 \leq m^{-4} \langle CA,B\rangle^4.
\]
Denoting $a=\vec(CA)$ and $b=\vec(B)$, we seek to bound $(a^\top b)^4$. Take an orthogonal matrix $Q$ such that $Q^\top a = \|a\| e_1$, where $e_1$ denotes the first standard basis vector. Then
\[
a^\top b = \|a\|e_1^\top Q^\top b =: \|a\|e_1^\top \widetilde{b} = \|a\|\widetilde{b}_1,
\]
where $\widetilde{b}$ is still uniform on the sphere. Hence
\[
\E[(a^\top b)^4|a] = \|a\|^4\E \widetilde b_1^4 = \frac{3 \|a\|^4}{pq(pq+2)},
\]
where the last equality is due to Theorem 1.3 of \citet{FKN}. Moreover
\[
\|a\| = \| CA \| \leq \|C\|_{2} \|A\|_F \leq M,
\]
so $\langle U , V \rangle^4 = \mathcal O\big((pq)^{-2}\big)$, which is $\mathcal O(N^{-2})$ in the high-dimensional scaling of Assumption~(B1). This completes the proof of Theorem~\ref{thm:angular_consistency}.

\section{Oracle Angular vs.~Elliptical Statistics}
\label{sec:oracle-angular-robustness}

The theoretical results about the angular test place it on the same footing as
the elliptical test from Section~3, at least from the perspective of dense high-dimensional
alternatives. The main reason for introducing the angular statistic, however, is different:
it is meant to reduce the sensitivity of the test to the radial component of an elliptical
distribution. In this subsection we isolate this mechanism at the oracle-whitened level.

Below, we follow the notation from the proofs above.

Under the hypothesis of separability \(H_0\), consider the oracle-whitened observations
\[
        \widetilde Y_n
        =
        (\Sigma_1\widetilde\otimes\Sigma_2)^{-1/2}X_n .
\]
Under the elliptical model, \(\widetilde Y_n\) has covariance proportional to the identity
and can be written as a radial variable times a spherical direction. The angular oracle
observations are then
\[
        \widetilde U_n
        =
        \frac{\widetilde Y_n}{\|\widetilde Y_n\|_F},
        \qquad n=1,\ldots,N,
\]
whenever \(\widetilde Y_n\neq 0\). Define the oracle angular statistic
\[
        \widetilde T_N^{(A)}
        =
        \frac{pq}{N^2}
        \sum_{n=1}^N\sum_{m=1}^N
        \bigl\langle \widetilde U_n,\widetilde U_m\bigr\rangle_F^2
        -1 .
\]
The following proposition showcases radial invariance of the oracle angular statistic.

Suppose that
\[
        Z_n = R_n V_n ,
        \qquad n=1,\ldots,N,
\]
where \(R_n\geq 0\), the directions \(V_n\) are i.i.d. uniform on the Frobenius unit sphere,
and \(R_n\) is independent of \(V_n\). Under the oracle whitening, we also have
\(\widetilde Y_n=R_n V_n\), that is, the whitened observations have the same directions.
Radial normalization, which is part of the calculation of the angular statistic, then gives
\[
        \widetilde U_n
        =
        \frac{\widetilde Y_n}{\|\widetilde Y_n\|_F}
        =
        \frac{R_n V_n}{R_n\|V_n\|_F}
        =
        V_n,
\]
whenever \(R_n>0\). Thus the radial variables are removed exactly and the distribution of
\(\widetilde T_N^{(A)}\) depends only on the uniform spherical law of the variables
\(\{V_n\}_{n=1}^N\).

This can be contrasted with the elliptical statistic. At the oracle-whitened level, the latter is
\[
        \widetilde T_N
        =
        \frac{1}{N^2pq}
        \sum_{n=1}^N\sum_{m=1}^N
        \langle \widetilde Y_n,\widetilde Y_m\rangle_F^2
        -1 .
\]
Unlike \(\widetilde T_N^{(A)}\), this statistic retains radial information and its null
distribution therefore depends on the radial law.

\begin{proposition}
\label{prop:oracle-elliptical-radial-sensitivity}
Assume \(H_0\) and suppose that the oracle-whitened observations satisfy
\[
        \widetilde Y_n = R_n V_n,
        \qquad n=1,\ldots,N,
\]
where \(V_n\) are i.i.d.~uniform on \(\mathbb S^{pq-1}\), \(R_n\) is independent of \(V_n\), and
the normalization is chosen so that
\[
        \mathbb E\; \widetilde Y_n\otimes \widetilde Y_n = I_{p,q}.
\]
Equivalently, \(\mathbb E R_n^2=pq\). If \(\mathbb E R_n^4<\infty\), then
\[
        \mathbb E \widetilde T_N
        =
        \frac{N-1}{N}
        +
        \frac{1}{N}\,
        \frac{\mathbb E R_n^4}{pq}
        -1 .
\]
In particular, if
\[
        \kappa_{F_0}
        :=
        \frac{\mathbb E R_n^4}{pq(pq+2)},
\]
then
\[
        \mathbb E \widetilde T_N
        =
        -\frac{1}{N}
        +
        \frac{pq+2}{N}\,\kappa_{F_0}.
\]
\end{proposition}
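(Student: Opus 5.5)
The plan is to compute $\mathbb E\widetilde T_N$ directly from its inner-product form by splitting the double sum into diagonal and off-diagonal terms. Write
\[
\mathbb E \widetilde T_N = \frac{1}{N^2pq}\Big[\sum_{n\neq m}\mathbb E\langle \widetilde Y_n,\widetilde Y_m\rangle_F^2 + \sum_{n=1}^N \mathbb E\|\widetilde Y_n\|_F^4\Big] - 1 .
\]
Each group is evaluated separately, using only the independence of the observations, the independence of $R_n$ and $V_n$, and $\|V_n\|_F=1$.

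For the off-diagonal terms, $n\neq m$, I condition on $\widetilde Y_n$, as in the mean calculation in the proof of Theorem~\ref{thm:consistency}. This gives
\[
\mathbb E[\langle \widetilde Y_n,\widetilde Y_m\rangle^2\mid \widetilde Y_n] = \langle \widetilde Y_n, \mathbb E[\widetilde Y_m\otimes\widetilde Y_m]\,\widetilde Y_n\rangle = \|\widetilde Y_n\|_F^2 ,
\]
because the covariance is assumed to be $I_{p,q}$. Taking expectations yields $\mathbb E\|\widetilde Y_n\|_F^2=\mathbb E R_n^2=pq$. This identity also confirms the claimed equivalence: $\mathbb E\,\widetilde Y\otimes\widetilde Y = \mathbb E R^2\,\mathbb E[V\otimes V]=\mathbb E R^2\, I_{p,q}/pq$, and $\mathbb E[V\otimes V]=I_{p,q}/pq$ holds by rotation invariance together with trace one. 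The $N(N-1)$ off-diagonal pairs therefore contribute $N(N-1)pq/(N^2pq)=(N-1)/N$.

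For the diagonal terms, $\|\widetilde Y_n\|_F^4=R_n^4\|V_n\|_F^4=R_n^4$, so they contribute $N\,\mathbb E R_n^4/(N^2pq)=\mathbb E R_n^4/(Npq)$. This term is finite under the assumption $\mathbb E R_n^4<\infty$. Adding the two contributions and subtracting $1$ gives the first displayed formula.

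The second form is pure algebra. Substituting $\mathbb E R_n^4=pq(pq+2)\kappa_{F_0}$ gives $\frac{N-1}{N}-1=-\frac1N$ plus $\frac{pq+2}{N}\kappa_{F_0}$. As a sanity check, in the Gaussian case $R^2\sim\chi^2_{pq}$, so $\kappa=1$; this is consistent with the earlier $pq/N$ centering and makes the comment about radial displacement concrete.

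There is no real obstacle. The only care needed is to justify the conditional expectation step, which requires the second moments of $\widetilde Y_m$ to be finite. That follows from $\mathbb E R^4<\infty$, and finiteness of all the terms follows by Cauchy–Schwarz.
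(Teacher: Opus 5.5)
Your proposal is correct and follows the paper's proof essentially verbatim. Both split the double sum into the $N(N-1)$ off-diagonal terms, each with expectation $pq$, and the $N$ diagonal terms equal to $R_n^4$, then substitute $\mathbb E R_n^4=pq(pq+2)\kappa_{F_0}$. The only cosmetic difference is that the paper evaluates the off-diagonal expectation directly as $\operatorname{tr}\{\mathbb E(\widetilde Y_n\otimes\widetilde Y_n)\,\mathbb E(\widetilde Y_m\otimes\widetilde Y_m)\}=\operatorname{tr}(I_{p,q})$, whereas you first condition on $\widetilde Y_n$.
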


\begin{proof}
For \(n\neq m\), independence and the covariance normalization give
\[
        \mathbb E\langle \widetilde Y_n,\widetilde Y_m\rangle_F^2
        =
        \operatorname{tr}
        \left\{
        \mathbb E(\widetilde Y_n\otimes \widetilde Y_n)
        \,
        \mathbb E(\widetilde Y_m\otimes \widetilde Y_m)
        \right\}
        =
        \operatorname{tr}(I_{p,q})
        =
        pq .
\]
For \(n=m\),
\[
        \langle \widetilde Y_n,\widetilde Y_n\rangle_F^2
        =
        \|\widetilde Y_n\|_F^4
        =
        R_n^4 .
\]
Therefore
\[
\begin{aligned}
        \mathbb E \widetilde T_N
        &=
        \frac{1}{N^2pq}
        \left\{
        N\,\mathbb E R_n^4
        +
        N(N-1)pq
        \right\}
        -1  \\
        &=
        \frac{N-1}{N}
        +
        \frac{1}{N}\frac{\mathbb E R_n^4}{pq}
        -1 .
\end{aligned}
\]
The displayed expression in terms of \(\kappa_{F_0}\) follows from
\[
        \mathbb E R_n^4=pq(pq+2)\kappa_{F_0}.
\]
\end{proof}

For the Gaussian distribution, \(R_n^2\sim\chi^2_{pq}\), so
\[
        \mathbb E R_n^4=pq(pq+2),
\]
and hence \(\kappa_{F_0}=1\). For heavier-tailed elliptical distributions,
\(\kappa_{F_0}\) is larger, and the oracle elliptical null distribution is shifted relative
to the Gaussian reference distribution.

This showcases the oracle mechanism behind the angular test. Radial normalization removes
the elliptical radial component before the statistic is formed, while the non-angular statistic
retains that component even after exact oracle whitening. Thus Gaussian calibration is
intrinsically fragile for the non-angular statistic under heavy-tailed elliptical laws.

\section{Proportionality of Covariance Maps}
\label{sec:covariance_map}

The MMLE defined in Section~\ref{sec:background} provides a single particular procedure of producing a separable covariance. On the population level, the MMLE solves equations given in \eqref{eq:pseudo-parameter}, cf.~\citet{dutilleul1999mle}. On the other hand, the best separable approximation is another procedure of producing a separable covariance, defined on the population level as
\[
(\widehat \Sigma_1, \widehat \Sigma_2) := \argmin_{\Sigma_1,\Sigma_2} \| \Sigma - \Sigma_1 \ot \Sigma_2 \|_F^2,
\]
cf.~\citet{van1993approximation}. Let us call any such procedure on the population level a \emph{separable covariance map}. For a separable covariance map $\Omega$ and a specific distribution $F_X$ of the random variable $X$, $\Omega(F_X) = \Omega_1(F_X) \ot \Omega_2(F_X)$ is a separable population target.

The following lemma states that, under separability, the specific choice of a separable covariance map is immaterial, as long as the map produces a separable and matrix affine equivariant population target. This is a straightforward generalization of a similar result in multivariate statistics \citep[e.g.][]{bilodeau1999theory}.

\begin{lemma}\label{lem:proportionality of the covariances under ellipticity}
    Let $X$ follow matrix elliptical distribution with a separable covariance $\Sigma_1\ot\Sigma_2$,  and let
    $\Omega(F_X)$ be any matrix affine equivariant separable covariance maps. Then, 
\[
    \Omega(F_X) \,\propto\, \Sigma_1\ot\Sigma_2.
\]
\end{lemma}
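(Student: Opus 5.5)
The plan is to reduce to the spherical (whitened) case via equivariance, and then use orthogonal invariance to force the whitened target to be a multiple of the identity. Write $X = (\Sigma_1\ot\Sigma_2)^{1/2} Z$, where $Z$ has a spherical matrix distribution, i.e.\ $Z \sim (U\ot V)Z$ for all orthogonal $U\in\R^{p\times p}$, $V\in\R^{q\times q}$, and $\cov(Z)\propto I_{p,q}$. Here we use that $(\Sigma_1\ot\Sigma_2)^{1/2}=\Sigma_1^{1/2}\ot\Sigma_2^{1/2}$ is itself a separable invertible transformation $A\ot B$ with $A=\Sigma_1^{1/2}$ and $B=\Sigma_2^{1/2}$. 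The population version of matrix affine equivariance (Definition~\ref{def:equivariance}, with $F_X$ replacing the sample) gives
\[
\Omega(F_X) = (\Sigma_1^{1/2}\ot\Sigma_2^{1/2})\,\Omega(F_Z)\,(\Sigma_1^{1/2}\ot\Sigma_2^{1/2})^\top .
\]
It therefore suffices to show that $\Omega(F_Z)\propto I_{p,q}$.

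For this, write $\Omega(F_Z)=\Omega_1\ot\Omega_2$. Since $(U\ot V)Z\sim Z$, we have $F_{(U\ot V)Z}=F_Z$. Equivariance then yields
\[
\Omega_1\ot\Omega_2=(U\Omega_1U^\top)\ot(V\Omega_2V^\top)
\]
for all orthogonal $U$ and $V$. The next step is to split the Kronecker identity into factors. Equality of two Kronecker products $A_1\ot A_2 = B_1\ot B_2$ with nonzero factors implies $B_1=cA_1$ and $B_2=c^{-1}A_2$ for some scalar $c\neq 0$; this follows by comparing the rank-one rearrangement, or entrywise $(A_1)_{ik}(A_2)_{jl}=(B_1)_{ik}(B_2)_{jl}$. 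Hence $U\Omega_1U^\top = c(U)\,\Omega_1$. Taking traces, and using that $\Omega_1$ is positive definite so that $\tr\Omega_1>0$, forces $c(U)=1$. Thus $\Omega_1$ commutes with every orthogonal matrix, and by Schur's lemma, or by choosing $U$ to be reflections and coordinate permutations, $\Omega_1=a I_p$. Similarly $\Omega_2=bI_q$, so $\Omega(F_Z)=ab\,I_{p,q}$. Substituting back gives $\Omega(F_X)=ab\,\Sigma_1\ot\Sigma_2$.

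The main obstacle is bookkeeping rather than depth. First, one must make sure the population-level equivariance is applied to the pair $(\Omega_1,\Omega_2)$ only up to the scalar indeterminacy; for this reason the argument is phrased entirely in terms of the product $\Omega_1\ot\Omega_2$ and the Kronecker factor-splitting step. Second, $\Omega$ must be well defined, i.e.\ single-valued, on $F_Z$, which I take as part of the hypothesis that $\Omega$ is a separable covariance map. Finally, the commuting-with-all-orthogonal-matrices step should be made explicit: sign-flip matrices $\mathrm{diag}(\pm1)$ kill the off-diagonal entries of $\Omega_1$, and permutation matrices equalize its diagonal entries.
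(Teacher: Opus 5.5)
Your proof is correct and follows the same route as the paper: you reduce to the spherical $Z$ by equivariance under $\Sigma_1^{1/2}\ot\Sigma_2^{1/2}$, then use the invariance $(U\ot V)Z\sim Z$ to force $\Omega(F_Z)\propto I_{p,q}$. Two of your steps fill in points the paper leaves implicit: working with the product $\Omega_1\ot\Omega_2$ and splitting factors via the trace argument (the paper applies equivariance factor-wise despite the scale indeterminacy), and using sign flips and permutations to show that a matrix commuting with all orthogonal matrices is a scalar multiple of the identity.
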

\begin{proof}[Proof of Lemma~\ref{lem:proportionality of the covariances under ellipticity}]
    The ellipticity of $X\in\R^{p\times q}$ implies that it admits a stochastic representation $X=\mu+(\Sigma_1\ot\Sigma_2)^{1/2}Z$, where $Z$ is matrix spherical. Take now any $U\in\R^{p\times p}$ and $V\in\R^{q\times q}$ orthogonal. Sphericity of $Z$ implies that $UZV^\top\sim Z$ giving further that 
    $$
    \Omega_1(F_Z)=\Omega_1(F_{UZV^\top})=U\Omega_1(F_Z)U^\top,\qquad \Omega_2(F_Z)=\Omega_2(F_{UZV^\top})=V\Omega_2(F_Z)V^\top,
    $$
    where the second pair of equalities follows from the affine equivariance of $\Omega_1\ot\,\Omega_2$. Thus, $(\Omega_1\ot\Omega_2)(F_Z)\,\propto\, I_{p,q}$. Affine equivariance further gives
    $$
    (\Omega_1\ot \Omega_2)(F_X) = (\Sigma_1\ot\Sigma_2)^{1/2}(\Omega_1\ot \Omega_2)(F_Z)(\Sigma_1\ot\Sigma_2)^{1/2}\,\propto\, \Sigma_1\ot\Sigma_2.
    $$
\end{proof}

As discussed in the main paper, we use the matrix normal MLE~\eqref{eq:mmle} as our whitening matrix choice throughout. For a random sample of sufficiently large size from an absolutely continuous distribution, it is (up to the scale ambiguity) uniquely defined, positive definite, and matrix affine equivariant. Of course, it is not the only such map. For example, using the same techniques as in the proof of Proposition~\ref{prop:MMLE}, one can show that the MLE under the matrix $t$-distribution (see, e.g., \cite{gupta1994new}) is another affine equivariant separable matrix-valued map. However, by Lemma~\ref{lem:proportionality of the covariances under ellipticity}, any two such choices are, on the population level and under $H_0$, mutually proportional and proportional to the separable covariance itself. The choice of whitening matrix is therefore immaterial at the population level, and we adopt the matrix normal MLE for its practical simplicity: it is consistent for the separable covariance up to a scalar factor under the entire elliptical family, sharp analysis of the rates of convergence was developed recently \citep{franks2026near}, and it is easy to compute via the flip-flop algorithm. The usual caveat applies: its finite-sample efficiency is optimal at the normal model and may degrade, relative to other affine equivariant covariance estimators, under heavier-tailed alternatives.

\end{document}